\newcommand{\mods}{\mathsf{models}}
\newcommand{\vars}{\mathsf{vars}}
\newcommand{\facts}{\mathsf{facts}}
\newcommand{\lin}{\mathsf{lin}}
\newcommand{\Const}{\mathsf{Const}}
\newcommand{\maxagg}{\mathrm{MAX}}
\newcommand{\minagg}{\mathrm{MIN}}
\newcommand{\sumagg}{\mathrm{SUM}}
\newcommand{\countagg}{\mathrm{COUNT}}
\newcommand{\groupbyagg}{\mathrm{GROUP\ BY}}
\newcommand{\posbool}{\mathsf{PosBool}}
\newcommand{\bool}{\mathsf{Bool}}
\newcommand{\banz}{\mathsf{Banzhaf}}
\newcommand{\shap}{\mathsf{Shapley}}
\newcommand{\dislift}{\mathsf{lift}\text{-}\mathsf{or}}
\newcommand{\conlift}{\mathsf{lift}\text{-}\mathsf{and}}
\newcommand{\lift}{\mathsf{Lift}}
\newcommand{\liftedcompile}{\mathsf{LiftedCompile}}
\newcommand{\inline}{\mathsf{inline}}
\newcommand{\compile}{\_\mathsf{Compile}}
\newcommand{\gradientBanzhaf}{\mathsf{GradientBanzhaf}}
\newcommand{\pv}{\mathsf{PV}}
\definecolor{goodgreen}{rgb}{0.1, 0.5, 0.1}
\newcommand{\STAB}{\makebox[1ex][r]{}}%
\newcommand{\TAB}{\makebox[2.5ex][r]{}}%
\newcommand{\SWITCH}{\textbf{switch}\xspace}%
\newcommand{\CASE}{\textbf{case}\xspace}%
\newcommand{\DEFAULT}{\textbf{default}\xspace}%
\algrenewcommand\algorithmicrequire{\textbf{Input:}}
\algrenewcommand\algorithmicensure{\textbf{Output:}}
\newcommand{\adaban}{\textsc{AdaBan}\xspace}
\newcommand{\exaban}{\textsc{ExaBan}\xspace}
\newcommand{\lexaban}{\textsc{LExaBan}\xspace}
\newcommand{\exashap}{\textsc{ExaShap}\xspace}
\newcommand{\lexashap}{\textsc{LExaShap}\xspace}
\newcommand{\calB}{\mathcal B}
\theoremstyle{definition}
\newtheorem{theorem}{Theorem}[section]
\newtheorem{lemma}[theorem]{Lemma}
\newtheorem{proposition}[theorem]{Proposition}
\newtheorem{corollary}[theorem]{Corollary}         
\newtheorem{definition}[theorem]{Definition}
\newtheorem{example}[theorem]{Example}
\newtheorem{remark}[theorem]{Remark}
\newcommand{\omer}[1]{\todo[inline,color=cyan]{\textsf{#1} \hfill \textsc{--Omer.}}}
\definecolor{verylightgray}{gray}{0.90}
\newcommand{\ahmet}[1]{\todo[inline,color=verylightgray]{\textsf{#1} \hfill \textsc{--Ahmet.}}}
\newcommand{\daniel}[1]{\todo[inline,color=orange]{\textsf{#1} \hfill \textsc{--Daniel.}}}
\newcommand{\nop}[1]{}
\newcommand{\tensorset}{BNP\xspace}
\newcommand{\boolnumpairs}{Boolean$\otimes$Number Pairs\xspace}
\newcommand{\boolnumpair}{Boolean$\otimes$Number Pair\xspace}
\begin{document}
\title{Advancing Fact Attribution for Query Answering: Aggregate Queries and Novel Algorithms}

\pagestyle{empty}        
\begin{titlepage}

\vfill
\end{titlepage}

\clearpage
\pagestyle{fancy}

\author{Omer Abramovich}
\affiliation{%
  \institution{Tel Aviv University}
  \city{Tel Aviv}
  \country{Israel}
}
\author{Daniel Deutch}
\affiliation{%
  \institution{Tel Aviv University}
  \city{Tel Aviv}
  \country{Israel}
}
\author{Nave Frost}
\affiliation{%
  \institution{eBay Research}
  \city{Netanya}
  \country{Israel}
}
\author{Ahmet Kara}
\affiliation{%
  \institution{OTH Regensburg}
  \city{Regensburg}
  \country{Germany}
}
\author{Dan Olteanu}
\affiliation{%
  \institution{University of Zurich}
  \city{Zurich}
  \country{Switzerland}
}

\begin{abstract}
In this paper, we introduce a novel approach to computing the contribution of input tuples to the result of the query, quantified by the Banzhaf and Shapley values. In contrast to prior algorithmic work that focuses on Select-Project-Join-Union queries, ours is the first practical approach for queries with aggregates. It relies on two novel optimizations that are essential for its practicality and  significantly improve the runtime performance already for queries without aggregates. The first optimization exploits the observation that many input tuples have the same contribution to the query result, so it is enough to compute the contribution of one of them. The second optimization uses the gradient of the query lineage to compute the contributions of all tuples with the same complexity as for one of them. 
Experiments with a million instances over 3 databases show that our approach achieves up to 3 orders of magnitude runtime improvements over the state-of-the-art for queries without aggregates, and that it is practical for aggregate queries.

\end{abstract}

\maketitle
 
\setcounter{page}{1}
\section{Introduction}
\label{sec:Introduction}

Recent years have witnessed a surge of development that uses game theoretic measures to quantify the contribution of each database tuple to the query answer, also referred to as {\em attribution}. The Banzhaf ~\cite{Penrose:Banzhaf:1946, Banzhaf:1965} and Shapley ~\cite{shapley1953value} values are two prominent examples of such measures. They originate in cooperative game theory, where they are used to quantify the contribution of a player to the game outcome by summing up the marginal contributions of the player over the subsets (or permutations, for Shapley) of players. 
Prior work used Banzhaf and Shapley values for attribution in query answering: the contribution of a database tuple to the query result is quantified by defining a game where the players are database tuples and the objective function is given by each tuple in the query result~\cite{TheShapleyValueofTuplesinQueryAnswering, ComputingTheShapleyValueOfFactsInQueryAnswering, Sig24:BanzhafValuesForFactsInQueryAnswering}. Banzhaf and Shapley values are  different, yet closely related measurements. Previous works have demonstrated the usefulness of both values for explanations in query answering ~\cite{TheShapleyValueofTuplesinQueryAnswering, Sig24:BanzhafValuesForFactsInQueryAnswering, FastShapleyValueComputationinDataAssemblage, ShapleyValueInDataManagement, ShapleyvaluecomputationinontologyMediatedqueryanswering, ThecomplexityoftheShapleyvalueforregularpathqueries, FromShapleyValueToModelCountingAndBack, Sig24:ExpectedShapleyLikeScores, TheimpactofnegationonthecomplexityoftheShapleyvalueinconjunctivequeries}
and for feature importance, data valuation, feature selection, and explanations in machine learning~~\cite{sun2018banzhaf, fryer2021shapley,rozemberczki2022shapley, wang2023dataBanzhaf}.
The main difference is that the sum of Shapley values over all input tuples is equal to the query result (this is 1 in case of a Boolean query and the numeric result for an aggregate query), whereas this is not the case for Banzhaf. This may be considered an advantage of Shapley values, in that they better capture relative contributions. Nevertheless, prior work \cite{Sig24:BanzhafValuesForFactsInQueryAnswering} showed experimentally that they tend to agree on the relative order of fact contribution in query answering. 

Yet computing Banzhaf values is typically more efficient than computing Shapley values \cite{Sig24:BanzhafValuesForFactsInQueryAnswering}. Our work observes this for aggregate queries, too.




\begin{example}\label{ex:intro-example1}
Figure~\ref{fig:running_example} depicts a fragment of an IMDB-like database about movies directed by Tarantino, award nominations, and cast. We consider two queries.
Query $Q_1$ asks whether Tarantino was nominated for a “best director” award for movies with actors from the Cast relation. The answer is true: there are two such movies in the fragment (and three in the full database). We would like to understand which actor contributed most to this answer, using Banzhaf/Shapley values as the  measure. In our fragment, both Brad Pitt and Zoë Bell contributed most as they appear in the two movies. If we consider the full database but restrict the MovieCast relation to only include  the top-5 actors (first 5 listed in IMDB) for each movie, the most influential actor is Brad Pitt, who had a leading role in two of the three movies.
Query $Q_2$ retrieves the maximum revenue of a movie directed by Tarantino. Again, we seek to identify the actor who contributed most to the query answer. In our fragment, Brad Pitt and Zoë Bell are the top contributors according to the Banzhaf/Shapley values. For the full database (again with only the top-5 actors in MovieCast), the top contributor is Samuel L. Jackson.
\end{example}

\begin{figure*}[!t]
    \scriptsize
    \centering
    \begin{minipage}{0.20\linewidth}
    \centering
    \caption*{\texttt{DirectingAwards} (\normalfont{endo})}
    \begin{tabular}{c|c|}
    \hhline{~-}
    & \textbf{Award} \\
    \hline
    $d_1$ & Academy \\
    $d_2$ & BAFTA \\
    \hline
    \end{tabular}
    \end{minipage}%
    \hfill
    \begin{minipage}{0.40\linewidth}
    \centering 
    \caption*{\texttt{Movies} (\normalfont{endo})}
    \begin{tabular}{c|r|r|r|}
    \hhline{~---}
    & \textbf{Title} & \textbf{Director} & \textbf{Gross revenue(M\$)}\\
    \hline
    $m_1$ & Kill Bill: Vol. 1 &  Tarantino& 176 \\
    $m_2$ & Inglourious Basterds &  Tarantino & 322 \\
    $m_3$ & Once Upon a Time in Hollywood &  Tarantino & 377 \\
    \hline
    \end{tabular}
    \end{minipage}%
    \hfill
    \begin{minipage}{0.3\linewidth}
    \centering
    \caption*{\texttt{Cast} (\normalfont{endo})}
    \begin{tabular}{c|r|r|}
    \hhline{~-}
    & \textbf{Name} \\
    \hline
    $a_1$ & Brad Pitt \\
    $a_2$ & Leonardo DiCaprio \\
    $a_3$ & Zoë Bell \\
    $a_4$ & Uma Thurman\\

    \hline
    \end{tabular}
    \end{minipage}

    \begin{minipage}{0.30\linewidth}
    \centering
    \caption*{\texttt{MovieAwards} (\normalfont{exo})}
    \begin{tabular}{|r|r|}
    \hhline{--}
    \textbf{Movie} & \textbf{Award} \\
    \hline
    Inglourious Basterds & Academy\\
    Inglourious Basterds & BAFTA \\
    Once Upon a Time in Hollywood & Academy \\
    Once Upon a Time in Hollywood & BAFTA \\
    \hline
    \end{tabular}
    \end{minipage}%
    \hfill
    \begin{minipage}{0.19\linewidth}
    \centering
    \caption*{\texttt{Query $Q_1$}}
    $\begin{aligned} 
    Q() &=\, \texttt{Movies}(X, \text{Tarantino}, R),\\ 
              &  \texttt{Cast}(Y), \\
              &  \texttt{DirectingAwards}(Z), \\
              &  \texttt{MovieCast}(X,Y), \\
              & \texttt{MovieAwards}(X,Z)
    \end{aligned}$
    \end{minipage}%
    \hfill
    \begin{minipage}{0.20\linewidth}
    \centering
    \caption*{\texttt{Query $Q_2$}}
    $\begin{aligned}
    \big\langle MA&X, R,\\ Q&(X,Y,R)=\\
    & \texttt{Movies}(X, \text{Tarantino}, R),\\ 
  &  \texttt{Cast}(Y), \\
  &  \texttt{MovieCast}(X,Y)\big\rangle
    \end{aligned}$
    \end{minipage}%
    \hfill
    \begin{minipage}{0.30\linewidth}
    \centering
    \caption*{\texttt{MovieCast} (\normalfont{exo})}
    \begin{tabular}{|r|r|}
    \hhline{--}
     \textbf{Movie} & \textbf{Actor} \\
    \hline
    Kill Bill: Vol. 1 & Uma Thurman\\
     Inglourious Basterds & Brad Pitt \\
    Once Upon a Time in Hollywood & Brad Pitt \\
    Inglourious Basterds & Zoë Bell \\
    Once Upon a Time in Hollywood & Zoë Bell \\
    Once Upon a Time in Hollywood & Leonardo DiCaprio \\
    \hline
    \end{tabular}
    \end{minipage}
    \caption{Running example: a fragment of an IMDB-like database and two queries.}
    \label{fig:running_example}
\end{figure*}

\nop{
\ahmet{@Omer:\\
- In Figure 1, TopMovieCast(y,x) should be TopMovieCast(x,y).\\ 
- Maybe  we should also use uppercase variables $R$, $Y$, $X$, $Z$ in the figure, since in the definition of queries we use uppercase variables.\\
- In Example 1.1, TopCast should be changed to TopMovieCast}}

In this paper, we study the problem of {\em computing the contribution of database tuples to the results of aggregate queries, quantified by the Banzhaf and Shapley values}. The computational problem is highly intractable in general: already for Boolean non-hierarchical queries, exact computation of Shapley and Banzhaf values is \#P-hard~\cite{TheShapleyValueofTuplesinQueryAnswering}.
{\em All prior practical approaches were designed for Select-Project-Join-Union (SPJU) queries and do not work for queries with aggregates. Their runtime is far from negligible, especially for computing the Banzhaf/Shapley values for all input-output tuple pairs.}

In this paper, we go beyond prior work by adding support for the COUNT, SUM, MIN, MAX aggregates in the select clause of the query. Aggregation is key to practical OLAP and decision support workloads. All 22 TPC-H queries involve aggregation.

The computation of Shapley and Banzhaf values resorts to the compilation of the query lineage, which is a polynomial over variables representing the input tuples~\cite{VLDB2014:Causality_and_explanations_in_databases, TrendsInExplanations,ProvenanceInDatabases:WhyHowandWhere} into tractable circuits, such as d-trees~\cite{AggregationInProbabalisticDatabasesViaKnowlegeCompilation}. Yet aggregate queries require lineage formalisms beyond the Boolean semiring for queries without aggregates. {\em The first main contribution of this paper is an extension of this compilation approach to the lineage of queries with aggregates.}


Computing Banzhaf/Shapley values for aggregate queries remains expensive for most benchmarks used in the literature~\cite{ComputingTheShapleyValueOfFactsInQueryAnswering,Sig24:BanzhafValuesForFactsInQueryAnswering}. {\em The second main contribution of this paper is  the interplay of two novel techniques that we incorporate in our compilation algorithm: lifted compilation and gradient-based Banzhaf/Shapley value computation.} These techniques significantly reduce the computation time for fact attribution for {\em queries with and even without aggregates}.

In more detail, our contributions are as follows.

We introduce \lexaban and \lexashap, two novel algorithms to compute Banzhaf and respectively Shapley values for SPJUA (Select-Project-Join-Union-Aggregate) queries. We conduct experiments with these algorithms and the state-of-the-art algorithms \exaban and \exashap for SPJU queries using over a million instances~\cite{Sig24:BanzhafValuesForFactsInQueryAnswering}. For queries without aggregates, \lexaban achieves 2-3 orders of magnitude (OOM) improvement over \exaban, while \lexashap achieves more than 2 OOM improvement over \exashap.
For queries with aggregates, \lexaban and \lexashap are, to our knowledge, the first practical algorithms for Banzhaf and Shapley value computation. 
The performance gains of \lexaban and \lexashap over the state-of-the-art are primarily due to two novel techniques.

{\bf Lifted compilation.} Lifting exploits the observation that some variables (input tuples) in Boolean functions (query lineage) are interchangeable in the sense that they behave the same; in particular, the Banzhaf/Shapley values of interchangeable variables are the same. For instance, the pairs of variables $\{x_1,x_2\}$ and $\{y_1,z_1\}$ are interchangeable in $(x_1\wedge y_1\wedge z_1)\vee(x_2\wedge y_1\wedge z_1)$. We accommodate lifting into lineage compilation by extending the d-trees with a new gate type, which replaces conjunctions and disjunctions of interchangeable variables with a fresh variable. Lifting preserves equi-satisfiability. In our experiments, lifting leads to more than 2 OOM speedup and more than 1 OOM smaller d-trees.

{\bf Gradient computation.} Computing Banzhaf/Shapley values separately for each input fact, as done in all prior work, is expensive. We show how to compute them for all the facts in the same complexity as for one fact. For this, we expose a fundamental connection between the vector of Banzhaf/Shapley values, one per input fact, and the gradient of a function that computes the probability of the query lineage with respect to each of its variables. In particular, we show that the Banzhaf value vector is precisely this gradient where the probability of each variable is 1/2. The vector of Shapley values can be recovered similarly. We devise a back-propagation algorithm to compute these values efficiently over any d-tree. In our experiments, this gradient technique speeds up the computation by a factor that grows with the size of the instance and up to 2 OOM speedup for large instances. This enables the computation for instances larger than previously possible.

\nop{We exhibit a novel relationship between the Banzhaf/Shapley values and the gradient of a function defined by the query lineage. Interestingly, in the case of Banzhaf values, this function aligns with a fundamental property of the lineage—its model count. This allows us to devise a back-propagation algorithm to compute these values efficiently over the d-tree. Our gradient technique allows to compute the Banzhaf/Shapley values for all input tuples with the same complexity as for one tuple. In our experiments, this gradient technique speeds up the computation by a factor that grows with the size of the instance and up to more than 2 OOM speedup for large instances. This enables the computation of Banzhaf values for instances larger than previously possible.}

We first introduce these two techniques for SPJU queries in Section~\ref{sec: Algorithm} and then extend them to SPJUA queries in Section~\ref{sec: aggregates}. In Section ~\ref{sec:experiments} we show that the interplay of both techniques can lead to 3 OOM speedup.





\nop{To the best of our knowledge no prior work used semimodule d-trees for Banzhaf/Shapley value computation.}

\nop{
\paragraph*{Related Work} Banzhaf and Shapley values were originally introduced in the context of game theory to quantify the contribution of individual players to the value of a cooperative game \cite{shapley1953value,Banzhaf:1965}.
Recent work laid the theoretical foundation of Banzhaf and Shapley values for query answering~\cite{TheShapleyValueofTuplesinQueryAnswering,ComputingTheShapleyValueOfFactsInQueryAnswering, TheimpactofnegationonthecomplexityoftheShapleyvalueinconjunctivequeries,Sig24:ExpectedShapleyLikeScores}. The state-of-the-art approach for computing such values for SPJU queries compiles the query lineage into d-tree circuits~\cite{Sig24:BanzhafValuesForFactsInQueryAnswering}. This builds on earlier work that compiles lineage into d-DNNF circuits~\cite{ComputingTheShapleyValueOfFactsInQueryAnswering}. Compared to the state-of-the-art, our work
achieves {\em significant performance gains}: up to 3 OOM speedup and pushing the boundaries of tractability for the problem; it also supports queries with aggregates, rather than SPJU in \cite{Sig24:BanzhafValuesForFactsInQueryAnswering, ComputingTheShapleyValueOfFactsInQueryAnswering}. It does so through the novel techniques of lifted compilation and gradient-based computation, mentioned above.
Further notions have been used to quantify fact contribution in query answering: causality \cite{VLDB2014:Causality_and_explanations_in_databases}, responsibility \cite{Thecomplexityofcausalityandresponsibilityforqueryanswersandnon-answers}, and counterfactuals \cite{BringingProvenancetoItsFullPotentialUsingCausalReasoning}. The SHAP score \cite{lundberg2017unified} leverages and adapts the definition of Shapley values to explain model predictions.  
}

Implementation of the algorithms describes in this paper is available online ~\cite{full-version-git}.

\section{Preliminaries}
\label{sec:Preliminaries}

Table~\ref{tab:notation_overview} summarizes the notation used in the paper. We use $\mathbb{N}$ to denote the set of natural numbers including $0$. 
\nop{
For $n \in \mathbb{N}$, we define
$[n] = \{1, \ldots, n\}$. In case $n = 0$, we have 
$[n]=\emptyset$.
}

\paragraph{Boolean Formulas}  
A \emph{(Boolean) formula} $\varphi$ over a set \(\bm{X}\) of Boolean variables is either a constant \(0\) or \(1\), a variable \(x \in \bm{X}\), 
a negation \(\neg \varphi_1\), a conjunction
\(\varphi_1 \wedge \varphi_2\), or a disjunction
 \(\varphi_1 \vee \varphi_2\) of formulas $\varphi_1$ and 
 $\varphi_2$. \(\bool(\bm{X})\) ( \(\posbool(\bm{X})\) ) is the set of all (positive) Boolean formulas over \(\bm{X}\).  
A \emph{literal} is a constant, a variable, or its negation. The set of variables in \(\varphi\) is denoted \(\vars(\varphi)\). A formula  is \emph{read-once} if each variable appears at most once. \(\varphi[x := b]\) denotes the substitution of \(x \in \bm{X}\) with \(b \in \{0,1\}\) in \(\varphi\).  
A \emph{valuation} \(\theta : \vars(\varphi) \to \{0,1\}\) 
maps each variable in $\varphi$ to \(0\) or \(1\); we identify \(\theta\) with \(\{x \in \vars(\varphi) \mid \theta(x) = 1\}\), where \(|\theta|\) is the number of variables mapped to \(1\). The Boolean value of \(\varphi\) under \(\theta\) is \(\varphi[\theta]\), and \(\theta\) is a \emph{model} of \(\varphi\) if \(\varphi[\theta] = 1\). The \emph{model set} \(\mods(\varphi)\) consists of all models of \(\varphi\), with \emph{model count} \(\#\varphi = |\mods(\varphi)|\) and \emph{\(k\)-model count} \(\#_k \varphi = |\mods(\varphi) \cap \binom{\bm{X}}{k}|\), where \(\binom{\bm{X}}{k}\) is the set of \(k\)-element subsets of \(\bm{X}\).  
Two formulas \(\varphi_1\) and \(\varphi_2\) are \emph{independent} if \(\vars(\varphi_1) \cap \vars(\varphi_2) = \emptyset\) and \emph{exclusive} if \(\mods(\varphi_1) \cap \mods(\varphi_2) = \emptyset\). 
\nop{The set \(\posbool(\bm{X})\) of \emph{positive formulas} (formulas without negation) forms a commutative semiring under \(\vee, \wedge, 0, 1\)~\cite{Golan1999SemiringsAT}.}  
\nop{I have commented out that \(\posbool(\bm{X})\) is a semiring.}
\begin{example}
Consider the formula $\varphi = x\vee y$.
The models of $\varphi$ are $\{x\}$, $\{y\}$, and $\{x, y\}$. 
Hence, $\#\varphi = 3$, $\#_0\varphi = 0$, $\#_1\varphi = 2$,
and $\#_2\varphi = 1$.
\end{example}
\begin{table}[t]
    \caption{Overview of notation introduced in Section~\ref{sec:Preliminaries}. $\varphi$
is a Boolean formula; $\Phi$ is a \boolnumpairs expression.}
\begin{tabular}{@{}l| l@{}}
\hline
$[n]$ & $\{1, \ldots , n\}$; $[n]=\emptyset$ if $n = 0$\\
$\bool(\bm{X})$ \hspace{-0.1em}($\posbool(\bm{X})$) \hspace{-0.5em} & Set of all (positive) Boolean formulas over $\bm X$ \\
$\varphi[x:=b]$ & Substitution of variable $x$ in $\varphi$ by $b \in \{0,1\}$ \\
$\varphi[\theta]$ & Boolean value of $\varphi$ under  valuation  $\theta$ \\
$\#\varphi$ ($\#_k\varphi$) & Number of models of $\varphi$ (of size $k$)\\
$\sum^M_{i} \varphi_i \otimes m_i$ & \boolnumpairs (\tensorset) using monoid $M$
\\
$\#^p\Phi$ ($\#_k^p\Phi$) & Number of valuations $\theta$ (of size $k$) s.t. $\Phi[\theta]= p$ \\
$D_x, D_n \subseteq D$ & Exogenous/endogenous facts in database $D$ \\
$\lin(Q,D,\bm t)$ & Lineage of $\bm t$ with respect to $Q$ over $D$ \\
$\lin(\langle \alpha,\gamma, Q\rangle,D)$ & Lineage of aggregate query $\langle \alpha,\gamma, Q\rangle$ over $D$\\
\hline
\end{tabular}

\label{tab:notation_overview}
\end{table}


\paragraph{Databases}  
We assume a countably infinite set \(\Const\) of constants used as database values. 
A {\em $k$-tuple} \(\bm{t}\) for \(k \in \mathbb{N}\) is an element from \(\Const^k\). 
A database consists of \emph{facts} \(R(\bm{t})\), where \(R\) is a relation name with some arity $k \in \mathbb{N}$ and \(\bm{t}\) is a $k$-tuple. 
As in ~\cite{TheShapleyValueofTuplesinQueryAnswering}, each database \(D\) is partitioned into \emph{endogenous} facts \(D_n\) and \emph{exogenous} facts \(D_x\).  
We focus on the contribution of endogenous facts to the query result. Intuitively, we separate the facts in the database that are under analysis/control from those that are fixed/externally imposed.

\begin{example}
In Figure~\ref{fig:running_example}, all facts labeled ``endo" (``exo") are endogenous (respectively exogenous), meaning that we (do not) wish to analyze their contributions, e.g., Cast is labeled ``endo" and we want to find out which actors contributed the most to the query result. 
\end{example}

\nop{

\begin{example}

As indicated by the labels next to the relation names in the IMDB-like database shown in Figure~\ref{fig:running_example}, all facts in the relations MovieAwards and TopMovieCast are considered exogenous, whereas all other facts are considered endogenous. Intuitively, this choice is made 
because one of the questions we are interested in is to find out which actors contributed the most to a director receiving an award for one of their movies. We therefore focus on the actor tuples in Cast, declaring them endogenous.

\end{example}
}

\paragraph{Conjunctive Queries}  
A \emph{conjunctive query (CQ)} \cite{abiteboul1995foundations} has the form $Q(\bm{X}) = R_1(\bm{X}_1), \dots, R_m(\bm{X}_m)$, where each  $R_i(\bm{X}_i)$ is an \emph{atom}, $R_i$ is a relation symbol, and $\bm{X}_i$ is a tuple of variables and constants. 
The set \(\bm{X}\subseteq \bm{X}_1\cup \dots\cup \bm{X}_m\) consists of the free (head) variables.
A CQ is \emph{$k$-ary} if \(|\bm{X}| = k\) and \emph{Boolean} if $k = 0$.
A \emph{union of conjunctive queries (UCQ)} consists of a set of $k$-ary CQs for some $k \in \mathbb{N}$. 

\nop{I have moved the definition of query grounding here, since it also entails the definition of $Q(D)$.}
\paragraph{Query Grounding} A {\em grounding} of a CQ $Q$ w.r.t. a database $D$ is a mapping $G$ of the variables of $Q$ to constants such that replacing every variable $X$ by $G(X)$ turns every atom in $Q$ to a fact in $D$. We denote this set of facts by $\facts(Q,D,G)$. A grounding for a UCQ is a grounding for one of its CQs.
Each grounding $G$ yields an {\em output tuple}, as the restriction of $G$ to the head variables of $Q$. 
Multiple groundings may yield the same output tuple. 
We denote the set of groundings yielding $\bm t$ by $G(Q,D,\bm t)$ and  the set of all output tuples, which defines the result  of $Q$ over $D$, by $Q(D)$. 
For an output tuple $\bm t\in Q(D)$, a head variable $X \in \bm X$,
and a set $\bm Y \subseteq \bm X$ of head variables, we denote by $\bm t[X]$
the X-value in $\bm t$ and by $\bm t[\bm Y]$ the projection of $\bm t$ onto $\bm Y$. A Boolean query is satisfied if it has a grounding.

\begin{example}
 For $Q_1$ in Figure~\ref{fig:running_example} and 
  $X =\ \text{'Inglourious Basterds'}$, $R = 322$, $Y = \ \text{'Brad Pitt'}$, and $Z = \ \text{'Academy'}$,  we obtain a grounding and thus the query is satisfied. 
\end{example}

\nop{
\paragraph{Aggregate Queries Semantics} 
The semantics that we give here follows that of \cite{TheShapleyValueofTuplesinQueryAnswering}, combining set semantics for the underlying relational query with bag semantics for the aggregate function. Given a database $D$, 
let $Q(D) = \{\bm t_1, \ldots, \bm t_k\}$ be the result of evaluating $Q$ over $D$ under set semantics.
The result of 
$\langle \alpha, \gamma, Q\rangle$ over $D$ is defined as
$\langle\alpha, \gamma , Q\rangle (D):= \alpha((\gamma(\bm t_i))_{i\in k})$.
If $\gamma$ returns the $i$-th value of the tuple $\bm t$, 
we just write $i$ instead of $\alpha$. Note that although $Q(D)$ is a set, the object $\gamma(\bm t_i)_{i\in k}$ on which $\alpha$ is applied is generally a bag, since the same value may occur multiple times in the aggregated column of different tuples. Some of the aggregation functions that we will consider are sensitive to such multiplicities.
We focus on the following aggregation functions: MAX,MIN, SUM and COUNT, defined as follows. We consider SUM and MAX aggregate queries 
of the following form:
{   \begin{align*}
    \langle \sumagg, \gamma, Q\rangle(D) \overset{def}{=}& \sum_{\vec{c}\in Q(D)}{\gamma{(\vec{c})}}\\
    \langle \maxagg, \gamma, Q\rangle(D) \overset{def}{=}& \begin{cases} 
\max{\{\gamma(\vec{c})\, |\, \vec{c}\in Q(D)\}} & \text{if } Q(D) \neq \emptyset \\ 
0 & \text{if } Q(D) = \emptyset
\end{cases}
\end{align*}    }

The query $\langle \minagg, \gamma, Q\rangle$ is defined analogously to $\langle \maxagg, \gamma, Q\rangle$. Last, we define 
$\langle \countagg, Q\rangle \overset{def}{=}
\langle \sumagg, \bm 1 , Q\rangle$, where 
$\bm 1$ maps each output tuple to $1$, yielding the number of output tuples of $Q$. To support $\groupbyagg$, we extend the notation to $\langle \alpha, \gamma, Q, \vec{g} \rangle$, where $\vec{g}$ is a projection defining the grouping attributes. Let $Q(D) = \{\bm t_1, \ldots, \bm t_k\}$. For each group value $\vec{v}$ in $\{\vec{g}(\bm t_i) \mid i \in [k]\}$, let $G_{\vec{v}} = \{\bm t_i \in Q(D) \mid \vec{g}(\bm t_i) = \vec{v}\}$. We define:
\begin{align*}
    \langle \alpha, \gamma, Q, \vec{g} \rangle(D) :=& \{ (\vec{v}, \alpha(\gamma(\bm t) \mid \bm t \in G_{\vec{v}})) \}
\end{align*}

That is, the aggregation is applied separately to each group, with multiplicities preserved as in the ungrouped case.
}
\paragraph{Aggregate Queries} 
An aggregate query is a triple $\langle\alpha, \gamma, Q\rangle$, where $Q$
is a $k$-ary UCQ for some $k \in \mathbb{N}$, $\gamma: \Const^k \rightarrow \mathbb{R}$ \nop{is a function that} maps 
each k-tuple over $\Const$ to a numeric value and $\alpha: \mathbb{R}^* \to \mathbb{R}$ aggregates tuples of numbers to single numbers~\cite{TheShapleyValueofTuplesinQueryAnswering}. 
When evaluating an aggregate query $Q$ over a database $D$, $\gamma$ is applied to each tuple in $Q(D)$ to yield a number, and $\alpha$ is applied on the set of these numbers. 
Let $Q(D) = \{\bm t_1, \ldots, \bm t_n\}$. 
The result of 
$\langle \alpha, \gamma, Q\rangle$ is defined as
$\langle\alpha, \gamma , Q\rangle (D):= \alpha\big((\gamma(\bm t_i))_{i\in [n]}\big)$.
If $\gamma(\bm t)$ returns the $X$-value of $\bm t$ for a head variable $X$, we just write $X$ instead of $\gamma(\bm t)$. 
We focus on the \nop{aggregation functions}aggregations $\maxagg$, $\minagg$, $\sumagg$, and $\countagg$ defined as follows:
   \begin{align*}
    \langle \sumagg, \gamma, Q\rangle(D) \overset{def}{=}& \sum_{\bm t\in Q(D)}{\gamma{(\bm t)}}\\
    \langle \maxagg, \gamma, Q\rangle(D) \overset{def}{=}& \begin{cases} 
\max{\{\gamma(\bm t)\, |\, \bm t\in Q(D)\}} & \text{if } Q(D) \neq \emptyset \\ 
0 & \text{if } Q(D) = \emptyset
\end{cases}
\end{align*}    
The query $\langle \minagg, \gamma, Q\rangle$ is defined analogously to $\langle \maxagg, \gamma, Q\rangle$. We define 
$\langle \countagg, Q\bm \rangle \overset{def}{=}
\langle \sumagg, \bm 1 , Q\rangle$, where 
$\bm 1$ maps each output tuple in $Q(D)$ to $1$, yielding the number of output tuples of $Q$. 
If $\alpha$ and $\gamma$ are clear from context, we refer to an aggregate query by $Q$.

\nop{

To support $\groupbyagg$ aggregate queries, we apply the functions $\alpha$
and $\gamma$ to each group of tuples in the query result
that agree on the values of the $\groupbyagg$ variables.
Given a $k$-ary UCQ $Q$ with head variables $\bm X$,
we express a $\groupbyagg$ aggregate query as 
$\langle \alpha, \gamma, Q, \bm Y\rangle$, where $\bm Y \subseteq \bm X$
are the $\groupbyagg$ variables of the query. Let 
$Q(D) = \{\bm t_1, \ldots , \bm t_n\}$ be the result tuples of $Q$, 
$P = \{\bm t_i[\bm Y] \mid i \in [n]\}$ the projections of the result tuples
onto $\bm Y$, and $I_{\bm t} = \{i \in {n} \mid \bm t_i[\bm Y] = \bm t\}$ the indices of the
group of result tuples that agree with $\bm t \in P$ on $\bm Y$. We define the result of $\langle \alpha, \gamma, Q, \bm Y \rangle$ over $D$ as:
\begin{align*}
    \langle \alpha, \gamma, Q, \bm Y \rangle(D)\overset{def}{=}& 
    \big\{ \big(\bm t, \alpha\big((\gamma(\bm t_i))_{i \in I_{\bm t}}\big)\big) \mid \bm t \in P \big\}
\end{align*}
}

\begin{remark}\label{rem:max-min}
The choice for the $\maxagg/\minagg$ queries to evaluate to 0 if $Q(D) = \emptyset$, as in \cite{TheShapleyValueofTuplesinQueryAnswering},  is not arbitrary. In the definition of Banzhaf and Shapley values below, we sum over marginal contributions w.r.t. the query result of individual tuples for different sub-databases, potentially including ones for which the query result is empty. The alternative of defining the result of aggregation over the empty set to be $\pm\infty$ would lead to unintuitive Banzhaf/Shapley values of $\pm\infty$. 
\end{remark}

\begin{example}
    Query $Q_2$ of Figure~\ref{fig:running_example} is an aggregate query that asks for the maximal revenue of a movie directed by Tarantino.
    By setting $X =\ \text{'Once Upon a Time in Hollywood'}$, $R = 377$, and  $Y = \ \text{'Brad Pitt'}$, we obtain the maximal revenue 377.
\end{example}


\paragraph{Monoids}
We denote by $\overline{\mathbb{R}}$ the real numbers $\mathbb{R}$ including 
$\{\infty, -\infty\}$.
 \begin{definition} A (numeric) monoid $M=(\overline{\mathbb{R}},+_M, 0_M)$ consists of a binary operation $+_M: \overline{\mathbb{R}} \times \overline{\mathbb{R}} \rightarrow \overline{\mathbb{R}}$, and a neutral element $0_M \in \overline{\mathbb{R}}$ s.t. 
for all $m_1, m_2, m_3 \in \overline{\mathbb{R}}$, the following holds:
{   \begin{align*}
(m_1 +_M m_2) +_M m_3 & =  m_1 +_M (m_2 +_M m_3) \\
0_M +_M m_1 &  = m_1 +_M 0_M = m_1
\end{align*}    }
A monoid is \emph{commutative} if 
$m_1 + m_2 = m_2 + m_1$ for all  $m_1,m_2 \in \overline{\mathbb{R}}$. It is idempotent if $m+m=m$ for all $m \in \overline{\mathbb{R}}$.
\end{definition}
%
Monoids naturally model aggregate functions, as illustrated next:
\begin{example}
To model the aggregate functions $\maxagg$ and $\minagg$, we use the monoids $(\overline{\mathbb{R}}, \max, -\infty)$ and respectively
$(\overline{\mathbb{R}}, \min, \infty)$. We use $\pm\infty$ as neutral values as usual for $\minagg$ and $\maxagg$ despite of Remark \ref{rem:max-min}, because $0$ is not a neutral value for $\minagg$ ($\maxagg$) in presence of positive (resp. negative) numbers; adjustments in light of the remark will be made in our notion of valuations below. 
For $\sumagg$ and $\countagg$, we use $(\overline{\mathbb{R}}, +, 0)$. The monoids for $\minagg$ and $\maxagg$ are idempotent and the monoid for $\sumagg$ and $\countagg$ is not.
\end{example}
\nop{For $\countagg$, I changed $(\{0,1\}, \sum, 0)$ to $(\mathbb{N}, +, 0)$.}

\paragraph{\boolnumpairs} 
A \boolnumpair (\tensorset) 
represents a collection of numeric values conditioned on Boolean formulas. 
Consider a variable set $\bm X$ and a bag 
$\calB = \{\!\!\{ (\varphi_i, m_i) \mid i \in [n]\}\!\!\}$, where each $\varphi_i$ is a positive Boolean formula from $\posbool(\bm X)$ and each $m_i$ is an element from 
$\overline{\mathbb{R}}$ for a commutative numeric monoid 
$M = (\overline{\mathbb{R}},+_M, 0_M)$. 
We call $\Phi = \sum^M_{i} \varphi_i \otimes m_i$ a  \tensorset over $\bm X$ and $M$.
The {\em Boolean part} of $\Phi$ is defined as
$\varphi = \bigvee_i \varphi_i$ and its variables are $\vars(\Phi) = \vars(\varphi) = \bigcup_i \vars(\varphi_i)$.
Given a valuation $\theta$ of $\varphi$, we define $VB$ as the bag of values $m_i$ whose formulas $\varphi_i$ evaluate to $1$ under $\theta$:  $VB = \{\!\!\{  m_i \mid (\varphi_i, m_i) \in \calB \text{ and } \varphi_i[\theta] = 1 \}\!\!\}$. We define the value of $\Phi$ under $\theta$ as $\Phi[\theta] = \sum\nolimits^M_{m \in VB} m$ if $VB \neq \emptyset$ and $\Phi[\theta] = 0_{\mathbb{R}}$ otherwise, 
where $0_{\mathbb{R}}$ is the real number 0 and the summation $\sum^M$ uses the $+_M$ operator of the monoid. The explicit account for the case of valuations that yield an empty set is due to Remark \ref{rem:max-min}. 
A valuation $\theta$ is a {\em model} of $\Phi$
if it is a model of $\varphi$. 
The {\em model count} $\#\Phi$ of $\Phi$ is defined as the model count $\#\varphi$ of $\varphi$. 
For a value $p \in \overline{\mathbb{R}}$, we denote by $\#^p\Phi$ the number of valuations $\theta$ such that $\Phi[\theta]= p$ and by $\#_k^p\Phi$ the number of such valuations of size $k$.
We say that two \tensorset expressions 
$\Phi_1$ and
$\Phi_2$ are {\em independent} (resp. {\em exclusive}) if their Boolean parts are independent (resp. exclusive).
%

\begin{example}
Consider the  \tensorset $\Phi = (x \vee y) \otimes 3 +  (x \vee y) \otimes 3 + x \otimes 2 +  y \otimes 4$ using the monoid $(\overline{\mathbb{R}}, +, 0)$. Its Boolean part is equivalent to $\varphi = x\vee y$ and its set of variables is $\vars(\Phi) = \{x,y\}$.
The valuation $\theta =\{x \mapsto 1, y \mapsto 0\}$ only satisfies the first three 
 formulas in $\Phi$. Hence, $\Phi[\theta] = 3 + 3 + 2 = 8$. 
We have  
$\#\Phi = \#\varphi = 3$ and $\#^8 \Phi = 1$.
It holds that $\#_2^8\Phi = 0$, since 
the only valuation of size 2 is $\theta' = \{x \mapsto 1, y \mapsto 1\}$ and $\Phi[\theta'] = 12$.
\end{example}

\paragraph{Semimodules}
If the monoid $M = (\overline{\mathbb{R}},+_M, 0_M)$ used by a \tensorset is  idempotent, then the \tensorset can be interpreted as a semimodule expression\footnote{Previous work \cite{ProvenanceForAggregateQueries}, which proposed using semimodules to capture lineage, also noted that PosBool[X] is only ``compatible" in the context of this construction with idempotent monoids. This is why for the non-idempotent case we use \tensorset instead.}, where the operator $\otimes: \posbool(\bm{X}) \times \overline{\mathbb{R}} \to \overline{\mathbb{R}}$ obeys the following axioms:
{    \begin{enumerate}
    \item $\varphi_1 \otimes (m_1 +_M m_2) = \varphi_1 \otimes m_1 +_M \varphi_1 \otimes m_2$
    \item $(\varphi_1 \vee \varphi_2) \otimes m_1 = \varphi_1 \otimes m_1 +_M \varphi_2 \otimes m_1$
    \item $(\varphi_1 \wedge \varphi_2) \otimes m_1 = 
            \varphi_1 \otimes (\varphi_2 \otimes m_1)$
    \item $\varphi_1 \otimes 0_M = 0 \otimes m_1 = 0_M$
    \item $1 \otimes m_1 = m_1$
\end{enumerate}     }

Our notion of valuation for BNPs also extends to semimodules.


\nop{

\paragraph{Monoids and Semimodules}
The development in this paper uses the algebraic structures of numeric monoids and semimodules constructed using such monoids and the $\posbool(\bm X)$ semiring. We denote by $\overline{\mathbb{R}}$ the set of real numbers $\mathbb{R}$ including 
$\{\infty, -\infty\}$.
 \begin{definition} A numeric monoid (or simply, a monoid) $(\overline{\mathbb{R}},+_M, 0_M)$ consists of a binary operation $+_M: \overline{\mathbb{R}} \times \overline{\mathbb{R}} \rightarrow \overline{\mathbb{R}}$, and a neutral element $0_M \in \overline{\mathbb{R}}$ s.t. 
for all $m_1, m_2, m_3 \in \overline{\mathbb{R}}$, the following holds:
{   \begin{align*}
(m_1 +_M m_2) +_M m_3 & =  m_1 +_M (m_2 +_M m_3) \\
0 +_M m_1 &  = m_1 +_M 0 = m_1
\end{align*}    }
A monoid is \emph{commutative} if 
$m_1 + m_2 = m_2 + m_1$ for all  $m_1,m_2 \in M$.
\end{definition}

\paragraph{Monoids for Aggregate Queries} Monoids naturally model many aggregate functions, and specifically $\minagg,\maxagg,\sumagg, \countagg$. We denote by $\overline{\mathbb{R}}$ the set of real numbers $\mathbb{R}$ including
$\{\infty, -\infty\}$. For $\maxagg$ we then use the monoid
$(\overline{\mathbb{R}}, \max, -\infty)$ and for $\minagg$ we use the monoid  
$(\overline{\mathbb{R}}, \min, \infty)$. For $\sumagg$ we use $(\mathbb{R}, \sum, 0)$ and for $\countagg$ we use $(\{0,1\}, \sum, 0)$ (as we sum only values of $1$).

Next, we introduce $\posbool$-semimodules, which combine monoid values with positive Boolean formulas to represent numerical values conditioned by Boolean formulas. 

\begin{definition}
\label{def: semimodule}
Given a monoid $M=(\overline{\mathbb{R}},+_M,0_M)$ and the semiring $(\posbool(\bm X), \vee,\wedge,0,1)$, the triple $(\posbool(\bm X),M,\otimes)$ is a $\posbool$-semimodule where $\otimes : \posbool(\bm X) \times \overline{\mathbb{R}} \to \overline{\mathbb{R}}$, such that for all $\varphi_1, \varphi_2 \in \posbool(\bm X)$  and $m_1, m_2 \in \overline{\mathbb{R}}$, the following axioms hold:
{    \begin{enumerate}
    \item $\varphi_1 \otimes (m_1 +_M m_2) = \varphi_1 \otimes m_1 +_M \varphi_1 \otimes m_2$
    \item $(\varphi_1 \vee \varphi_2) \otimes m_1 = \varphi_1 \otimes m_1 +_M \varphi_2 \otimes m_1$
    \item $(\varphi_1 \wedge \varphi_2) \otimes m_1 = 
            \varphi_1 \otimes (\varphi_2 \otimes m_1)$
    \item $\varphi_1 \otimes 0_M = 0 \otimes m_1 = 0_M$
    \item $1 \otimes m_1 = m_1$
\end{enumerate}     }
\end{definition}

 A {\em semimodule expression} is either an element from $M$ or of one of the forms $\Phi_1 +_M \Phi_2$ 
and  $\varphi \otimes  \Phi_1$ for  
semimodule expressions  $\Phi_1$ and $\Phi_2$ and a 
Boolean formula $\varphi \in \posbool(\bm X)$.
It easily follows from the definition of 
semimodules that every semimodule expression can be 
equivalently written as
$\varphi_1 \otimes m_1 +_M \cdots +_M \varphi_k \otimes m_k$,
where each $\varphi_i$ is a Boolean formula and each 
$m_i$ is an element from $M$.
We abbreviate the latter expression by $\sum_M \varphi_i \otimes m_i$.

Consider a semimodule expression 
$\Phi = \sum_M \varphi_i \otimes m_i$. 
The {\em Boolean part} of $\Phi$ is  
$\bigvee \varphi_i$.
Given a valuation $\theta$ over $\bm X$, we denote $S_\theta = \{i | \varphi_i[\theta]=1\}$. By  
$\Phi[\theta]$, we denote
the monoid value computed by  
$\sum_M \big[\varphi_j[\theta] \otimes m_j\big]_{j\in S_\theta}$. 
Note that this is equal to $\sum_M m_j$ where the sum ranges over all $j$ for which $\varphi_j$ evaluates to $1$ over $\theta$.
\begin{remark}
Our semantics for semimodule expressions considers the summation only over those terms $\varphi_i \otimes m_i$ in $\Phi$ for which $\varphi_i$ is evaluated to 1. This is to avoid $\Phi$ taking the default value of the monoid for the valuation that maps all $\varphi_i$ to 0  (see Remark~\ref{rem:max-min}).
\nop{
 The choice to ignore the terms $\varphi_i \otimes m_i$ in the semimodule expression $\Phi$ for which $\varphi_i[\theta]=0$ is so that we do not need to sum over the default values in the monoid. For the MAX/MIN monoids, the default values are not useful for computing the fact attribution, as also pointed out in Remark~\ref{rem:max-min}.
 }
 \nop{
 The choice to evaluate the semimodule expression over formulas $\varphi_i$ s.t. $\varphi_i[\theta] = 1$ instead of evaluating it over all formulas is not standard. This is to accommodate for the definition of the $\maxagg$ and $\minagg$ queries that evaluate to $0$ in case the $Q(D)= \emptyset$. The standard evaluation of a semimodule would be the $0_M$ values $-+\infty$, in contrast with the query definition.
 }
\end{remark}
For a semimodule expression $\Phi$, we denote by $\#^p\Phi$ the number of valuations $\theta$ such that $\Phi[\theta]= p$ and by $\#_k^p\Phi$ the number of such valuations of size $k$.
We define the variables $\vars(\Phi)$ of $\Phi$ to be the variables $\vars(\varphi)$ of $\varphi$ and the model count $\#\Phi$ of $\Phi$
to be the  model count $\#\varphi$ of $\varphi$, where $\varphi$
is the Boolean part of $\Phi$. We say that two semimodule expressions 
$T_1$ and
$T_2$ are independent (resp. exclusive) if their Boolean parts are independent (resp. exclusive).

}

\paragraph{Query Lineage}
Consider a database 
$D = D_n \cup D_x$. To construct the lineage of a
query over $D$, we first associate each endogenous fact (or tuple) $f$
in $D_n$ with a Boolean variable $v(f)$.
 Given a UCQ $Q$ and an output tuple $\bm t$, the lineage of 
 $\bm t$ with respect to $Q$ over $D$, denoted by $\lin(Q,D,\bm t)$, is a positive Boolean formula in Disjunctive Normal Form over the variables $v(f)$ of facts $f$ in $D_n$:
 {   \begin{align*}
 \lin(Q,D,\bm t) = \bigvee_{G \in G(Q,D,\bm t)}\ \ \bigwedge_{f \in facts(Q,D,G) \cap D_n}v(f)
 \end{align*}    }
That is, the lineage is a disjunction over all groundings 
yielding $\bm t$. For each such grounding $G$, 
the lineage has a conjunctive clause consisting of all endogenous
facts from $\facts(Q,D,G)$. If $Q$ is a Boolean query, we abbreviate $\lin(Q,D,())$ by $\lin(Q,D)$.

\begin{table*}[h]
    \caption{Lineages and lifted lineages for queries $Q_1$ and $Q_2$ from Figure~\ref{fig:running_example}. 
    Colored variables are mapped to formulas.}
\scriptsize
    \centering
    \begin{tabular}{l c| c}
        \toprule
        \textbf{Query} & \textbf{Lineage} & \textbf{Lifted Lineage} \\
        \midrule
        $Q_1$ & $
        \begin{aligned}
            &(d_1 \wedge a_1 \wedge m_3) \vee (d_1 \wedge a_1 \wedge m_2) \vee (d_1 \wedge a_3 \wedge m_3)
            \vee (d_1 \wedge a_2 \wedge m_3) \vee (d_1 \wedge a_3 \wedge m_2)\\ &\vee (d_2 \wedge a_1 \wedge m_3)
            \vee (d_2 \wedge a_1 \wedge m_2) \vee (d_2 \wedge a_3 \wedge m_3) \vee (d_2 \wedge a_2 \wedge m_3)
            \vee (d_2 \wedge a_3 \wedge m_2)
        \end{aligned}$ &         $\begin{aligned}
&(\textcolor{red}{d_{1,2}} \wedge \textcolor{goodgreen}{a_{1,3}} \wedge m_3) \vee (\textcolor{red}{d_{1,2}} \wedge \textcolor{goodgreen}{a_{1,3}} \wedge m_2) \vee(\textcolor{red}{d_{1,2}} \wedge a_2 \wedge m_3)\\
& \ell = \{{\color{red}d_{1,2}} \mapsto (d_1 \vee d_2),{\color{goodgreen}a_{1,3}} \mapsto (a_1 \vee a_3), a_2 \mapsto a_2, m_2\mapsto m_2, m_3\mapsto m_3\}
        \end{aligned}$ \\
        \midrule
        $Q_2$ & $
        \begin{aligned}
            &(a_1\wedge m_3) \otimes 377 +_{\max}
            (a_2\wedge m_3) \otimes 377 +_{\max}
            (a_3\wedge m_3) \otimes 377 \\ &+_{\max}
            (a_1 \wedge m_2) \otimes 322 +_{\max}
            (a_3 \wedge m_2) \otimes 322
            +_{\max} (a_4 \wedge m_1) \otimes 176
        \end{aligned}$ & $
        \begin{aligned}
            &(\textcolor{red}{a_{1,3}} \wedge \textcolor{blue}{w_{3}}) \vee (a_2 \wedge \textcolor{blue}{w_{3}}) \vee ( \textcolor{red}{a_{1,3}} \wedge  \textcolor{goodgreen}{w_2}) \vee (\textcolor{orange}{w_1}) \qquad \ell = \{\textcolor{red}{a_{1,3}} \mapsto (a_1 \vee a_3),\\& \textcolor{orange}{w_{1}} \mapsto ((a_4\wedge m_1)\otimes 176), \textcolor{goodgreen}{w_2} \mapsto (m_2 \otimes 322),  \textcolor{blue}{w_{3}} \mapsto (m_3 \otimes 377), a_2\mapsto a_2\}
        \end{aligned}$ \\
        \bottomrule
    \end{tabular}
\label{tab:lineages_lifted}
\end{table*}

For an aggregate query $\langle \alpha,\gamma, Q\rangle$, let 
 $Q(D) = \{\bm t_1, \ldots, \bm t_n\}$. The lineage 
 of the query over $D$ is a \tensorset\ expression of the form:
{   \begin{align*}
\lin(\langle \alpha,\gamma, Q\rangle,D) = 
\sum\nolimits^M_i \lin(Q,D,\bm t_i) \otimes \gamma(\bm t_i),  
\end{align*}    }
where the summation uses the $+_M$ operator of the monoid $M$ associated with $\alpha$. For  $\minagg$ and $\maxagg$, this lineage is a semimodule expression since their corresponding monoids are idempotent. 
\nop{For convenience simplicity, we consider any fact $f$ such that $v(f)$ is not in the lineage to be exogenous. }

\begin{example}
    Table~\ref{tab:lineages_lifted} shows the lineage of the queries from Figure~\ref{fig:running_example}: 
    This is a positive DNF formula for $Q_1$ and a semimodule expression with the Max monoid for $Q_2$.
    The Boolean variables correspond to database facts, e.g., $a_1$ corresponds to  Cast('Brad Pitt').
\end{example}

\paragraph{Banzhaf and Shapley Values}
We use such values to measure the contribution of database facts to query results~\cite{TheShapleyValueofTuplesinQueryAnswering,Sig24:BanzhafValuesForFactsInQueryAnswering}.

\begin{definition}[Banzhaf Value]
Given a Boolean or aggregate query $Q$ and a database $D = D_x \cup D_n$, the Banzhaf value of an endogenous fact $f \in D_n$ is defined as:
{    \begin{align*}
    \banz(Q,f,D) \overset{def}{=} \sum_{D'\subseteq D_n\setminus \{f\}} Q(D' \cup \{f\} \cup D_x) - Q(D'\cup D_x)
\end{align*}     }

\end{definition}

\begin{definition}[Shapley Value]
    Let $C_{\bm Y}^{\bm X} = \frac{|\bm Y|!(|\bm X|-|\bm Y|-1)!}{|\bm X|!}$ for variable sets $\bm Y, \bm X$ (when $\bm X$ is clear from context, we just use  $ C_{\bm Y}$).
    Given a Boolean or aggregate query $Q$ and a database $D = D_x \cup D_n$, the Shapley value of an endogenous fact $f \in D_n$ is defined as:
{   \begin{align*}
    \shap(Q,f,D) \overset{def}{=}& \sum_{Y\subseteq D_n \setminus \{f\}} C^{\bm D_n}_{Y}\cdot \big(Q(Y \cup \{f\} \cup D_x) - Q(Y \cup D_x)\big)
\end{align*}    }

\end{definition} 

Banzhaf and Shapley values sum up marginal contributions of facts to query results over sub-databases. They differ in that the former weigh this contribution based on sub-database sizes. These weights guarantee that the sum of Shapley values over all facts equals the query result. This is not the case for Banzhaf. 

Given a Boolean formula or a \nop{semimodule expression} \tensorset\ $\Psi$ over a variable set $\bm X$, \nop{the Banzhaf value of $x \in \bm X$ is:}the Banzhaf and Shapley values of $x \in \bm X$ are:
{   \begin{align}
\label{eq: banzhaf value formula and semimodule}
\banz(\Psi,x) &= \sum_{\bm Y \subseteq \bm X \setminus \{x\}}
\Psi[\bm Y \cup \{x\}]- \Psi[\bm Y] \\
\label{eq: shapley value formula and semimodule} 
\shap(\Psi,x) &= \sum_{\bm Y \subseteq \bm X \setminus \{x\}}
C^{\bm X}_{\bm Y} \cdot (\Psi[\bm Y \cup \{x\}]- \Psi[\bm Y])
\end{align}    }

In case of a Boolean formula $\varphi$, the Banzhaf value of a variable $x$ can be computed based on the model counts of the formulas obtained from $\varphi$ by substituting $x$ by the constants $1$ and $0$~\cite{Sig24:BanzhafValuesForFactsInQueryAnswering}:
{   \begin{align}
\banz(\varphi,x) = \#\varphi[x:=1] - \#\varphi[x:=0]
\end{align}    }

Prior work connected the Banzhaf and Shapley values for database facts and those of Boolean variables ~\cite{TheShapleyValueofTuplesinQueryAnswering}. Given a Boolean or aggregate query $Q$, a database $D = D_x \cup D_n$ and a fact $f\in D_n$:
{   \begin{align*}
\banz(Q,D,f) =& \banz(\lin(Q,D), v(f))\\
\shap(Q,D,f) =& \shap(\lin(Q,D), v(f))
\end{align*}    }

For non-Boolean queries, the Banzhaf and Shapley values are defined with respect to each output tuple. 

\begin{example}
    Consider the lineage $\varphi$ of $Q_1$ in Table~\ref{tab:lineages_lifted}. We sketch the computation of the Banzhaf and Shapley values of the variable $a_1$ representing the fact Cast('Brad Pitt'). Consider the valuation $\theta = \{d_1, m_3\}$. The marginal contribution of $a_1$ on $\theta$ is $\varphi[\theta\cup \{a_1\}] - \varphi[\theta] = 1$ because $\theta$ is not a model for $\varphi$ but $\theta\cup \{a_1\}$ is. The Banzhaf value is achieved by summing up the marginal contributions across all valuations. For Shapley value, the marginal contributions are multiplied by the corresponding coefficients; e.g.,
    the contribution of $a_1$ to $\theta$ is multiplied by  $\frac{2!(7-2-1)!}{7!}$, since $|\theta| = 2$ and $|vars(\varphi)| = 7$.
    Consider now the lineage $\Phi$  of $Q_2$ in Table~\ref{tab:lineages_lifted} and the valuation $\theta' = \{m_3\}$. We have $\Phi[\theta']=0$, since $\theta'$ is not a model of any Boolean formula in the \tensorset $\Phi$. This is an example of the effect of our special treatment of aggregation over an empty set of values. Further, $\Phi[\theta' \cup \{a_1\}]=377$, so the marginal contribution of $a_1$ to the sub-database consisting of only $m_3$ is $377$.
\end{example}


\paragraph{Decomposition Trees}
We next introduce decomposition trees for Boole\-an and semimodule expressions.

\begin{definition}[~\cite{AggregationInProbabalisticDatabasesViaKnowlegeCompilation}] 
 A {\em decomposition tree} (d-tree), is defined recursively as follows:

\begin{itemize}

\item If $v$ is a $0$/$1$ constant or a variable then $v$ is a d-tree.

\item Let $T_{\varphi_1}, \ldots ,T_{\varphi_n}$ be d-trees for pairwise independent formulas $\varphi_1 , \ldots ,\varphi_n$, respectively,
let $T_\varphi$ and $T_\psi$ be d-trees for formulas $\varphi$ and $\psi$, respectively and let $v$ be a variable not in $vars(\varphi)\cup vars(\psi)$, Then,

\begin{center}
\begin{minipage}{0.2\linewidth}
\tikz {
 \node at (3.6,-1)  (n4) {$\oplus$};
\node at (3,-1.6)  (n3) {$T_{\varphi_1}$} edge[-] (n4);
\node at (3.6,-1.6)  (n3) {$...$};
\node at (4.2,-1.6)  (n3) {$T_{\varphi_n}$} edge[-] (n4);

}
\end{minipage}
\hspace{0.5em}
,
\hspace{0.5em}
\begin{minipage}{0.2\linewidth}
\tikz {
 \node at (3.6,-1)  (n4) {$\odot$};
\node at (3,-1.6)  (n3) {$T_{\varphi_1}$} edge[-] (n4);
\node at (3.6,-1.6)  (n3) {$...$};
\node at (4.2,-1.6)  (n3) {$T_{\varphi_n}$} edge[-] (n4);
}
\end{minipage}
\hspace{0.5em}
, and
\hspace{0.5em}
\begin{minipage}{0.2\linewidth}
\tikz {
 \node at (3.6,-1)  (n4) {$\sqcup_v$};
\node at (3.2,-1.6)  (n3) {$T_\varphi$} edge[-] (n4);
\node at (4.0,-1.6)  (n3) {$T_\psi$} edge[-] (n4);
}
\end{minipage}
\end{center}

are d-trees for $\bigvee_{i \in [n]} \varphi_i$, $\bigwedge_{i \in [n]} \varphi_i$, 
and $(v\wedge \varphi) \vee (\neg v\wedge \psi)$. The latter is the Shannon expansion. $\varphi$ and $\psi$ are the $1$ and respectively $0$ branches and $v$ is the condition variable.

\end{itemize}

We next extend d-trees to represent semimodule expressions~\cite{AggregationInProbabalisticDatabasesViaKnowlegeCompilation}:
\begin{itemize}
    \item Every monoid value $m\in \overline{\mathbb{R}}$ is a d-tree.
\item Let $T_{\varphi_1}, \ldots , T_{\varphi_n}$ be d-trees for
pairwise independent Boolean formulas $\varphi_1,\ldots ,\varphi_{n}$,
and $T_{\Phi}$ be a d-tree for a semimodule expression $\Phi$ whose Boolean part is independent from each of $\varphi_1,\ldots ,\varphi_{n}$. Let $T_{\Phi_1}, \ldots , T_{\Phi_n}$ be d-trees for
pairwise independent semimodule expressions $\Phi_1,\ldots ,\Phi_{n}$. Then, 
\begin{center}
\hfill
\begin{minipage}{0.2\linewidth}
\tikz {
 \node at (3.6,-1)  (n4) {$\otimes$};
\node at (2.9,-1.6)  (n3) {$T_{\varphi_1}$} edge[-] (n4);
\node at (3.4,-1.6)  (n3) {$\ldots$};
\node at (3.9,-1.6)  (n3) {$T_{\varphi_n}$} edge[-] (n4);
\node at (4.5,-1.6)  (n3) {$T_{\Phi}$} edge[-] (n4);
}
\end{minipage}
\hfill
and
\begin{minipage}{0.45\linewidth}
\begin{centering}
\tikz {
 \node at (3.6,-1)  (n4) {$\oplus$};
\node at (3.2,-1.6)  (n3) {$T_{\Phi_1}$} edge[-] (n4);
\node at (3.7,-1.6)  (n3) {$\ldots$};
\node at (4.2,-1.6)  (n3) {$T_{\Phi_n}$} edge[-] (n4);

}
\end{centering}
\end{minipage}
\end{center}
are d-trees for $(\bigwedge_{i \in [n]} \varphi_i) \otimes \Phi$ and $\sum_M \Phi_i$ respectively.
\end{itemize}
\end{definition}
The size of a d-tree is its number of nodes.
D-trees can be constructed by iteratively decomposing a Boolean formula or a semimodule expression into simpler parts.

\begin{example} Consider the semimodule expression $(a_1\wedge b_1) \otimes 3 +_M (a_1 \wedge b_2)\otimes 7$.
Using Axiom (3) of semimodules,
we rewrite it as
$(a_1 \otimes (b_1 \otimes 3)) +_M (a_1 \otimes (b_2 \otimes 7))$,
and applying Axiom (1) gives
$a_1 \otimes ((b_1 \otimes 3) +_M (b_2 \otimes 7))$.
This expression corresponds to a d-tree $\otimes(a_1, \oplus(\otimes(b_1, 3), \otimes(b_2, 7)))$.
\end{example}

The notions of variables, valuations and Banzhaf and Shapley values extend naturally to d-trees as they may be interpreted with respect to the Boolean formula represented by the d-tree. We thus use them for formulas and d-trees interchangeably.

\section{Attribution for Union of Conjunctive Queries}
\label{sec: Algorithm}

We present our algorithms \lexaban and \lexashap for computing Banzhaf and Shapley values for UCQs. Starting with the query lineage in DNF (computed by tools such as ProvSQL \cite{senellart2018provsql} or GProM \cite{AF18}), we first compile it into a d-tree using a novel lifting optimization (Section~\ref{subsubsect: lifting}), then introduce gradient-based algorithms (Section~\ref{subsubsect: arithmetic}) to compute Banzhaf/Shapley based on the lifted d-tree.    
\nop{
They are extended to aggregate queries in Section~\ref{sec: aggregates}.
The starting point for our algorithms is the state-of-the-art approach~\cite{Sig24:BanzhafValuesForFactsInQueryAnswering}, which first compiles the query lineage into a d-tree and then computes Banzhaf/Shapley values in one pass over the d-tree. Our algorithms employ two novel optimizations: lifting (Section \ref{subsubsect: lifting}) which yields a more efficient d-tree compilation, and gradient-based computation (Section \ref{subsubsect: arithmetic})  of Banzhaf and Shapley values by evaluating and differentiating arithmetic circuits based on the d-tree. 
}
\nop{Section \ref{subsect: LExaBan} unifies these techniques into a Banzhaf value computation algorithm. Section \ref{subsect: Expected_Banzhaf} extends our algorithm for probabilistic databases and 
Section \ref{sec: Aggregate_extension} extends it to handle aggregate queries.}



\begin{algorithm}[t]
\caption{$\lift$}
\label{alg:Lift}
\begin{algorithmic}[1]
\Require Lifted DNF formula $(\varphi, \ell)$
\Ensure Saturated lifted DNF formula
\While{$(\varphi, \ell)$ is not saturated}
\If{$\varphi$ has a maximal set $\bm V$ of cofactor-equivalent variables with $|\bm V| >1$}
    \State  $(\varphi , \ell) \gets \dislift(\varphi, \ell, \bm V)$
\EndIf
\If{$\varphi$ has a maximal set $\bm V$ of interchangeable variables with $|\bm V| >1$}
  \State $(\varphi, \ell) \gets \conlift(\varphi,\ell, \bm V)$
\EndIf
\EndWhile
\State   \Return $(\varphi, \ell)$
\end{algorithmic}
\end{algorithm}

\subsection{Lifted Compilation of Lineage Into D-tree}
\label{subsubsect: lifting}

\nop{
A Boolean formula may admit multiple equivalent representations, each balancing the size and the computational tractability of model counting and  Banzhaf and Shapley value computation. Knowledge compilation aims to transform succinct yet intractable representations into tractable and reasonably succinct representations. In this work, we consider query lineages as the input representation and d-trees as the target compiled form. Our key insight is that leveraging structural symmetries within the query lineage during compilation reduces both the compilation time and the size of the resulting d-tree. To achieve this, we introduce \emph{lifting}, a technique that identifies conjunctions or disjunctions of variables with identical Banzhaf or Shapley values and replaces them with fresh variables.}

\nop{
\footnote{Unlike query-based lineage factorization~\cite{OlteanuZ12}, which applies uniformly across all lineages of a query, lifting is a fine-grained factorization of the query lineage specific to the data instance. Whereas achieving minimal-size instance-specific factorization of query lineage is computationally expensive~\cite{Neha:MinFactorization:2024}, lifting can be computed efficiently.}
}

    

Consider two disjoint sets $\bm X$ and $\bm Y$ of Boolean variables. 
A {\em lifted formula} over $\bm X \cup \bm Y$ is a pair 
$(\varphi, \ell)$, where $\varphi \in \bool(\bm Y \cup \bm X)$ is a formula over $\bm Y \cup \bm X$ and $\ell: \vars(\varphi) \rightarrow \bool(\bm X)$
maps each variable in $\varphi$ to a formula over $\bm X$.  
The {\em inlining} $\inline(\varphi, \ell)$ of a lifted formula 
is obtained by replacing each variable $y$ in $\varphi$ by 
the formula $\ell(y)$.
A {\em lifted DNF formula} is a lifted formula $(\varphi, \ell)$,
where $\varphi$ is in DNF.
In the rest of this section, we refer only to positive DNF formulas.

\begin{example}
\label{ex:inlining}
Consider the DNF formulas
$\varphi_0 = (x_1 \wedge x_2 \wedge x_3) \vee (x_4 \wedge x_5 \wedge x_3)$, 
$\varphi_1 = (y_1 \wedge y_2) \vee (y_3 \wedge y_2)$, and 
$\varphi_2 = (y_4 \wedge y_2)$ and the functions 
$\ell_0 = \{x$ $\mapsto x | x \in \vars(\varphi_0)\}$,
$\ell_1 = \{y_1$ $\mapsto (x_1\wedge x_2),$ $y_2$ $\mapsto x_3,$ 
$y_3\mapsto$ $(x_4\wedge x_5)\}$, and
$\ell_2 = \{y_2\mapsto x_3, y_4 \mapsto  \big((x_1\wedge x_2) \vee 
(x_4 \wedge x_5)\big) \}$.    
We have $\inline(\varphi_0, \ell_0) = \varphi_0$, 
$\inline(\varphi_1, \ell_1)$ $=$ $\big((x_1 \wedge x_2) \wedge x_3 \big)  \vee \big((x_4 \wedge x_5) \wedge x_3 \big)$, and
$\inline(\varphi_2, \ell_2)$ $=$ $\big(x_1 \wedge x_2) \vee (x_4 \wedge x_5)\big) \wedge x_3$.
The latter two inlinings are equivalent to the formula $\varphi_0$.
\end{example}

Consider a DNF formula $\varphi = \bigvee_{i \in n} C_i$, where each 
clause $C_i$ is a conjunction of variables.
For a clause $C_i$ and variable $x$, we denote by $C_i \setminus \{x\}$
the clause that results from $C_i$ by omitting $x$.
We call the set $\{C_i \setminus \{x\}\mid i \in [n], x \in C_i\}$ the {\em cofactor} of $x$.
Two variables in $\varphi$ are called {\em cofactor-equivalent} if they have the same cofactor. 
Two variables $x$ and $y$ are called {\em interchangeable} if for each 
clause $C_i$, it holds that $x$ appears in $C_i$ if and only if 
$y$ appears in $C_i$.
A variable set $\bm V$ is called a {\em maximal set of cofactor-equivalent (interchangeable)} variables 
if every pair of variables in $\bm V$ is 
cofactor-equivalent (interchangeable)
and this does not hold for any superset of $\bm V$. 
A lifted DNF formula $(\varphi, \ell)$ is called {\em saturated} if it contains neither a set of cofactor-equivalent variables nor a set of interchangeable variables of size greater than one.

\begin{example}
\label{ex:cofactor_interchange}
Consider again the formulas $\varphi_0$, $\varphi_1$, and $\varphi_2$
from Example~\ref{ex:inlining}.
The sets $\{x_1, x_2\}$ and $\{x_4, x_5\}$ are both maximal sets 
of interchangeable variables in $\varphi_0$.
The set $\{y_1, y_3\}$ is a maximal set of cofactor-equivalent variables in $\varphi_1$ with cofactor $\{y_2\}$. Whereas the formulas $\varphi_0$ and $\varphi_1$ are not saturated, the formula $\varphi_2$ is.
\end{example}

\nop{
\begin{example}
Consider the formulas $\varphi_1 = a_1$, $\varphi_2 = (a_2\vee b_2)$, $\varphi_3 = (a_3\wedge b_3)\vee (a_4 \wedge b_3) \vee (a_3 \wedge b_5)$, $\varphi_4 = (a_3 \wedge (c_1 \vee c_2))$, and $\varphi_5 = a_2$.
    The formula $\varphi = (\varphi_1 \wedge \varphi_2) \vee (\varphi_1 \wedge \varphi_4)$ is not in DNF but in lifted DNF. 
    The formula $\varphi_2 = (\varphi_3 \wedge \varphi_1)$ is not in lifted DNF, since $\varphi_3$ is not read-once.
    Likewise, $\varphi_3 = (\varphi_1 \wedge \varphi_2) \vee (\varphi_1 \wedge \varphi_5)$ is not in lifted DNF, since $\varphi_2$ and $\varphi_5$ share the variable $a_2$.
\end{example}
}

\paragraph{Lifting DNF Formulas}
We describe how a DNF formula can be
translated into a saturated lifted DNF formula. 
We first introduce some further notation. 
Consider a lifted DNF formula 
$(\varphi, \ell)$ over $\bm X \cup \bm Y$, where
$\varphi = \bigwedge_{i \in [n]} C_i$.
Let $\bm V = \{
y_1 , \ldots , y_k\}$ be a set of
cofactor-equivalent variables in $\varphi$, where each variable has 
cofactor $\{N_{1}, \ldots ,N_{p}\}$. 
Let $C_{i_1}, \ldots , C_{i_m}$ be all clauses in $\varphi$
containing variables from $\bm V$.
We define  
$\dislift(\varphi,\ell, \bm V)$ to be the lifted DNF formula
$(\varphi',\ell')$, where: (1) $\varphi'$ results from $\varphi$
by omitting the clauses $C_{i_1}, \ldots , C_{i_m}$ and adding the new clauses $y \wedge N_1, \ldots, y \wedge N_p$ for a fresh variable 
$y \in \bm Y \setminus \vars(\varphi)$;
(2) the function $\ell'$ is defined by $\ell'(y) = \bigvee_{i \in [k]} \ell(y_i)\}$ and 
$\ell'(y') = \ell(y')$ for all $y' \in \vars(\varphi')$ with $y' \neq y$. 
Now, assume that $\bm V = \{y_1 , \ldots , y_k\}$ consists of interchangeable variables in $\varphi$. 
We denote by $\conlift(\varphi,\ell, \bm V)$ the lifted DNF formula 
$(\varphi',\ell')$, where: (1) $\varphi' = C_1' \wedge \cdots \wedge 
C_n'$ such that, given a fresh variable $y \in \bm Y \setminus \vars(\varphi)$, each $C_i'$ results from $C_i$  by replacing the variables $y_1, \ldots ,y_k$ by $y$; (2) $\ell'$ is defined by $\ell'(y) = \bigwedge_{i \in [k]} \ell(y_i)\}$ and 
$\ell'(y') = \ell(y')$ for all $y' \in \vars(\varphi')$ with $y' \neq y$.


\nop{
\begin{example}
Consider the DNF formula $\varphi$ $=$ 
$(x \wedge x_1 \wedge x_2) \vee  (x \wedge  x_1 \wedge x_3) \vee
(y \wedge x_1 \wedge x_2) \vee (y \wedge x_1 \wedge x_3) \vee
(z \wedge x_1)$, where $x$ and $y$ have the same neighborhood 
$\{\{x_1,$ $x_2\},$ $\{x_1,$ $x_3\}\}$. It holds 
$\dislift(\varphi,$ $\{x,y\})$ $=$ 
$\big((x\vee y) \wedge x_1 \wedge x_2\big) \vee 
\big((x\vee y) \wedge  x_1 \wedge  x_3\big) \vee \big(z \wedge x_1 \big)$.

Now, consider the DNF formula $\varphi$ $=$ 
$(x \wedge y \wedge x_1) \vee (x \wedge y \wedge x_2) \vee (x \wedge y \wedge x_3)$, where the variables $x$
and $y$ are interchangeable. We have 
$\conlift(\varphi, \{x,y\})$ $=$ $\big((x\wedge y) , x_1\big) \vee 
\big((x\wedge y) \wedge  x_2\big) \vee \big((x\wedge y) \wedge  x_3\big)$.
\end{example}
}

The function $\lift$ in Algorithm~\ref{alg:Lift} transforms a lifted DNF formula $\varphi$ into a saturated one.
It repeatedly invokes $\dislift$ or $\conlift$ while $\varphi$ contains a variable set $\bm V$ with $|\bm V| >1$, such that  all variables in $\bm V$ are either cofactor-equivalent or interchangeable, respectively.
Intuitively, both cofactor-equivalent and interchangeable variables exhibit a symmetry in the structure of the formula. Consequently, we can simplify the formula without loss of information.

\begin{proposition}
\label{prop:lifting_properties}
Let $\varphi$ be a DNF formula, $\ell$ the identity function on $\vars(\varphi)$, and $(\varphi', \ell')$ the output of $\lift(\varphi, \ell)$ in Algorithm~\ref{alg:Lift}. Then, $(\varphi', \ell')$ is a saturated lifted formula such that: (1) $\inline(\varphi', \ell')$ is equivalent to $\varphi$. (2) $\ell'(x)$ and $\ell'(y)$ are independent formulas for each distinct variables $x,y \in \vars(\varphi')$. (3) $\ell'(x)$ is a read-once formula for each 
$x \in \vars(\varphi')$.
\end{proposition}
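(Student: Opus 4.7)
The plan is to prove all three properties simultaneously by induction on the number of iterations executed by the while loop in Algorithm~\ref{alg:Lift}, maintaining the following loop invariant on the current lifted DNF formula $(\varphi, \ell)$: (a) $\inline(\varphi, \ell)$ is equivalent to the original input formula; (b) for any distinct $x, y \in \vars(\varphi)$, the formulas $\ell(x)$ and $\ell(y)$ have disjoint variable sets; (c) each $\ell(x)$ is read-once. The base case is immediate: initially $\ell$ is the identity, so each $\ell(x) = x$ is a single variable, trivially read-once, with pairwise disjoint variable sets, and $\inline(\varphi, \ell) = \varphi$.

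For the inductive step on $\dislift(\varphi, \ell, \bm V)$ with $\bm V = \{y_1, \ldots, y_k\}$ sharing cofactor $\{N_1, \ldots, N_p\}$, note that the removed clauses are exactly $\bigvee_{i \in [k], j \in [p]} (y_i \wedge N_j)$, which factorizes as $\bigl(\bigvee_{i} y_i\bigr) \wedge \bigl(\bigvee_{j} N_j\bigr)$, matching the new clauses $\bigvee_{j}(y \wedge N_j)$ once $y$ is inlined to $\bigvee_{i} \ell(y_i)$. So inlining equivalence is preserved. The new mapping $\ell'(y) = \bigvee_{i} \ell(y_i)$ is read-once because, by the inductive hypothesis, each $\ell(y_i)$ is read-once and they have pairwise disjoint variable sets. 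The variable set of $\ell'(y)$ is the disjoint union of those of the $\ell(y_i)$, so it remains disjoint from every $\ell'(y') = \ell(y')$ for $y' \in \vars(\varphi') \setminus \{y\}$. The case of $\conlift(\varphi, \ell, \bm V)$ is symmetric: interchangeable variables occur in exactly the same clauses, so replacing the subclause $\bigwedge_{i} y_i$ by a fresh $y$ preserves logical equivalence after inlining, and $\ell'(y) = \bigwedge_{i} \ell(y_i)$ is read-once and independent from the other $\ell'(y')$ by the same disjoint-variable-set argument.

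Termination follows from the fact that each invocation of $\dislift$ or $\conlift$ replaces $k \geq 2$ variables of $\varphi$ by a single fresh variable, strictly decreasing $|\vars(\varphi)|$. Upon exit, by the loop condition, $\varphi'$ contains neither a maximal cofactor-equivalent nor a maximal interchangeable set of size greater than one, so $(\varphi', \ell')$ is saturated; invariants (a)--(c) then yield the three conclusions of the proposition.

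The main delicate point is the equivalence argument for $\dislift$: one must carefully track that the factorization $\bigvee_{i,j}(y_i \wedge N_j) \equiv \bigl(\bigvee_i y_i\bigr) \wedge \bigvee_j N_j$ remains valid after inlining. This relies crucially on invariant (b), since disjointness of $\vars(\ell(y_i))$ across $i$, and of each $\vars(\ell(y_i))$ from $\vars(\inline(N_j, \ell))$, prevents variable collisions that could otherwise invalidate the distributivity step or destroy the read-once property of $\ell'(y)$. The remaining bookkeeping (showing saturation on exit and that no new interchangeable/cofactor-equivalent sets spurious to the original structure are introduced midway) is routine once (a)--(c) are in place.
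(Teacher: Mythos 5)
Your proof is correct and takes essentially the same route as the paper's: an induction over the iterations of the while loop, maintaining exactly the invariants (inlining equivalence via distributivity, pairwise disjoint variable sets of the $\ell$-images, and read-once-ness of each image), with the $\conlift$ case handled symmetrically. The only differences are cosmetic — you argue independence directly from disjoint variable sets where the paper argues by contradiction, and you add an explicit termination argument that the paper leaves implicit.
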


\begin{proof}
Let $\varphi$ be a DNF formula, $\ell$ the identity function on $\vars(\varphi)$, 
and $(\varphi', \ell')$ the output of $\lift(\varphi, \ell)$ in Algorithm~\ref{alg:Lift}.
Consider the sequence 
$(\varphi, \ell)$ $=$ $(\varphi_1, \ell_1),$ $\ldots,$ $(\varphi_n, \ell_n)$ $=$ 
$(\varphi', \ell')$, where $(\varphi_2, \ell_2),$ $\ldots,$ $(\varphi_{n-1}, \ell_{n-1})$
are the intermediate lifted formulas constructed by the algorithm.
It follows from the stopping condition at Line~1 of the algorithm 
that $(\varphi', \ell')$ is saturated. 

In the following, we  show by induction on $i$ that Properties (1)--(3)
from Proposition~\ref{prop:lifting_properties}
hold for each $(\varphi_i, \ell_i)$ with $i \in [n]$, i.e., (1) $\inline(\varphi_i, \ell_i)$ is equivalent to $\varphi$, 
(2) $\ell_i(x)$ and $\ell_i(y)$ are independent formulas for each distinct variables $x,y \in \vars(\varphi_i)$, and
(3) $\ell_i(x)$ is a read-once formula for each 
$x \in \vars(\varphi_i)$.

\smallskip

\noindent {\it Base Case}:
Since $\ell_1$ is the identity function on $\vars(\varphi)$
and $\varphi_1 = \varphi$, Properties (1)--(3)
obviously hold for  $(\varphi_1, \ell_1)$.

\smallskip

\noindent {\it Induction Hypothesis}:
We assume that Properties (1)--(3)
hold for $(\varphi_i, \ell_i)$ for some $i \in [n-1]$.

\smallskip

\noindent {\it Induction Step}: 
We show that Properties (1)--(3) hold for 
$(\varphi_{i+1}, \ell_{i+1})$. Without loss of generality, we assume that $(\varphi_{i+1}, \ell_{i+1})$ results from $(\varphi_{i}, \ell_{i})$  by the application of the $\dislift$ operator. 
The case of the $\conlift$ operator is handled analogously. 
Let $\bm V = \{y_1 , \ldots , y_k\}$ be a maximal set of cofactor-equivalent variables 
in $\varphi_i$ such that each variable $y_i$ has cofactor $\{N_{1}, \ldots ,N_{p}\}$.
Let $C_{i_1}, \ldots , C_{i_m}$ be all clauses in $\varphi_{i}$
containing a variable from $\bm V$. By assumption, $\varphi_{i+1}$ results from $\varphi_i$ by omitting the clauses $C_{i_1}, \ldots , C_{i_m}$ and adding the clauses $y \wedge N_1, \ldots , y \wedge N_p$ for a fresh variable $y$; 
the function $\ell_{i+1}$ is defined by $\ell_{i+1}(y) = \bigvee_{j \in [k]} \ell_i(y_j)$ and 
$\ell_{i+1}(y') = \ell_i(y')$ for all $y' \in \vars(\varphi_{i+1})$ with $y' \neq y$.

We show that Property (1) holds for $(\varphi_{i+1},$ $\ell_{i+1})$, i.e., $\inline(\varphi_{i+1},$ $\ell_{i+1})$ is equivalent to $\varphi$. Let $\varphi_{i}'$ be the formula obtained from $\varphi_i$ by replacing $y$ with $\bigvee_{j \in [k]} y_j$.
It follows from the distributivity of conjunction over disjunction that 
$\varphi_{i}'$ is equivalent to $\varphi_{i}$. Hence, 
$\inline(\varphi_i', \ell_i)$ is equivalent to $\inline(\varphi_i, \ell_i)$.
By induction hypothesis, $\inline(\varphi_i, \ell_i)$, hence  $\inline(\varphi_i', \ell_i)$, is equivalent to $\varphi$.
By construction, $\inline(\varphi_i', \ell_i)$ $=$ $\inline(\varphi_{i+1}, \ell_{i+1})$.
It follows that $\inline(\varphi_{i+1}, \ell_{i+1})$ is equivalent to $\varphi$.

We prove that Property (2) holds for $(\varphi_{i+1}, \ell_{i+1})$,
i.e., $\ell_{i+1}(x)$ and $\ell_{i+1}(y)$ are independent formulas for each distinct variables $x,y \in \vars(\varphi_{i+1})$. For the sake of contradiction,
assume that Property (2) does not hold for $(\varphi_{i+1}, \ell_{i+1})$. Since Property (2) holds for 
$(\varphi_{i}, \ell_{i})$ (induction hypothesis), there must be a variable $x \in \vars(\varphi_{i+1})$ 
such that $z$ appears in 
$\ell_{i+1}(x)$ and $\ell_{i+1}(y)$, where $y$ is the fresh variable in $\varphi_{i+1}$.
From the construction of $\ell_{i+1}$, it follows that 
there must be a variables $z' \in \vars(\varphi_i)$ such that $z$ appears in $\ell_{i}(x)$ and $\ell_{i}(z')$. This, however, contradicts our assumption that Property (2) holds for $(\varphi_{i}, \ell_{i})$.

Now, we prove that Property (3) holds for $(\varphi_{i+1}, \ell_{i+1})$, i.e., $\ell_{i+1}(x)$ is a read-once formula for each $x \in \vars(\varphi_{i+1})$. Consider a variable $x \in \vars(\varphi_{i+1})$. If $x$ is not the fresh variable $y$, then  $x \in \vars(\varphi_{i})$
and $\ell_i(x) =\ell_{i+1}(x)$. By induction hypothesis, $\ell_{i}(x)$ is a read-once formula.
If $x$ is equal to the fresh variable $y$, then 
$\ell_{i+1}(x) = \bigvee_{j \in [k]} \ell_{i}(y_j)$. 
By induction hypothesis, the formulas 
$\ell_{i}(y_1), \ldots , \ell_{i}(y_k)$ are independent and read-once.
Hence, $\ell_{i+1}(x)$ is a read-once formula.  
\end{proof}




\begin{example}
\label{example:lifted lineage}
We explain how  the procedure $\lift$  transforms the lineage 
$\varphi_0$ of the query $Q_1$ depicted in Table \ref{tab:lineages_lifted} (top left).
Let $\ell_0$ be the identity function that maps 
every variable in $\varphi_0$ to itself.
The variables $d_1$ and $d_2$ have the same cofactor 
$\big\{\{a_1,m_3\},$ $\{a_1,m_2\},$ $\{a_3,m_3\},$
$\{a_2,m_3\},$ $\{a_3,m_2\}\big\}$.
The procedure executes $\dislift(\varphi_0, \ell_0, \{d_1,d_2\})$,
which returns the lifted formula $(\varphi_1, \ell_1)$, where
$
\varphi_1 =\  (d_{1,2} \wedge a_1 \wedge  m_3) \vee 
(d_{1,2} \wedge a_1 \wedge m_2)
\vee (d_{1,2} \wedge a_3 \wedge m_3) \vee
(d_{1,2} \wedge a_2 \wedge  m_3) \vee 
(d_{1,2} \wedge a_3 \wedge m_2),
$
and $\ell_1 = \{d_{1,2} \mapsto (d_1 \vee d_2)\} \cup \{x \mapsto x| x \in \vars(\varphi) \text{ and } x \neq d_{1,2}\}$.
In $\varphi_1$, the variables $a_1$ and $a_3$ have the same 
cofactor $\{\{d_{1,2},$ $m_3\},$ $\{d_{1,2},$ $m_2\}\}$.
The procedure executes $\dislift(\varphi_1,\ell_1, \{a_1,a_3\})$ and obtains $(\varphi_2, \ell_2)$
shown in Table~\ref{tab:lineages_lifted} (top right).
\nop{, where 
\begin{align*}
\varphi_2 =  ({\color{red}d_{1,2}} \wedge {\color{green}(a_{1,3}} \wedge m_3) \vee 
({\color{red}d_{1,2}} \wedge {\color{green}(a_{1,3}}
\wedge m_2) \vee
(\textcolor{red}{d_{1,2}} \wedge a_2 \wedge m_3), 
\end{align*}
and $\ell_2 = \{d_{1,2} \mapsto d_1 \vee d_2, a_{1,3}\mapsto (a_1 \vee a_3)\}$.}
The formula $\varphi_2$ does not have two or more variables that are 
cofactor-equivalent or interchangeable. Hence, $(\varphi_2, \ell_2)$ is saturated. 
\end{example}

\nop{
\begin{definition}
A formula $\varphi$ is in {\em lifted DNF} if it is of the form
 \begin{align}
\varphi =& \bigvee_{i=1}^m \left( \bigwedge_{j=1}^{n_i} \varphi_{i,j} \right)
\label{eq: Lifted_formula}
\end{align}
where each $\varphi_{i,j}$ is read-once and any two sub-formulas $\varphi_{i_1,j_1}$ and $\varphi_{i_2,j_2}$ are independent or syntactically the same formula.
\end{definition}

We call each sub-formula 
$\bigwedge_{j=1}^{n_i} \varphi_{i,j}$ of a lifted
DNF formula a {\em clause}.
By definition, every formula in DNF is also in lifted DNF. However, there are lifted DNF formulas that are not in DNF, as illustrated next:
\begin{example}
    Consider the formulas $\varphi_1 = a_1$, $\varphi_2 = (a_2\vee b_2)$, $\varphi_3 = (a_3\wedge b_3)\vee (a_4 \wedge b_3) \vee (a_3 \wedge b_5)$, $\varphi_4 = (a_3 \wedge (c_1 \vee c_2))$, and $\varphi_5 = a_2$.
    The formula $\varphi = (\varphi_1 \wedge \varphi_2) \vee (\varphi_1 \wedge \varphi_4)$ is not in DNF but in lifted DNF. 
    The formula $\varphi_2 = (\varphi_3 \wedge \varphi_1)$ is not in lifted DNF, since $\varphi_3$ is not read-once.
    Likewise, $\varphi_3 = (\varphi_1 \wedge \varphi_2) \vee (\varphi_1 \wedge \varphi_5)$ is not in lifted DNF, since $\varphi_2$ and $\varphi_5$ share the variable $a_2$.
\end{example}

\paragraph{Translation from DNF to Succinct Lifted DNF}
We describe how a DNF formula can be systematically 
transformed into a lifted DNF formula, where the number of clauses and the number of conjuncts in each clause are reduced. First, we introduce some notation. For ease of presentation, we represent each clause in a lifted DNF formula by the set of its conjuncts and a lifted 
DNF formula by the set of its clauses. 
Consider a lifted DNF formula $\varphi =\{C_1, \ldots, C_n\}$ consisting of the clauses $C_1, \ldots ,C_n$. 
For a variable $x$, we call the set $\{C_i - \{x\}\mid i \in [n], x \in C_i\}$ the {\em neighborhood} of $x$.
Two variables in $\varphi$ are called {\em neighborhood-equivalent} if they have the same neighborhood. 
We say that two variables $x$ and $y$ are {\em interchangeable} if for each 
clause $C_i$, it holds $x \in C_i$ if and only if $y \in C_i$.
A variable set $\bm V$ is called a {\em maximal set of neighborhood-equivalent (interchangeable)} variables 
if every pair of variables in $\bm V$ is 
neighborhood-equivalent (interchangeable)
and this does not hold for any superset of $\bm V$. 
A lifted DNF formula is called {\em saturated} if it contains neither a set of neighborhood-equivalent variables nor a set of interchangeable variables of size greater than one.
Consider a set $\bm V = \{x_1 , \ldots , x_\ell\}$
of neighborhood-equivalent variables in $\varphi$, where each variable has 
neighborhood $\{C_{j_1}, \ldots ,C_{j_m}\}$. 
Let $C_{i_1}, \ldots , C_{i_k}$ be the clauses in $\varphi$
containing variables from $\bm V$.
We define  
$\dislift(\varphi,\bm V)$ to be the lifted DNF formula 
$\{C_1, \ldots, C_n\}$ $\setminus$ $\{C_{i_1}, \ldots , C_{i_k}\}$ $\cup$ $\{C_{j_1} \cup \{\bigvee_{i\in [\ell]} x_i\}, \ldots ,C_{j_m} \cup \{\bigvee_{i\in [\ell]} x_i\} \}$.
Now, assume that $\bm V = \{x_1 , \ldots , x_\ell\}$ consists 
of interchangeable variables in $\varphi$. 
We denote by $\conlift(\varphi,\bm V)$ the lifted DNF formula $\{C_1', \ldots, C_n'\}$, where each 
$C_i'$ is defined as follows: if $C_i$ does not contain the variables in $\bm V$, then $C_i' = C_i$, otherwise $C_i' = C_i\setminus\{x_1, \ldots ,x_\ell\} \cup \{\bigwedge_{i \in [\ell]} x_i\}$. 

\begin{example}
Consider the DNF formula $\varphi$ $=$ 
$(x \wedge x_1 \wedge x_2) \vee  (x \wedge  x_1 \wedge x_3) \vee
(y \wedge x_1 \wedge x_2) \vee (y \wedge x_1 \wedge x_3) \vee
(z \wedge x_1)$, where $x$ and $y$ have the same neighborhood 
$\{\{x_1,$ $x_2\},$ $\{x_1,$ $x_3\}\}$. It holds 
$\dislift(\varphi,$ $\{x,y\})$ $=$ 
$\big((x\vee y) \wedge x_1 \wedge x_2\big) \vee 
\big((x\vee y) \wedge  x_1 \wedge  x_3\big) \vee \big(z \wedge x_1 \big)$.

Now, consider the DNF formula $\varphi$ $=$ 
$(x \wedge y \wedge x_1) \vee (x \wedge y \wedge x_2) \vee (x \wedge y \wedge x_3)$, where the variables $x$
and $y$ are interchangeable. We have 
$\conlift(\varphi, \{x,y\})$ $=$ $\big((x\wedge y) , x_1\big) \vee 
\big((x\wedge y) \wedge  x_2\big) \vee \big((x\wedge y) \wedge  x_3\big)$.
\end{example}

\begin{algorithm}[t]
\caption{$\lift$}
\label{alg:Lift}
\begin{algorithmic}[1]
\Require Lifted DNF formula $\varphi$
\Ensure Saturated lifted DNF formula
\While{$\varphi$ is not saturated}
\If{$\varphi$ has a maximal set $\bm V$ of neighborhood-equivalent variables with $|\bm V| >1$}
    \State  $\varphi \gets \dislift(\varphi, \bm V)$
\EndIf
\If{$\varphi$ has a maximal set $\bm V$ of interchangeable variables with $|\bm V| >1$}
  \State $\varphi \gets \conlift(\varphi, \bm V)$
\EndIf
\EndWhile
\State   \Return $\varphi$
\end{algorithmic}

\end{algorithm}

The function $\lift$ described in Algorithm~\ref{alg:Lift} transforms any given lifted DNF formula into a saturated lifted DNF formula by repeatedly calling the function $\dislift$ or $\conlift$ as long as the formula contains a set 
$\bm V$ of variables with $|\bm V| >1$, such that  all variables in $\bm V$ are either neighborhood-equivalent or interchangeable, respectively.

\begin{example}
\label{example:lifted lineage}
We explain how  the function $\lift$  transforms the lineage $\varphi$ of the query $Q_1$ from Table \ref{tab:lineages_lifted} (top left).
The variables $d_1$ and $d_2$ have the same neighborhood 
$\big\{\{a_1,m_3\},$ $\{a_1,m_2\},$ $\{a_3,m_3\},$
$\{a_2,m_3\},$ $\{a_3,m_2\}\big\}$.
We execute $\dislift(\varphi, \{d_1,d_2\})$ and obtain
\begin{align*}
\varphi_1 =\ & ({\color{red}d_{1,2}} \wedge a_1 \wedge  m_3) \vee 
(\textcolor{red}{d_{1,2}} \wedge a_1 \wedge m_2)
\vee (\textcolor{red}{d_{1,2}} \wedge a_3 \wedge m_3) \vee \\
& ({\color{red}d_{1,2}} \wedge a_2 \wedge  m_3) \vee 
(\textcolor{red}{d_{1,2}} \wedge a_3 \wedge m_2),
\end{align*}
where $d_{1,2}$ stands for $(d_1 \vee d_2)$.
In $\varphi_1$, the variables $a_1$ and $a_3$ have both the 
neighborhood $\{\{d_{1,2}, m_3\}, \{d_{1,2}, m_2\}\}$.
We execute $\dislift(\varphi_1, \{a_1,a_3\})$ and obtain 
the formula 
\begin{align*}
\varphi_2 =  ({\color{red}d_{1,2}} \wedge {\color{green}(a_{1,3}} \wedge m_3) \vee 
({\color{red}d_{1,2}} \wedge {\color{green}(a_{1,3}}
\wedge m_2) \vee
(\textcolor{red}{d_{1,2}} \wedge a_2 \wedge m_3), 
\end{align*}
where $a_{1,3}$ abbreviates $(a_1 \vee a_3)$.
The formula $\varphi_2$ does not have two variables that are neighborhood-equivalent or interchangeable. Hence, it is saturated. 
\end{example}
}

\nop{
\paragraph{Translation from DNF to Succinct Lifted DNF}
Algorithm \ref{alg:Lift} outlines the procedure 
Lift that transforms a DNF formula into an equivalent lifted DNF formula with a small number of variables. 

Given a DNF $\varphi$, lines \ref{alg_line: Lift: recursive_start}-\ref{alg_line: Lift: recursive_end} recursively call the procedure of lifting until we reach a final lifted form $\varphi^L$. 
The lifting procedure begins by identifying sets of variables that can be grouped to conjunction or disjunction. In the following, we will refer to a DNF clause as the set of variables the appear in it, and to a DNF formula as a set of its clauses. For a variable $x$ and a DNF formula $\varphi$ we define the neighbor set of x as the $N_x = \{clause\setminus\{x\} \mid x\in clause, clause\in \varphi\}$. Lines \ref{alg_line: Lift: collect_neighbors_start}-\ref{alg_line: Lift: collect_neighbors_end} generate the neighbor set for each variable.
Line \ref{alg_line: Lift: group_disjunction} generates disjunction of sets of variables if their neighbors sets are identical. 
Line \ref{alg_line: Lift: group_conjunction} generates conjunction of sets of variables if they appear in the same clauses.

The algorithm then proceeds to update the formula: For each identified disjunction $d_i = \bigvee v_i$ s.t. $|d_i| > 1$ the procedure in line \ref{alg_line: replace_disjunction_group} selects an arbitrary variable $v_a\in d_i$, removes any clauses holding variables from $v_i\in \vars(d_i)\setminus \{v_a\}$, and replaces $v_a$ with $d_i$.
For each identified conjunction $c_i = \bigwedge v_i$ s.t. $|c_i| > 1$, the procedure in line \ref{alg_line: replace_conjunction_group} selects an arbitrary variable $v_a\in \vars(c_i)$, removes all other variables $v_i\in \vars(c_i)\setminus \{v_a\}$ from the formula, and replaces $v_a$ with $c_i$.

\begin{example}
\label{example:lifted lineage}
Let us examine the algorithm on the lineage of $Q_1$ from Table \ref{tab:lineages_lifted}.
To achieve the lifted form from the non lifted lineage, we begin by applying the lifted procedure. The procedure collects the neighbor sets for each variable. In this example, the neighbor set of $d_1$ is: \begin{align*}
    N_{d_1} = \big\{\{a_1,m_2\}, \{a_1,m_3\}, \{a_2,m_2\}, \{a_2,m_3\}, \{a_3,m_3\}\big\}
\end{align*}
The procedure continues for the rest of the variables, and detects that the neighbor set for $d_2$ is similar to that of $d_1$, and the neighbor set of $a_1$ is similar to that of $a_3$, thus lifting them with disjunction. The procedure proceeds to update the formula to account for the new lifted variables. We mark single lifted variables with unique colors. For $\textcolor{red}{(d_1\vee d_2)}$ we arbitrarily select $d_1$, remove all clauses with $d_2$ and then replace every occurrence of $d_1$ with $\textcolor{red}{(d_1\vee d_2)}$. The final updated form is presented in Table \ref{tab:lineages_lifted}.
\nop{\begin{align*}
\varphi' = & (\textcolor{red}{(d_1\vee d_2)} \wedge \textcolor{green}{(a_1\vee a_3)} \wedge m_3) \vee (\textcolor{red}{(d_1\vee d_2)} \wedge \textcolor{green}{(a_1\vee a_3)} \wedge m_2)\\ \vee&(\textcolor{red}{(d_1\vee d_2)} \wedge a_2 \wedge m_3)
\end{align*}}
Since we managed to create a non trivial disjunction, The algorithm calls the lifting procedure again and tries again to detect new lifting opportunities. This time, the neighbor set of the new lifted variable $\textcolor{red}{(d_1\vee d_2)}$ is $N_{\textcolor{red}{(d_1\vee d_2)}} = \big\{\{\textcolor{green}{(a_1\vee a_3)},m_2\}, \{\textcolor{green}{(a_1\vee a_3)},m_3\}, \{a_2,m_3\}\big\}$.
This time, there are no lifting opportunities, and the algorithm stops, returning the resulted formula as the lifted lineage.

\end{example}

\begin{algorithm}[t]
\caption{Lift}
\label{alg:Lift}
\begin{algorithmic}[1]
\Require DNF lineage $\varphi$ for query $Q$ on database $D$
\Ensure $success, \varphi^L$
\While{True}
    \State $success, \varphi \gets$ \textsc{SingleLift}($\varphi$) \label{alg_line: Lift: recursive_start}
    \If{Not $success$}
        \State \textbf{break}
    \EndIf
\EndWhile \label{alg_line: Lift: recursive_end}
\item[]
\Procedure{SingleLift}{$\varphi$} \label{alg_line: Lift: collect_neighbors_start}
    \For{each variable $x$ in $\varphi$} \label{alg_line: Lift: collect_neighbors_start}
        \State $N_x \gets$ neighbors set of $x$ \label{alg_line: Lift: collect_neighbors_end}
    \EndFor
    \State Generate a disjunction $\bigvee_{k\in S} v_k$ if $\forall_{i,j\in S}, N_{v_i}=N_{v_j}$ \label{alg_line: Lift: group_disjunction}
    \State Generate a conjunction $\bigwedge_{k\in S} v_k$ if $\forall_{i,j\in S}, \forall c\in N_{v_i}, v_j\in c$ \label{alg_line: Lift: group_conjunction}
    \State $lifted \gets$ found a non-trivial conjunction or disjunction
    \For{each generated disjunction $d_i$ s.t. $|d_i| > 1$}
        \State UpdateDisjunction($\varphi$, $d_i$) \label{alg_line: replace_disjunction_group}
    \EndFor
    \For{each generated conjunction $c_i$ s.t. $|c_i| > 1$}
        \State UpdateConjunction($\varphi$, $c_i$) \label{alg_line: replace_conjunction_group}
    \EndFor
    \Return $<lifted,\varphi>$
\EndProcedure
\end{algorithmic}
\end{algorithm}

}

\begin{algorithm}[h]
\caption{$\liftedcompile$}
\label{alg:lifted_compile}
\begin{algorithmic}[1]
\Require DNF lineage $\varphi$ for query $Q$ on database $D$
\Ensure d-tree for $\varphi$
\State $(\psi, \ell) \gets \lift(\varphi, \ell')$ where $\ell'$
is identity function on $\vars(\varphi)$ 
\label{alg_line: liftedCompile: preprocessing}
\State \Return $\compile(\psi, \ell)$
\item[]
\Procedure{$\compile$}{$\psi, \ell$}
\If{$\psi$ is a variable $x$}
    \Return $\ell(x)$ \label{alg_line: liftedCompile: trivial}
\EndIf
\SWITCH{ $\psi$}
     \State \TAB \CASE{ $\psi_1 \vee \dots \vee \psi_n$ for independent $\psi_1, \dots, \psi_n$} \label{alg_line: liftedCompile: ind_or}
         \State \TAB \TAB $T_\psi\ \gets \bigoplus_{i \in [n]} \compile(\psi_i, \ell|_{\psi_i})$ 
    \State \TAB \CASE{ $\psi_1 \wedge \dots \wedge \psi_n$ for independent $\psi_1, \dots, \psi_n$} \label{alg_line: liftedCompile: ind_and}
    \State \TAB \TAB $T_\psi \gets \bigodot_{i \in [n]}\compile(\psi_i, \ell|_{\psi_i})$
    \State \TAB \DEFAULT
        \State \TAB \TAB Pick a most frequent variable $y$ in $\psi$ \label{alg_line: liftedCompile: shannon_start}
        \State \TAB \TAB $T_\psi \gets (y \odot\compile( \lift(\psi[y:=1], \ell|_{\psi[y:=1]}))) \oplus\\ \TAB\TAB\TAB\TAB\STAB (\neg y \odot \compile(\lift(\psi[y:=0], \ell|_{\psi[y:=0]}))$ \label{alg_line: liftedCompile: shannon_end}

\Return $T_{\psi}$
\EndProcedure
\end{algorithmic}
\end{algorithm}
\paragraph{D-Tree Compilation Using Lifting}
Algorithm \ref{alg:lifted_compile} gives our lifted  
compilation procedure for the query lineage (Boolean formula) into a d-tree. The input lineage $\varphi$ is first translated into a saturated lifted formula $(\psi,\ell)$ (Line~\ref{alg_line: liftedCompile: preprocessing}). The lifted formula is then passed to the sub-procedure 
$\compile$, which traverses recursively over the structure of 
$\psi$. If $\psi$ is a single variable $x$, the procedure returns 
$\ell(x)$, which is a read-once formula (Line~\ref{alg_line: liftedCompile: trivial}).
If $\psi$ is a disjunction of independent sub-formulas $\psi_1, \ldots , \psi_n$, it first constructs  the d-trees for $(\psi_1,\ell|_{\psi_1}), \ldots , (\psi_n,\ell|_{\psi_n})$, where $\ell|_{\psi_i}$ is the restriction of the domain of  $\ell$ onto the variables of $\psi_i$. Then, it combines the d-trees using $\oplus$ (Line~\ref{alg_line: liftedCompile: ind_or}). 
For conjunctions of independent subformulas, the procedure similarly combines the compiled trees using $\odot$ (Line~\ref{alg_line: liftedCompile: ind_and}). Otherwise, it performs Shannon expansion on a variable $y$: it compiles $\lift(\psi[y := 0], \ell|{\psi[y := 0]})$ and $\lift(\psi[y := 1], \ell|{\psi[y := 1]})$ into d-trees $T_0$ and $T_1$, then combines them with a $\sqcup_{\ell(y)}$ gate, where $T_0$ and $T_1$ become the 0- and 1-branches, respectively (Lines~\ref{alg_line: liftedCompile: shannon_start}--\ref{alg_line: liftedCompile: shannon_end}). The lifting procedure is applied to $\psi[y:=1]$
and $\psi[y:=0]$, since the substitution of $y$ by a constant can create new symmetries.

\nop{\begin{example}
\label{example: lifted_compile}
We apply Algorithm~\ref{alg:lifted_compile} to the lineage of $Q_1$ given in Table~\ref{tab:lineages_lifted} (top left). The saturated lifted lineage $(\varphi, \ell)$ constructed in Line~\ref{alg_line: liftedCompile: preprocessing} is given in Table~\ref{tab:lineages_lifted} (top right). Since the variable $d_{1,2}$ appears in all clauses of $\varphi$, the procedure 
factors out $d_{1,2}$ 
using independent-and: $d_{1,2} \odot \big((a_{1,3} \wedge m_3) \vee 
(a_{1,3} \wedge m_2) \vee (a_{2} \wedge m_3)\big)$ and proceeds recursively on both sides of $\odot$.
Since $\ell$ maps $d_{1,2}$ to the read-once formula $d_1 \vee d_2$,
the procedure returns the latter formula for $d_{1,2}$. The formula on the right side of $\odot$ cannot be decomposed using $\odot$ or $\oplus$, thus the procedure performs Shannon expansion on variable $a_{1,3}$, since it is one of the most frequent variables. It then performs lifting on each of the new formulas resulted from the Shannon expansion. 
This process continues until we obtain the final d-tree presented in Figure~\ref{fig:Lifted_Dtree_Example}.
\end{example}}

\begin{example}
\label{example: lifted_compile}
We apply Algorithm~\ref{alg:lifted_compile} to the lineage of $Q_1$ in Table~\ref{tab:lineages_lifted} (top left). The saturated lifted lineage $(\varphi, \ell)$ constructed in Line~\ref{alg_line: liftedCompile: preprocessing} is given in Table~\ref{tab:lineages_lifted} (top right). Since the variable $d_{1,2}$ appears in all clauses of $\varphi$, we 
factor out $d_{1,2}$ from all clauses and decompose the formula using independent-and: $d_{1,2} \odot \big((a_{1,3} \wedge m_3) \vee 
(a_{1,3} \wedge m_2) \vee (a_{2} \wedge m_3)\big)$ and proceeds recursively on both sides of $\odot$.
Since $\ell$ maps $d_{1,2}$ to the read-once formula $d_1 \vee d_2$,
we return the latter formula for $d_{1,2}$. The formula on the right side of $\odot$ can only be decomposed using Shannon expansion on one of its variables. We choose $a_{1,3}$, one of its most frequent variables. We then perform lifting on each of the new formulas resulted from the Shannon expansion. This process continues until we obtain the d-tree from Figure~\ref{fig:Lifted_Dtree_Example} (ignore for now the numeric values appearing next to nodes).
\end{example}

Examples~\ref{example:lifted lineage} and \ref{example: lifted_compile} demonstrate two benefits of our lifting approach over compilation without lifting: faster compilation time and smaller resulting d-tree. The compilation is more efficient since the lifted lineage can be small and Shannon expansion can be applied to the new fresh variables, which can represent large formulas.

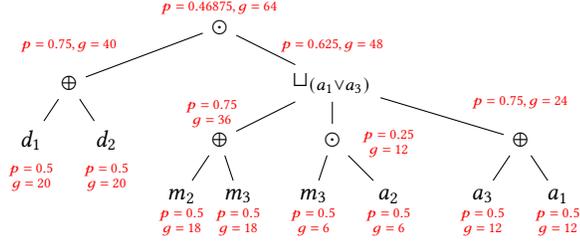
\begin{figure}
    \centering
    \begin{tikzpicture}
            \node at (0,0)  (root) {$\odot$};
            \node[] at (-2,-0.75)  (21) {$\oplus$} edge[-] (root);
            \node[font=\bfseries] at (1.5,-0.75)  (22) {$\sqcup_{(a_1\vee a_3)}$} edge[-] (root);
            \node[] at (-2.5,-1.5) (31) {$d_1$} edge[-] (21);
            \node[] at (-1.5,-1.5)  (32) {$d_2$} edge[-] (21);
            \node[font=\bfseries] at (0,-1.5)  (34) {$\oplus$} edge[-] (22);
            \node[font=\bfseries] at (1.5,-1.5)  (33) {$\odot$} edge[-] (22);
            \node[font=\bfseries] at (4,-1.5)  (35) {$\oplus$} edge[-] (22);
            \node[font=\bfseries] at (4.5, -2.25) (47) {$a_1$} edge[-] (35);
            \node[font=\bfseries] at (3.5, -2.25) (48) {$a_3$} edge[-] (35);
            \node[font=\bfseries] at (1.25, -2.25) (43) {$m_3$} edge[-] (33);
            \node[font=\bfseries] at (2.25, -2.25) (44) {$a_2$} edge[-] (33);
            \node[font=\bfseries] at (-0.5, -2.25) (45) {$m_2$} edge[-] (34);
            \node[font=\bfseries] at (0.25, -2.25) (46) {$m_3$} edge[-] (34);

            \node[red] at (0,0.25)  (x) {\tiny{$p=0.46875, g=64$}};
            \node[red] at (-2,-0.25)  (x) {\tiny{$p=0.75, g=40$}};
            \node[red] at (-2.5,-1.9)  (x) {\tiny{$p=0.5$}};
            \node[red] at (-2.5,-2.1)  (x) {\tiny{$g=20$}};
            \node[red] at (-1.5,-1.9)  (x) {\tiny{$p=0.5$}};
            \node[red] at (-1.5,-2.1)  (x) {\tiny{$g=20$}};
            \node[red] at (1.5,-0.25)  (x) {\tiny{$p=0.625, g=48$}};
            \node[red] at (-0.5,-2.5)  (x) {\tiny{$p=0.5$}};
            \node[red] at (-0.5,-2.7)  (x) {\tiny{$g=18$}};
            \node[red] at (0.25,-2.5)  (x) {\tiny{$p=0.5$}};
            \node[red] at (1.25,-2.5)  (x) {\tiny{$p=0.5$}};
            \node[red] at (0.25,-2.7)  (x) {\tiny{$g=18$}};
            \node[red] at (1.25,-2.7)  (x) {\tiny{$g=6$}};
            \node[red] at (2.25,-2.5)  (x) {\tiny{$p=0.5$}};
            \node[red] at (2.25,-2.7)  (x) {\tiny{$g=6$}};
            \node[red] at (4,-1)  (x) {\tiny{$p=0.75, g=24$}};
            \node[red] at (3.5,-2.5)  (x) {\tiny{$p=0.5$}};
            \node[red] at (3.5,-2.7)  (x) {\tiny{$g=12$}};
            \node[red] at (4.5,-2.5)  (x) {\tiny{$p=0.5$}};
            \node[red] at (4.5,-2.7)  (x) {\tiny{$g=12$}};
            \node[red] at (-0.1,-1.05)  (x) {\tiny{$p=0.75$}};
            \node[red] at (-0.1,-1.25)  (x) {\tiny{$g=36$}};
            \node[red] at (2.25,-1.45)  (x) {\tiny{$p=0.25$}};
            \node[red] at (2.25,-1.65)  (x) {\tiny{$g=12$}};
        \end{tikzpicture}
            \vspace{-0.3cm}

        \caption{Lifted d-tree for the lineage in Figure \ref{fig:running_example}. Each node is annotated with probability $p$ and partial derivative $g$ computed during Banzhaf value computation using gradients.}
        \label{fig:Lifted_Dtree_Example}
            \vspace{-1em}
\end{figure}

\nop{Algorithm~\ref{alg:lifted_compile} describes how to utilize the lifting procedure to efficiently compile lineage to a d-tree. The compilation process begins with a pre-processing step (Line~\ref{alg_line: liftedCompile: preprocessing}) of lifting the lineage before the incremental compilation. Then, if the lineage is not trivial (line~\ref{alg_line: liftedCompile: trivial}), it tries to decompose the lifted lineage using {\em independent Or} (line~\ref{alg_line: liftedCompile: ind_or}), {\em independent And} (line~\ref{alg_line: liftedCompile: ind_and}) and Shannon expansion (lines~\ref{alg_line: liftedCompile: shannon_start}-\ref{alg_line: liftedCompile: shannon_end}) gates. After each Shannon expansion, the lifting procedure is reapplied (line~\ref{alg_line: liftedCompile: shannon_end}) to uncover potential new symmetries.

\begin{example}
\label{example: lifted_compile}
We analyze Algorithm~\ref{alg:lifted_compile} on the lineage of $Q_1$ from Table~\ref{tab:lineages_lifted}. The initial lifting of line~\ref{alg_line: liftedCompile: preprocessing} is described in Example~\ref{example:lifted lineage}. Then, the algorithm begins the compilation by decomposing the resulting lineage with the $\odot$ gate, as $\textcolor{red}{(d_1\vee d_2)}$ appears in all clauses. The algorithm proceeds recursively to each of the resulting formulas. For $\textcolor{red}{(d_1\vee d_2)}$ it returns it as a d-tree since it is a lifted variable. The remaining formula cannot be decomposed using $\odot$ or $\oplus$ gates, thus the algorithm performs a Shannon expansion on the most frequent variable, $\textcolor{green}{(a_1\vee a_3)}$. It then performs lifting on each of the new formulas resulted from the Shannon Expansion. The formula $(m_2) \vee (m_3)$ resulting from the True branch of the Shannon Expansion is lifted to $\textcolor{orange}{(m_2 \vee m_3)}$. This process continues until obtaining the final d-tree presented in Figure~\ref{fig:Lifted_Dtree_Example}.
\end{example}

Examples~\ref{example:lifted lineage} and \ref{example: lifted_compile} demonstrate two benefits of our approach. Firstly, the lifted lineage is significantly shorter making operations on it more efficient. Moreover, by compiling the lifted lineage we can perform compilation operations that were not possible before, such as performing Shannon expansion on formulas representing multiple variables instead of one. This demonstrates that the compilation results in a smaller circuit with a faster compilation time.

\begin{algorithm}[h]
\caption{LiftedCompile}
\label{alg:lifted_compile}
\begin{algorithmic}[1]
\Require DNF lineage $\varphi$ for query $Q$ on database $D$
\Ensure $d-tree$ for $\varphi$
\State $\psi \gets \text{Lift}(\varphi)$ \label{alg_line: liftedCompile: preprocessing}
\State \Return \_Compile($\psi$)
\item[]
\Procedure{\_Compile}{$\psi$}
\If{$\psi$ is a variable or a lifted variable}
    \Return $\psi$ \label{alg_line: liftedCompile: trivial}
\EndIf
\SWITCH{ $\psi$}
     \State \TAB \CASE{ $\psi_1 \vee \dots \vee \psi_n$ for independent $\psi_1, \dots, \psi_n$} \label{alg_line: liftedCompile: ind_or}
         \State \TAB \TAB $T_\varphi \gets \bigoplus \_Compile(\psi_i)$ 
    \State \TAB \CASE{ $\psi_1 \wedge \dots \wedge \psi_n$ for independent $\psi_1, \dots, \psi_n$} \label{alg_line: liftedCompile: ind_and}
    \State \TAB \TAB $T_\varphi \gets \bigodot\_Compile(\psi_i)$
    \State \TAB \DEFAULT
        \State \TAB \TAB Pick a variable $y$ in $\psi$ \label{alg_line: liftedCompile: shannon_start}
        \State \TAB \TAB $T_\varphi \gets (y \odot\_Compile( \text{Lift}(\psi[y:=1]))) \oplus\\ (\neg y \odot \_Compile(\text{Lift}(\psi[y:=0]))$ \label{alg_line: liftedCompile: shannon_end}

\Return $T_{\varphi}$
\EndProcedure
\end{algorithmic}
\end{algorithm}

}

\subsection{Gradient-Based Computation}
\label{subsubsect: arithmetic}

We propose a novel approach for Banzhaf and Shaply computation, yielding the following result:
\begin{theorem}
\label{theorem: linear Banzhaf and Shapley}
For a d-tree $T$, the Banzhaf and Shapley values of all variables in $T$ can be computed in time $\mathcal{O}(|T|)$ and respectively $\mathcal{O}(|T|\cdot |\vars(T)|^2)$.
\nop{Given a d-tree $T$ we can compute the Banzhaf (respectively Shapley) values with respect to all variables of $T$ in $\mathcal{O}(|T|)$ (respectively $\mathcal{O}(|T|\cdot |\vars(T)|)$).}  
\end{theorem}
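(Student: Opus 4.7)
The plan is to interpret the d-tree $T$ as an arithmetic circuit and to obtain both Banzhaf and Shapley values as entries of the gradient of a suitable real-valued function at carefully chosen parameter settings. All values are then computed by one forward pass (evaluating the circuit bottom-up) followed by one backward pass (accumulating partial derivatives top-down via the chain rule), in the style of reverse-mode automatic differentiation on a DAG.

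For the Banzhaf bound, I would first define for each node $v$ of $T$ a scalar value $P_v(\bm p)$ using the natural probability recurrences: a leaf $x_i$ yields $P_v = p_i$; a constant $c$ yields $P_v = c$; an independent conjunction $\odot$ of children $v_1,\ldots,v_k$ yields $P_v = \prod_j P_{v_j}$; an independent disjunction $\oplus$ yields $P_v = 1 - \prod_j(1-P_{v_j})$; and a Shannon gate with condition variable $x_i$ over branches $v_1,v_0$ yields $P_v = p_i P_{v_1} + (1-p_i) P_{v_0}$. Induction on the structure of $T$ shows that $P_T(\bm p)$ equals the probability that $\lin(T)$ is true when each $x_i$ is independently true with probability $p_i$, so that $P_T(\bm p) = \sum_\theta \bigl(\prod_{x_i\in\theta} p_i\bigr)\bigl(\prod_{x_i\notin\theta}(1-p_i)\bigr)\lin(T)[\theta]$. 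Differentiating and evaluating at $\bm p = (1/2,\ldots,1/2)$ gives the key identity
\begin{equation*}
\banz(\lin(T),x_i) = 2^{n-1}\cdot \frac{\partial P_T}{\partial p_i}\bigg|_{p_1=\cdots=p_n=1/2},
\end{equation*}
since $\partial P_T/\partial p_i = P_T[x_i{:=}1] - P_T[x_i{:=}0]$ and evaluating at uniform $1/2$ recovers $2^{-(n-1)}(\#\lin(T)[x:=1]-\#\lin(T)[x:=0])$. The backward pass propagates $\partial P_T/\partial P_v$ to each node using local Jacobians: for $\odot$ and $\oplus$ with $k$ siblings the required $k$ products of the form $\prod_{j'\ne j} P_{v_{j'}}$ (resp.\ $\prod_{j'\ne j}(1-P_{v_{j'}})$) are computed in aggregate $O(k)$ time via prefix/suffix products, and Shannon gates contribute $O(1)$ work per child edge together with a contribution of $(P_{v_1}-P_{v_0})\cdot \partial P_T/\partial P_v$ to the condition variable. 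Each node is visited a constant number of times with work proportional to its fan-in, so both passes run in total time $O(|T|)$.

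For the Shapley bound I would lift the same construction to polynomial-valued nodes, replacing each scalar $P_v$ by a polynomial $P_v(p)\in\mathbb{R}[p]$ of degree at most $|\vars(T)| = n$ that records the probability of $v$ when every variable is set independently to true with probability $p$. The connection between Shapley values and gradients goes through the beta-integral identity $\int_0^1 p^k(1-p)^{n-1-k}dp = k!(n-k-1)!/n!$, which yields
\begin{equation*}
\shap(\lin(T),x_i) = \int_0^1 \frac{\partial P_T}{\partial p_i}\bigg|_{p_1=\cdots=p_n=p}\, dp,
\end{equation*}
so recovering $\shap(\lin(T),x_i)$ reduces to computing the univariate polynomial $\partial P_T(p,\ldots,p)/\partial p_i$ and integrating term-by-term. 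The gate recurrences now become polynomial operations: $\odot$ becomes polynomial multiplication, $\oplus$ is handled by the inclusion-exclusion identity on polynomials, and Shannon gates form a degree-one polynomial combination of their two branches. Each polynomial operation at a binary node costs $O(n^2)$, and reverse-mode back-propagation through such an operation is itself a polynomial multiplication/convolution and also costs $O(n^2)$. Using prefix/suffix \emph{polynomial} products for high-fan-in $\odot$ and $\oplus$ nodes, the total work over all nodes of both passes is $O(|T|\cdot n^2)$, as claimed.

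The main obstacle is the $\oplus$ case. Because children of $\oplus$ in a d-tree are independent but not mutually exclusive, neither model counts nor size-generating polynomials simply add; they combine by inclusion--exclusion against the ``trivial'' polynomials $(1+z)^{n_j}$ of each child's variable set. Making this work cleanly requires carrying the variable cardinality at each subtree and showing that the polynomial back-propagation through the non-linear $\oplus$ rule still respects the per-edge $O(n^2)$ budget, which is where the prefix/suffix polynomial trick is essential to avoid an extra factor of fan-in. A secondary technicality is aligning the variable sets on the two branches of a Shannon gate, handled by padding with the appropriate $(1-p)$ or $(1+z)$ factors so that all polynomials at a node refer to the same variable universe.
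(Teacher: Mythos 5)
Your Banzhaf argument is essentially the paper's own: the identity $\banz(T,x)=2^{|\vars(T)|-1}\cdot\frac{\partial Pr[T]}{\partial p_x}(\vec{\tfrac12})$ is exactly Proposition~\ref{Prop: Banzhaf_is_Gradient}, and your forward/backward pass over the gates of Table~\ref{tab:probability_and_gradients_for_gates} is Algorithm~\ref{alg:GradientBanzhaf}; your prefix/suffix-product handling of high fan-in $\oplus$ and $\odot$ nodes is a small refinement that avoids the divisions $\frac{1-Pr[T]}{1-Pr[T_i]}$ and $\frac{Pr[T]}{Pr[T_i]}$ in the paper's table (which are ill-defined when a child has probability $1$ or $0$), so it is if anything slightly more robust. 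For Shapley you take a genuinely different, though equivalent, route. The paper stratifies by coalition size: it propagates, per node, the vector of $k$-probabilities $Pr_k[T]$ evaluated at $\vec{\tfrac12}$, initializes the root adjoint with the Shapley coefficients $C_{k-1}$, and back-propagates vectors of partial derivatives (Proposition~\ref{Prop: Shapley_is_Gradient} and Table~\ref{tab:Shapley_probability_and_gradients_gates} in the appendix). You instead set every variable to a single symbolic probability $p$, carry at each node the univariate polynomial $P_v(p)$ of degree at most $|\vars(T_v)|$, and recover $\shap(T,x)=\int_0^1 \frac{\partial P_T}{\partial p_x}\big|_{\bm p = p\vec 1}\,dp$ via the Beta integral (the multilinear-extension characterization of Shapley). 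The two are related by a change of basis -- your monomial coefficients versus the paper's size-indexed (Bernstein-type) probabilities -- and both give $\mathcal{O}(|\vars(T)|^2)$ work per edge via convolution, hence the same $\mathcal{O}(|T|\cdot|\vars(T)|^2)$ bound. Your formulation buys a cleaner derivation (no explicit $k$-bookkeeping until the final termwise integration, which is $\mathcal{O}(|\vars(T)|)$ per variable) and sidesteps the somewhat opaque $Pr_k$ definition; the paper's formulation keeps the adapted gate equations in exactly the shape of Table~\ref{tab:probability_and_gradients_for_gates}, which is what its implementation reuses. One small remark: your closing worry about padding with $(1+z)^{n_j}$ factors at $\oplus$ and Shannon gates only arises if you track model-count generating functions; for the probability polynomials $P_v(p)$ you actually define, independence of the children makes $P_v = 1-\prod_j(1-P_{v_j})$ and $P_v = p\,P_{v_1}+(1-p)P_{v_0}$ exact with no variable-universe alignment needed, so that step is vacuous rather than problematic.
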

We will interpret a d-tree as a function over random variables
and reduces the computation of Banzhaf and Shapley values 
to the computation of the gradient of this function.\footnote{There are two seminal works related to our gradient-based approach: The Baur-Strassen result on the efficient computation of a multivariate function defined by an arithmetic circuit and its gradient vector~\cite{BAUR1983317}, and the Darwiche result on efficient inference in Bayesian networks \cite{ADiffrentialApproachToInferenceInBayesianNetworks}. To the best of our knowledge, no prior work uses gradient-based computation for Banzhaf/Shapley values.} 
We first show the computation for Banzhaf and then extend it to Shapley values. 

\paragraph*{Probabilistic Interpretation of D-Trees} We associate with each Boolean variable $x$ occurring in a d-tree leaf, a variable $p_x$ taking values in $[0,1]$. A value assigned to $p_x$ is the probability for $x=1$. $Pr[T]$ is then a function over these variables, defined recursively using the definitions in Table~\ref{tab:probability_and_gradients_for_gates}. Its value is the probability that the Boolean formula represented by the d-tree 
$T$ evaluates to 1.  

\paragraph*{From $Pr[T]$ to Banzhaf values} 
We next connect the Banzhaf value of a variable $x$ to the
partial derivative of $Pr[T]$ w.r.t. $x$. 

\begin{proposition}
\label{Prop: Banzhaf_is_Gradient}
Given a d-tree $T$ and a variable $x\in \vars(T)$:
\begin{align*}
    \banz(T,x) =& 2^{|\vars(T)| - 1} \cdot \left(\frac{\partial Pr[T]}{\partial p_x}\left(\vec{\frac{1}{2}}\right)\right) 
\end{align*}
where $p_x$ is the Boolean variable corresponding to $x$, $\frac{\partial Pr[T]}{\partial p_x}$ is the partial derivative of $Pr[T]$ with respect to $p_x$, and $\vec{\frac{1}{2}}$ is the column vector  $(\frac{1}{2},\dots , \frac{1}{2})$ of length $|\vars(T)|$.  
\end{proposition}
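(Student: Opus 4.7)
The plan is to reduce the identity to the well-known model-count form of the Banzhaf value stated in the preliminaries, namely $\banz(\varphi,x) = \#\varphi[x:=1] - \#\varphi[x:=0]$. The bridge from this identity to the gradient at $\vec{1/2}$ is the multilinearity of $Pr[T]$ in each $p_x$, which follows from the probabilistic interpretation: under the product distribution in which each Boolean variable $x$ is independently $1$ with probability $p_x$, the probability of the formula $\varphi$ represented by $T$ can be decomposed by Shannon expansion as
\begin{equation*}
Pr[T](\vec p) \;=\; p_x\cdot Pr[T[x:=1]](\vec p_{-x}) \;+\; (1-p_x)\cdot Pr[T[x:=0]](\vec p_{-x}),
\end{equation*}
where $\vec p_{-x}$ denotes the vector of probabilities for the remaining variables. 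In particular, $Pr[T]$ is a polynomial of degree one in each $p_x$.

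Differentiating this expression with respect to $p_x$ gives
\begin{equation*}
\frac{\partial Pr[T]}{\partial p_x}(\vec p) \;=\; Pr[T[x:=1]](\vec p_{-x}) - Pr[T[x:=0]](\vec p_{-x}),
\end{equation*}
and evaluating at $\vec p = \vec{1/2}$ uses the fact that under the uniform distribution on $\{0,1\}^{|\vars(T)|-1}$ the probability of a formula equals its model count divided by $2^{|\vars(T)|-1}$. Hence
\begin{equation*}
\frac{\partial Pr[T]}{\partial p_x}\bigl(\vec{1/2}\bigr) \;=\; \frac{\#\varphi[x:=1] - \#\varphi[x:=0]}{2^{|\vars(T)|-1}}.
\end{equation*}
Multiplying both sides by $2^{|\vars(T)|-1}$ and applying the Banzhaf-as-model-count identity yields the claim.

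To make this watertight, I would first verify by structural induction on the d-tree $T$ that $Pr[T]$, as defined by the recursion in Table~\ref{tab:probability_and_gradients_for_gates}, indeed coincides with the independent-product probability of the underlying Boolean formula, and is multilinear in each $p_x$. For $\oplus$ gates (independent disjunctions) and $\odot$ gates (independent conjunctions) this is immediate from the independence of the subformulas, and for the Shannon gate $\sqcup_v$ it is exactly the identity above. The model-count step for the uniform evaluation is standard. The one subtlety I expect to address carefully is that a variable $x$ may appear in the d-tree only as a condition variable of a Shannon gate rather than as a leaf; here one must check that the gate's probability formula and its partial derivative w.r.t. $p_x$ agree with the Shannon-expansion identity above, so that the inductive step goes through uniformly. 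Once the multilinearity and the Shannon decomposition are established, the rest of the argument is purely arithmetic.
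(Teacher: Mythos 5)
Your proposal is correct and takes essentially the same approach as the paper: the paper rewrites $Pr[T]=\sum_{S\subseteq \vars(T)}Pr[S]\cdot T[S]$ and splits the sum according to whether $x\in S$, which is exactly your Shannon decomposition of $Pr[T]$ in $p_x$, then differentiates and evaluates at $\vec{\frac{1}{2}}$. The only superficial difference is that you close the argument by invoking the model-count identity $\banz(\varphi,x)=\#\varphi[x:=1]-\#\varphi[x:=0]$ from the preliminaries, whereas the paper reads off the marginal-contribution definition of the Banzhaf value directly from the differentiated sum.
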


\begin{proof}
We can rewrite $Pr[T]$ as
{\small \begin{align*}
Pr[T] =&\ \sum_{S\subseteq \vars(T)} Pr[S]\cdot T[S] = \sum_{S\subseteq \vars(T)} \left( \prod_{z\in S} p_z \cdot \prod_{y\notin S} (1 - p_y) \right) T[S],
\end{align*} \normalsize}
where $T[S]$ is the Boolean value of the formula represented by $T$ under the valuation $S$.
Given a variable $x\in \vars(T)$, we separately consider for each valuation $S$ the cases 
where $x$ appears in $S$ and where it does not. For the cases where $x$ appears in we obtain:
{\small 
\begin{align*}
Pr[T] = \sum_{S\subseteq \vars(T)\setminus \{x\}} \Big( & \prod_{z\in S} p_z \cdot \prod_{y\notin S\cup \{x\}} (1 - p_y) \cdot (p_x\cdot T[S\cup\{x\}])\Big)
\end{align*} \normalsize
}

For the cases where $x$ does not appear in, we obtain:
{\small 
\begin{align*}
Pr[T] = \sum_{S\subseteq \vars(T)\setminus \{x\}} \Big( & \prod_{z\in S} p_z \cdot \prod_{y\notin S\cup \{x\}} (1 - p_y) \cdot \big((1 - p_x)\cdot T[S]\big)\Big)
\end{align*} \normalsize
}

And over all cases, we obtain:
{\small 
\begin{align*}
Pr[T] = \sum_{S\subseteq \vars(T)\setminus \{x\}} \Big( & \prod_{z\in S} p_z \cdot \prod_{y\notin S\cup \{x\}} (1 - p_y)  \notag 
\\ & \cdot \big(p_x \cdot T[S\cup\{x\}] + (1-p_x) \cdot T[S] \big)\Big)
\end{align*} \normalsize
}

By taking the partial derivative of the above function 
with respect to $p_x$ at the point $\vec{\frac{1}{2}}$, we get:
{\small \begin{align*}
\frac{\partial Pr[T]}{\partial p_x}(\vec{\frac{1}{2}}) =& \sum_{S\subseteq \vars(T) \setminus \{x\}} \left(\frac{1}{2}\right)^{|\vars(T)|-1} \cdot (T[S\cup \{x\}] - T[S]) \notag \\ 
=&  \left(\frac{1}{2}\right)^{|\vars(T)|-1}  \cdot \banz(T,x) \qedhere
\end{align*}\normalsize}
\end{proof}

Intuitively, evaluating the derivative at $\vec{\frac{1}{2}}$ gives each subset equal likelihood. Thus, the partial derivative increases the weight of subsets containing the variable and decreases it for those that do not.
\paragraph*{Efficient Computation} 
Using Prop.~\ref{Prop: Banzhaf_is_Gradient}, we can express the probability of a d-tree as a function of the probabilities of its leaf variables and derive its partial derivatives with respect to each variable by recursively applying the chain rule. Prop.~\ref{Prop: Banzhaf_is_Gradient} shows that these derivatives coincide with the Banzhaf values up to normalization.
Table~\ref{tab:probability_and_gradients_for_gates} gives rules for computing the probability of a d-tree $T$ from its sub-trees, following standard probability theory, as well as the partial derivatives $\partial Pr[T] / \partial Pr[T_i]$ for each sub-tree $T_i$, which together enable gradient computation with respect to the leaf variables.

The procedure $\gradientBanzhaf$ in Algorithm~\ref{alg:GradientBanzhaf} implements this as follows. Given a d-tree $T_\varphi$, the algorithm gradually computes two quantities for each tree node $v$: $v.p$ which is the probability of $T_v$, and $v.g$ which is $\frac{\partial Pr(T_{\varphi})}{\partial Pr(T_v)}(\vec{\frac{1}{2}})$, where $T_v$ is the d-tree rooted at $v$. First, $v.p$ values are initialized to $\frac{1}{2}$, as our goal is to compute derivatives as in  Table~\ref{tab:probability_and_gradients_for_gates}, and evaluate them at $\vec{\frac{1}{2}}$, as in Prop.~\ref{Prop: Banzhaf_is_Gradient}; the $v.g$ value of the root is initialized to $2^{\mid \vars(T_\varphi)\mid -1}$ to account for normalization. Probabilities are computed bottom-up  using the  expressions in the second column of Table~\ref{tab:probability_and_gradients_for_gates} (Lines~\ref{alg_line: GradientBanzhaf: compute probabilities - start} - \ref{alg_line: GradientBanzhaf: compute probabilities - end}). Partial derivatives are computed in a top-down fashion, using the expressions in the third column of Table~\ref{tab:probability_and_gradients_for_gates} and the chain rule (Lines~\ref{alg_line: GradientBanzhaf: grad backpropagation start}-~\ref{alg_line: GradientBanzhaf: grad backpropagation end}). For a variable $v$, the sum of partial derivatives for different leaf nodes representing $v$ is equal to $\frac{\partial Pr[T]}{p_v} (\vec{\frac{1}{2}})$ which equals its Banzhaf value by Prop.~\ref{Prop: Banzhaf_is_Gradient}.



\begin{table}[]
\caption{Equations defining the probability and the partial derivatives for different gates. $T$ is the d-tree rooted at the gate and the $T_i$'s are its child sub-trees.}
    \centering
    \scriptsize
    \renewcommand{\arraystretch}{1.5}
    \begin{tabular}{|c|l|l|}
        \hline
        Gate & Probability Expression $Pr[T]$ & Partial Derivative \\ \hline
        $\oplus$ & $1 - \underset{i\in[n]}\prod (1 - Pr[T_i])$ 
        & $\frac{\partial(Pr[T])}{\partial(Pr[T_i])} = \underset{{j\in[n]\setminus \{i\}}}{\prod} (1 - Pr[T_j]) = \frac{1 - Pr[T]}{1 - Pr[T_i]}$ \\ \hline
         $\odot$ & $\prod_{i\in[n]} Pr[T_i]$ 
        & $\frac{\partial(Pr[T])}{\partial(Pr[T_i])} =\prod_{j\in[n]\setminus \{i\}} Pr[T_j] =  \frac{Pr[T]}{Pr[T_i]}$ \\ \hline
\multirow{3}{*}{$\sqcup_f$}  
         & \multirow{3}{*}{\shortstack{$Pr[f] \cdot Pr[T_1]$ \\  
                            $+ (1 - Pr[f])\cdot Pr[T_0]$}}    
        & $\frac{\partial(Pr[T])}{\partial(Pr[T_1])} = Pr[f]$ \\ 
        & & $\frac{\partial(Pr[T])}{\partial(Pr[T_0])} = 1 - Pr[f]$ \\ 
        & & $\frac{\partial(Pr[T])}{\partial(Pr[f])} = Pr[T_1] - Pr[T_0]$ \\ \hline

    \end{tabular}
    \label{tab:probability_and_gradients_for_gates}

\end{table}

\begin{example}
   Reconsider the d-tree from Figure~\ref{fig:Lifted_Dtree_Example}, where its nodes are now annotated with $p$ and $g$ values. Consider the $\oplus$ node $v$ that is the root's left child. Its $p$ value is computed using the equation for $\oplus$ from Table~\ref{tab:probability_and_gradients_for_gates} and the probability of its sub-trees $d_1$ and $d_2$: $v.p = 1- (1-d_1.p)\cdot (1- d_2.p) = 1 - 0.5\cdot 0.5 = 0.75$. We then have $v.g = v.parent.g\cdot \frac{\partial v.parent.p}{\partial v.p} = v.parent.g \cdot \frac{v.parent.p}{v.p} = 64 \cdot \frac{0.46875}{0.75} = 40$.   
%
\end{example}

\paragraph*{Shapley values} We next link Shapley values to gradients: 
\begin{align*}
\shap(T, x) =& 2^{|\vars(T)| - 1} \cdot \sum_{k \in [|\vars(T)|]} C_{k-1} \cdot \left(\frac{\partial Pr_k[T]}{\partial p_x}\left(\vec{\frac{1}{2}}\right)\right),  \text{where}\end{align*} 
\nop{where the $k$-probability $Pr_k[T] = \sum_{S\subseteq \vars(T), |S|=k} Pr[S]\cdot T[S]$ is the probability of $T$ computed only over the valuations of size $k$.}
\begin{align*} 
Pr_k[T] =& \sum_{S\subseteq vars(T), \mid S \mid = k} \left(\frac{1}{2}\right)^{|\vars(T)| - k}\sum_{S'\subseteq S} T[S']\cdot \prod_{y\in S'}p_y \prod_{z\in S\setminus S'}\left(\frac{1}{2}-p_z\right)
\end{align*}
To compute Shapley values, the equations in Table \ref{tab:probability_and_gradients_for_gates} are adapted to compute $Pr_k[T]$ instead of $Pr[T]$, and $\frac{\partial (\sum_k C_k \cdot \Pr_k[T])}{\partial Pr_j[T_i]}(\vec{\frac{1}{2}})$ for each $j \in [n]$ instead of $\frac{\partial Pr[T]}{\partial Pr[T_i]}(\vec{\frac{1}{2}})$. Similar equations to those shown in Table~\ref{tab:probability_and_gradients_for_gates} are obtained for these partial derivatives. Lines~\ref{alg_line: GradientBanzhaf: compute probabilities - start} - \ref{alg_line: GradientBanzhaf: compute probabilities - end} are then changed to compute a vector of all $k$-probabilities for each node. Line~\ref{alg_line: GradientBanzhaf: initialize_grad_root} is modified to initialize the root's vector of derivatives to be the Shapley coefficients, and Lines~\ref{alg_line: GradientBanzhaf: grad backpropagation start} - \ref{alg_line: GradientBanzhaf: grad backpropagation end} are modified to compute the partial derivative for each value of $j$. 


\paragraph*{Algorithms} The \lexaban  (\lexashap) algorithm for computing Banzhaf (Shapley) values uses Algorithm~\ref{alg:lifted_compile} to compile the query lineage to a d-tree $T$ and then Algorithm~\ref{alg:GradientBanzhaf} (its adaptation to Shapley values) to $T$ to compute Banzhaf (Shapley) values for all variables in $T$, or equivalently for all tuples contributing to the lineage. 

\paragraph*{Complexity} For each node $v$, $Pr[v]$ is computable in time linear in its number of children, so 
$\mathcal{O}(|T|)$ for all nodes. For Banzhaf, the gradient w.r.t. each leaf node can be computed in $\mathcal{O}(1)$ following Table~\ref{tab:probability_and_gradients_for_gates}, so in $\mathcal{O}(|T|)$ total time. For Shapley, we need to compute $|\vars(T_v)|$ expressions and partial derivatives at each node $v$ and each may be computed in $\mathcal{O}(|\vars(T)|)$ time, resulting in $\mathcal{O}(|\vars(T)|^2)$ per node and $\mathcal{O}(|T|\cdot |\vars(T)|^2)$ in total.

\nop{For Banzhaf computation, both the probabilities and partial derivatives are computed in $\mathcal{O}(1)$ time per d-tree node. For Shapley computation, the time complexity is $O(|\vars(n)|)$ per d-tree node $n$ due to the need to compute a vector of probabilities rather than a single probability value.}   


\nop{
We first present an extension of relational databases named ``Symbolic Probabilistic Databases``. Next, we identify a connection between the Banzhaf and Shapley values to derivatives of the probability of a query under that model. Lastly, we use these insights to derive back-propagation algorithms that compute simultaneously the attribution values of all variables yielding the improved complexity.

\paragraph{Symbolic Probabilistic Databases and formulas}
{\em Tuple-independent probabilistic database} (or simply, probabilistic databases) are a common extension of relational databases, where for a database $D$, each tuple $x$ in $D$ is assigned a numeric probability $p_x$ of being in the database \cite{suciu2022probabilistic}.
We consider a {\em symbolic probabilistic database}, where for a database $D^s = (D_n,D_x)$ each tuple $x\in D_n$ is annotated with a symbolic probability token $p_x$ of being in the database. This representation captures an uninstantiated probabilistic database, allowing for probabilistic analysis at the symbolic level. Given a database $D = (D_n,D_x)$, we refer to the process of assigning symbolic probability tokens to the tuples in $D_n$ as its {\em symbolic probabilistic extension}.  
For a sub-database $D'\subseteq D_n$, we define the probability of $D'$ to be:
 \begin{align}
    Pr[D'] \overset{def}{=} \prod_{x\in D'}{p_x} \cdot \prod_{y\in D_n\setminus D'}{(1-p_y)}
\end{align} 
Here, $Pr[D']$ is a function  $[0,1]^{|P|} \to \mathbb{R}$ where $P = \{p_x| x\in D_n\}$ is the set of symbolic probabilities. Next, we overload notation and define the probability of a query. Given a symbolic probabilistic database $D$ and a boolean query $Q$, the probability of $Q$ over $D$ is:
 \begin{align*}
    Pr[Q(D)] \overset{def}{=} \sum_{D'\in D_n}{Pr[D']\cdot Q(D'\cup D_x)}
\end{align*} 

Similarly, we define the k-probability of $Q$ over $D$ to be:
 \begin{align*}
    Pr_k[Q(D)] \overset{def}{=} \sum_{D'\in D_n, |D'| = k}{Pr[D']\cdot Q(D'\cup D_x)}
\end{align*} 
$Pr[Q(D)]$ and  $Pr_k[Q(D)]$ are also functions $[0,1]^{|P|} \to \mathbb{R}$.

\nop{Next, we define the expected value of a query. Given a symbolic probabilistic database $D$, and a query $Q$, the expectation of $Q$ over $D$ is:
\begin{align}
    E[Q(D)] \overset{def}{=} \sum_{D'\in D_n}{Pr[D']\cdot Q(D'\cup D_x)}
\end{align}
Similarly, we define the k-expected value of $Q$ over $D$ as:
\begin{align}
    E_k[Q(D)] \overset{def}{=} \sum_{D'\in D_n, |D'| = k}{Pr[D']\cdot Q(D'\cup D_x)}
\end{align}

$E[Q(D)]$ and  $E_k[Q(D)]$ are also functions $[0,1]^{|P|} \to \mathbb{R}$.} 

\nop{\begin{example}
\daniel{Not sure that this example is useful in its current form. To discuss.}
    In the symbolic database extension $D^s$ of the database $D$ from Figure~\ref{fig:running_example}, each tuple $f$ annotated with a variable $v_i$ is assigned a probability of $p_{v_i}$. The probability of the sub-database $\{a_1,a_2,m_1,m_2,m_3\}$ is
    $p_{a_1}\cdot p_{a_2} \cdot p_{m_1}\cdot p_{m_2}\cdot p_{m_3} \cdot (1 - p_{a_3}) \cdot (1 - p_{a_4})$.
    Consider the query $Q_1$ from Figure~\ref{fig:running_example}. The probability of $Q_1$ on $D^s$ is:
    \begin{align*}
        Pr[Q_1(D^s)] = \sum_{D'\subseteq D: Q(D')=1}{Pr[D']}
    \end{align*}
\end{example}}


Similarly, given a Boolean formula $\varphi$, we define the symbolic probabilistic extension of $\varphi$ where we annotate each variable $v$ in $\varphi$ with a probabilistic token $p_v$. The probability of an assignment is defined analogously to the probability of a sub-database, and the probability of $\varphi$ is defined analogously to that of a query.

\paragraph{Attribution values coincide with the query probability gradient}
We provide an observation that the Banzhaf and Shapley values coincides with the partial derivative of the probability of a query in a symbolic probabilistic extension of a database. We denote by $\vec{\frac{1}{2}}$ a vector $(\frac{1}{2}, \dots ,\frac{1}{2})$ of the appropriate size.

\begin{theorem}
\label{theorem:GradientEquivalence}
Let $Q$ be a UCQ, $D = (D_n,D_x)$ be a database and $D^s$ its symbolic probabilistic extension. Let $f$ be a fact in $D_n$. Then: 

\begin{align}
    Banzhaf(Q,D,f) =& 2^{|D_n| - 1} \cdot \big(\frac{\partial Pr[Q(D^s)]}{\partial p_f}(\vec{\frac{1}{2}})\big)\\
   Shapley(Q,D, f) =& 2^{|D_n| - 1} \cdot \sum_{k \in [|D_n|]} C_k \cdot \big(\frac{\partial Pr_k[Q(D^s)]}{\partial p_f}(\vec{\frac{1}{2}})\big)
\end{align}

Where $C_k = \frac{|k!|\cdot|D_n - k - 1|!}{|D_n|!}$ is the Shapley coefficient for size $k$.
\end{theorem}

We provide a proof for the Banzhaf value, while the proof for the Shapley value is omitted for lack of space and is included in the full version of the paper.

\begin{proof}


 \begin{align}
Pr[Q(D^s)] =&\ \sum_{S\subseteq D_n} Pr[S]\cdot Q(S\cup D_x) =\notag\\
=& \sum_{S\subseteq D_n} \left( \prod_{x\in S} p_x \cdot \prod_{y\notin S} (1 - p_y) \right) Q(S\cup D_x)
\label{eq:mean_query_value}
\end{align} 

For each variable $f\in D_n$ we can separate the sub-databases based on whether $f$ appears in them or not, and get: 

 \begin{align}
Pr[Q(D^s)] =& \sum_{S\subseteq D_n\setminus \{f\}} \left( \prod_{x\in S} p_x \cdot \prod_{y\notin S\cup \{f\}} (1 - p_y) \right) \notag\\ \cdot& \left(p_f \cdot Q(S\cup D_x\cup \{f\}) + (1-p_f) \cdot Q(S\cup D_x) \right)
\label{eq:mean_query_value_diffrentiated}
\end{align} 

By taking the partial derivative of the function in Equation~\ref{eq:mean_query_value_diffrentiated} at the point $\vec{\frac{1}{2}}$ we get:

 \begin{align}
\frac{\partial Pr[Q(D^s)]}{\partial p_f} =& \sum_{S\subseteq D_n \setminus \{f\}} \frac{1}{2}^{|D_n|-1} \cdot (Q(S\cup D_x\cup \{f\}) - Q(S\cup D_x)) \notag \\ 
=& \frac{1}{2}^{|D_n|-1}\cdot Banzhaf(Q,D,f)
\label{eq:grad_mean_val}
\end{align}

\end{proof}

Let $\varphi$ be the lineage of $Q$ on $D$, where the symbolic probabilistic extension of $\varphi$, $\varphi^s$, is achieved by assigning each variable with corresponding symbolic probability token as the fact in $D^s$ it represents. It is easy to see the equivalence of $Pr[Q(D^s)]$ with $Pr[\varphi^s]$. Thus, we can compute the Banzhaf values by constructing a circuit that allows us, given a DNF Boolean function $\varphi$ and the probabilities of its variables, to compute $Pr[\varphi]$ and its gradient.
}
\nop{\paragraph{Constructing an arithmetic circuit for the probability}
We observe that the lineage can be compiled to such arithmetic circuit in a fashion similar to that of a d-tree. This can be done by taking an existing d-tree and replacing its gates with arithmetic gates, or by compiling the lineage directly into an arithmetic circuit that computes the lineage's probability. 
Due to the simplicity of the arithmetic gates, we can easily compute the gradients using back-propagation. 
It is a known result that computing a polynomial $g(x_1, x_2, ..., x_n)$ is essentially equivalent to computing $g$ and its gradient  \cite{shpilka2010arithmetic}\cite{BAUR1983317}. For clarity, we show how to calculate the partial derivatives directly.}

\nop{
\paragraph{Computation using d-trees} We show here the computation of the probability and its derivatives through d-trees. As proved before, these would produce the Banzhaf and Shapley values. 
\begin{proposition}
\label{theorem: gradient_computation_in_linear_time}
Given a constructed d-tree $T_n$ representing a function $\varphi$ with a symbolic probabilistic extension $\varphi^s$, we can compute in $\mathcal{O}(|T_n|)$ time the probability of $\varphi$ and the gradient of the probability according to each variable's symbolic probability. 
\end{proposition}

\begin{proof}
We prove Proposition~\ref{theorem: gradient_computation_in_linear_time} by examining the different d-tree gates and demonstrating the computation directly. Table~\ref{tab:probability_and_gradients_for_gates} presents the resulting equations. Back-propagation allows us to efficiently compute gradients by applying the chain rule ($\frac{\partial{f(g(x))}}{\partial x} = \frac{\partial f}{\partial g}\cdot \frac{\partial g}{\partial x}$) for derivatives repeatedly, propagating derivatives from output nodes back to input nodes. Thus, this approach alongside the equations in Table~\ref{tab:probability_and_gradients_for_gates} enable us to compute the Banzhaf value with $O(1)$ calculations per node.
\end{proof}

\nop{Consider the N-ary independent OR gate.
Let $\varphi = \varphi_1 \vee \varphi_2 \vee ... \vee \varphi_n$ be a monotone boolean formula such that \\$\forall i,j\in [n].\vars(\varphi_i)\cap \vars(\varphi_j) = \emptyset$

We can express the probability that $\varphi$ is satisfied, as a function of the probabilities of its sub-formulas:

\begin{align}
Pr[\varphi] =& \,1 - \prod_{i\in[n]} (1-Pr[\varphi_i]))
\label{eq:prob_independent_or}
\end{align}

Hence, the partial derivative, or equivalently, the portion of assignments where each sub-formula is ``critical`` (its value determines the value of $\varphi$), is achieved by:
\begin{align}
\frac{\partial(Pr[\varphi])}{\partial( Pr[\varphi_i])} =& \prod_{j\in[n]\setminus \{i\}} (1-Pr[\varphi_j]) = (1 - Pr[\varphi])/ (1-Pr[\varphi_i])
\label{eq:grad_independent_or}
\end{align}


Similarly, for the N-ary independent And gate:

\begin{align}
Pr[\varphi] =& \prod_{i\in[n]} Pr[\varphi_i]
\label{eq:prob_independent_and}
\end{align}

\begin{align}
\frac{\partial(Pr[\varphi])}{\partial(Pr[\varphi_i])} =& \prod_{j\in[n]\setminus \{i\}} Pr[\varphi_j] = Pr[\varphi]/ Pr[\varphi_i]
\label{eq:grad_independent_and}
\end{align}

Lastly, let $\varphi = f \wedge \varphi_1 \vee \neg f \wedge \varphi_2$ be a monotone Boolean formula such that $f \notin \varphi_1,\varphi_2$. We get that:

\begin{align}
Pr[\varphi] =& Pr[f] \cdot Pr[\varphi_1] + (1-Pr[f])\cdot Pr[\varphi_2]
\label{eq:prob_exclusive_or}
\end{align}

\begin{align}
\frac{\partial(Pr[\varphi])}{\partial(Pr[\varphi_1])} =& Pr[f] \\ 
\frac{\partial(Pr[\varphi])}{\partial(Pr[\varphi_2])} =& 
1 - Pr[f]\\
\frac{\partial(Pr[\varphi])}{\partial(Pr[f])} =& Pr[\varphi_1] - Pr[\varphi_2]
\label{eq:grad_exclusive_or}
\end{align}
}

\begin{table}[h]
\caption{Probability and derivative formulas for different gates. Here, $\varphi$ represents the formula derived by the gate, and $\varphi_i$ represent the subformula inputs.}

    \centering
    \scriptsize
    \renewcommand{\arraystretch}{1.5}
    \begin{tabular}{|c|l|l|}
        \hline
        Gate & Probability Expression & Partial Derivative \\ \hline
        $\oplus$ & $Pr[\varphi] = 1 - \prod_{i\in[n]} (1 - Pr[\varphi_i])$ 
        & $\frac{\partial(Pr[\varphi])}{\partial( Pr[\varphi_i])} = \frac{(1 - Pr[\varphi])}{(1 - Pr[\varphi_i])}$ \\ \hline
         $\odot$ & $Pr[\varphi] = \prod_{i\in[n]} Pr[\varphi_i]$ 
        & $\frac{\partial(Pr[\varphi])}{\partial(Pr[\varphi_i])} = \frac{Pr[\varphi]}{Pr[\varphi_i]}$ \\ \hline
        $\sqcup_f$ & $Pr[\varphi] = Pr[f] \cdot Pr[\varphi_1] + (1-Pr[f])\cdot Pr[\varphi_2]$ 
        & $\frac{\partial(Pr[\varphi])}{\partial(Pr[\varphi_1])} = Pr[f]$ \\ 
        & & $\frac{\partial(Pr[\varphi])}{\partial(Pr[\varphi_2])} = 1 - Pr[f]$ \\ 
        & & $\frac{\partial(Pr[\varphi])}{\partial(Pr[f])} = Pr[\varphi_1] - Pr[\varphi_2]$ \\ \hline
    \end{tabular}
    \label{tab:probability_and_gradients_for_gates}
\end{table}


Algorithm~\ref{alg:GradientBanzhaf} presents the algorithm for computing Banzhaf values using the gradient approach. Given a d-tree for the lineage $T_\varphi$, Lines~\ref{alg_line: GradientBanzhaf: compute probabilities - start} - \ref{alg_line: GradientBanzhaf: compute probabilities - end} initialize the leaves probabilities to $\frac{1}{2}$, and then traverses the d-tree in a bottom up order and computes the probabilities at every node of $T_\varphi$ according to Table~\ref{tab:probability_and_gradients_for_gates}. Then, Line~\ref{alg_line: GradientBanzhaf: initialize_grad_root} initializes the gradient value at the root to be $2^{|\vars(\varphi)|-1}$. Finally, Lines~\ref{alg_line: GradientBanzhaf: grad backpropagation start}-\ref{alg_line: GradientBanzhaf: grad backpropagation end} perform a top-down traversal of $T_\varphi$, computing the partial derivative at each node according to Table~\ref{tab:probability_and_gradients_for_gates}. The algorithm returns the gradient value of the variable leaves as the Banzhaf values.

To adapt \lexaban to Shapley values computation (\lexashap), we proceed as follows. \daniel{In Line...}

The algorithm for computing Shapley values \lexashap is with minor changes resulting from Theorem~\ref{theorem:GradientEquivalence}. Instead of computing the probability at each node, we compute the k-probabilities. We then initialize the $g$ property of the root to be the vector of Shapley coefficients. Finally, instead of propagating the $g$ values with equations representing $\frac{\partial Pr[\varphi]}{\partial Pr[\varphi_i]}$ we compute them as representing $\sum_k\frac{\partial Pr_k[\varphi]}{\partial Pr_{j}[\varphi_i]}$. These values are propagated using efficient equations to the variable leaves with $O(\vars(D_n))$ operations per d-tree node $D_n$ leading to the complexity in Proposition~\ref{theorem: linear Banzhaf and Shapley}.
}

\nop{\begin{algorithm}[t]
\caption{GradientBanzhaf}
\label{alg:GradientBanzhaf}
\begin{algorithmic}[1]
\Require D-tree $T_{\varphi}$
\Ensure Banzhaf values for $\varphi$
\For{each node $n$ in the tree in bottom-up order} 
\State Compute probability of $n$ using equations \ref{eq:prob_independent_or}, \ref{eq:prob_independent_and}, \ref{eq:prob_exclusive_or} \label{alg_line: GradientBanzhaf: compute probabilities}
\EndFor
\State Initialize the gradient of the root to be $2^{|\text{vars}(\varphi)|-1}$ \label{alg_line: GradientBanzhaf: initialize_grad_root}
\For{each node $n$ in the tree in top-down order} \label{alg_line: GradientBanzhaf: grad backpropagation start}
    \State Compute gradient of $n$ using equations \ref{eq:grad_independent_or}, \ref{eq:grad_independent_and}, \ref{eq:grad_exclusive_or} \label{alg_line: GradientBanzhaf: grad backpropagation end}
\EndFor
\State \Return Gradient of the variable leaves as the Banzhaf values
\end{algorithmic}
\end{algorithm}}

\nop{\begin{corollary}
There exists a linear-time reduction from computing the Banzhaf value for all database tuples to model counting.
\label{Corrolary: BanzIsModelCount}
\end{corollary}
Corollary \ref{Corrolary: BanzIsModelCount} provides a stronger result than the polynomial-time reduction derived from \cite{WhenIsShapleyValueAMatterOfCoutning}. For Shapley value, a similar linear-time reduction could be made to the problem of fixed-size model counting described in \cite{FromShapleyValueToModelCountingAndBack}.
\omer{are these true? there's a linear time reduction from computing gradient to computing a polynomial}}

\nop{\subsection{LExaBan and LExaShap Algorithms}
\label{subsect: LExaBan}
 Figure \ref{alg:LExaBan} shows the LExaBan algorithm that computes Banzhaf values for all variables in an input function $\varphi$ using a combination of the lifted compilation procedure and gradient-based computation.
The algorithm begins by applying the lifting compilation procedure detailed in Section \ref{subsubsect: lifting} and presented in Figure \ref{alg:lifted_compile}. It continues by treating the d-tree as an arithmetic circuit as described in \ref{subsubsect: arithmetic} and evaluating and differentiating that circuit to obtain the Banzhaf values for all variables in the function.
We note that a simple adaptation of the LExaBan algorithm could take a precompiled d-tree as input, enabling direct compilation into an arithmetic circuit followed by the computation of the value and gradient. The algorithm for Shapley value computation, LExaShap, is defined analogously.

\begin{algorithm}
\caption{LExaBan}
\label{alg:LExaBan}
\begin{algorithmic}[1]

\Require DNF $\varphi$
\Ensure Banzhaf values

    \State $T_\varphi \gets LiftedCompile(\varphi)$\\
    \Return $GradientBanzhaf(T_\varphi)$

\end{algorithmic}
\end{algorithm}}

\nop{\subsection{Expected Banzhaf and Shapley values}
\label{subsect: Expected_Banzhaf}
Our techniques extend to the expected Banzhaf or Shapley values suggested in \cite{Sig24:ExpectedShapleyLikeScores} for probabilistic databases.
\begin{definition}
Given a probabilistic database $D$, a fact $f\in D$ and a conjunctive query $Q$:
\begin{align*}
    &EBanzhaf(Q,f,D) \overset{def}{=}\\ &\sum_{D'\subseteq D_n, f\in D'} Pr[D']\cdot [Q(D' \cup D_x) - Q(D' \setminus \{f\} \cup D_x)]
\end{align*}
\end{definition}

Since we already treat each fact as if it is assigned a probability, Our algorithm and complexity results extend to this scenario with only minor modifications. The Expected Banzhaf value is achieved via evaluating the expectation and its gradient at the point of the probability vector of the different facts.  Thus, we are able to present an improved complexity for this scenario as well.}

\begin{algorithm}
\caption{$\gradientBanzhaf$}
\label{alg:GradientBanzhaf}
\begin{algorithmic}[1]
\Require D-tree $T_{\varphi}$ for formula $\varphi$
\Ensure Banzhaf values for all variables in $\varphi$
\State Initialize $v.p \gets \frac{1}{2}$ for each variable node $v$ \label{alg_line: GradientBanzhaf: compute probabilities - start}
\For{each node $v$ in the tree in bottom-up order} 
\State Compute $v.p$ according to Table~\ref{tab:probability_and_gradients_for_gates} and $v.gate$ \label{alg_line: GradientBanzhaf: compute probabilities - end}
\EndFor
\State  $T.root.g = 2^{|\text{vars}(\varphi)|-1}$ \label{alg_line: GradientBanzhaf: initialize_grad_root}
\For{each node $v$ excluding $T.root$ in top-down order}  \label{alg_line: GradientBanzhaf: grad backpropagation start}
\State $v.g \gets v.parent.g \cdot \frac{\partial v.parent.p}{\partial v.p}$ according to Table~\ref{tab:probability_and_gradients_for_gates} and $v.parent.gate$ \label{alg_line: GradientBanzhaf: grad backpropagation end}
\EndFor
\For{each variable $x\in\vars(\varphi)$}
\State $\banz(T_\varphi, x) \leftarrow \sum_{\text{node } v \text{ in } T \text{ for variable } x} v.g$ 
\EndFor
\State \Return $\banz(T_\varphi, x)$ for all variables $x\in\vars(\varphi)$ 
\end{algorithmic}

\end{algorithm}

\section{Attribution for Aggregate queries}
\label{sec: aggregates}


\nop{Notice that Hierarchical queries with aggregates are known to have a polynomial size d-trees that can be compiled in polynomial time from the lineage\cite{AggregationInProbabalisticDatabasesViaKnowlegeCompilation} and therefore, our proofs are going to extend the known tractability cases for boolean queries.}

\nop{Finally, we provide in Section~\ref{subsect: minmax} hardness result for min/max aggregate queries for any non-tractable case, completing the complexity dichotomy.}

\subsection{Basic Algorithms}
\label{subsect: linear aggregates}

We separately consider linear aggregates (SUM and COUNT) and non-linear but idempotent ones ($\minagg$ and $\maxagg$). 

\paragraph*{$\sumagg$ and $\countagg$} Our solution for linear aggregates uses linearity of Banzhaf and Shapley to reduce the problem to the case of SPJU queries. Let $\Phi=\sum_{i}^M \varphi_{i} \otimes m_i$ be a \nop{semimodule expression}\tensorset\ where the sum uses the $+_{M}$-operator of the monoid $M$, and 
let $\varphi = \bigvee \varphi_i$ be the Boolean part of 
$\Phi$. For  $\sumagg$, $+_{M}$ is the standard $+$, and we obtain: 

\begin{align*}
         &\banz(\Phi, x)  = \sum_{S\subseteq \vars(\varphi)\setminus\{x\}} \Phi[S\cup \{x\}] - \Phi[S] \\
        &= \sum_{S\subseteq \vars(\varphi)\setminus\{x\}} \sum_{i} \big(\varphi_i[S\cup \{x\}] - \varphi_i[S]\big) \cdot m_i  
        = \sum_{i} \banz(\varphi_i, x)\cdot m_i
    \end{align*}

Analogously, $\shap(\Phi, x)  = \sum_{i} \shap(\varphi_i, x)\cdot m_i$ (as observed already in \cite{TheShapleyValueofTuplesinQueryAnswering}).
For $\countagg$, we set $m_i=1$ for all $i$. Thus, to compute the  Banzhaf/Shapley value for a variable $x$ in $\Phi$, we (1) construct  a d-tree $T_i$ for each $\varphi_i$, (2) compute for each $T_i$ and each $x \in \varphi_i$ the  Banzhaf/Shapley value for $x$ as in Section~\ref{sec: Algorithm}, and (3) derive the Banzhaf/Shapley of $x$ in $\Phi$ using the above equations. 

\nop{

\begin{example}
Consider the semimodule expression
$\Phi = (a_1 \vee a_2 \vee a_3) \wedge m_3\otimes 377 +_{sum} (a_1 \vee a_3) \wedge m_2 \otimes 322 +_{sum}a_4 \wedge m_1\otimes 176$.
%
%
The Banzhaf value of variable $a_1$ with respect to each of the expressions 
$(a_1 \vee a_2 \vee a_3) \wedge m_3$, $(a_1 \vee a_3) \wedge m_2$, and $a_4 \wedge m_1$ is $16$, $32$, and $0$, respectively. Thus, the Banzhaf value of $a_1$ with respect to 
$\Phi$ is $16\cdot 377 + 32\cdot 322 +0 \cdot 176 = 16336$.
\end{example}
}

\paragraph{$\minagg$ and $\maxagg$} For $\minagg$ and $\maxagg$ which are not linear, we adopt a different approach. As explained in Section~\ref{sec:Preliminaries}, the lineage for queries with these aggregates is a semimodule expression. We first construct a d-tree for the entire semimodule expression and then compute Banzhaf/Shapley values from the d-tree. The former was shown in \cite{AggregationInProbabalisticDatabasesViaKnowlegeCompilation} and we next show how the latter is performed.  



Let $\pv_{\Phi}=\{\Phi[\theta] | \theta\subseteq \vars(\Phi)\}$ be the set of values to which the 
semimodule expression $\Phi$ can evaluate to under possible 
valuations. 
If $\Phi$ is the lineage of an aggregate query $\langle\minagg, \gamma, Q\rangle$ or $\langle\maxagg, \gamma, Q\rangle$, the size of 
$\pv_{\Phi}$ is bounded by the size of the result  of
$Q$, which is polynomial in the input
database size. We relate Banzhaf/Shapley values of a variable  $x$ in a semimodule expression
with the model counts of the expression under substitutions that set $x$ to $0$ or $1$:
\nop{
Note that for min and max monoids,the set of values to which $\Phi$ may be evaluated for some truth assignments is $PV_{\Phi}=\{\Phi(\theta) | \theta\subseteq \vars(\Phi)\}$. The size of $PV_{\Phi}$ equals the number of tuples in the underlying UCQ evaluation result, thus polynomial in the size of the database. We may restrict attention to $PV_{\Phi}$ when considering the Banzhaf and Shapley values, as follows.
}

\begin{proposition}
\label{corollary: Banzhaf for aggregates separated by value}
Given a semimodule expression $\Phi$ and a variable $x \in \vars(\Phi)$, it holds:
\begin{align*}
    \banz(\Phi, x) =& \sum_{p\in \pv_{\Phi}}{(\#^p \Phi[x:=1] - \#^p \Phi[x:=0])\cdot p}\\
\shap(\Phi, x) =& \sum_{p\in \pv_{\Phi}, k\in[\mid \vars(\Phi)\mid]}
\hspace{-1em}\big[ (\#_k^p \Phi[x:=1] - \#_k^p \Phi[x:=0])\cdot p\big] \cdot C_k
\end{align*}

where  $C_k = \frac{|k!|\cdot \mid \mid \vars(\Phi) \mid - k - 1|!}{\mid \vars(\Phi) \mid!}$ 
\end{proposition}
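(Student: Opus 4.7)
The plan is to derive both equations directly from the definitions of $\banz$ and $\shap$ for \tensorset expressions given in Equations~(\ref{eq: banzhaf value formula and semimodule}) and (\ref{eq: shapley value formula and semimodule}), by bucketing valuations according to the value that $\Phi$ takes on them. The key observation I will use throughout is that, for any valuation $Y \subseteq \vars(\Phi) \setminus \{x\}$, we have $\Phi[Y \cup \{x\}] = \Phi[x:=1][Y]$ and $\Phi[Y] = \Phi[x:=0][Y]$, because substituting $x$ with a constant before evaluating is the same as evaluating after fixing $x$ to that constant. This reduces the marginal contribution in each summand to a difference of values of two fixed expressions over the variables $\vars(\Phi)\setminus\{x\}$.

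For the Banzhaf equation, I would start from
\[
\banz(\Phi,x) \;=\; \sum_{Y \subseteq \vars(\Phi)\setminus\{x\}} \bigl(\Phi[Y \cup \{x\}] - \Phi[Y]\bigr),
\]
rewrite the two terms as $\Phi[x:=1][Y]$ and $\Phi[x:=0][Y]$, and split the sum into two independent sums. In each sum, I would partition the valuations $Y \subseteq \vars(\Phi)\setminus\{x\}$ according to the value $p \in \pv_{\Phi}$ that $\Phi[x:=b][Y]$ takes, yielding by definition $\sum_{p} p \cdot \#^p \Phi[x:=b]$ for $b\in\{0,1\}$. Subtracting these two expressions gives the claimed identity. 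Here I need to check that the partition indexed by $\pv_\Phi$ is exhaustive for both $\Phi[x:=1]$ and $\Phi[x:=0]$; this follows because $\pv_{\Phi[x:=b]} \subseteq \pv_\Phi$ for $b\in\{0,1\}$, and the tentative ``missing'' value $0_{\mathbb{R}}$ arising from the special-case clause in the definition of $\Phi[\theta]$ (see Remark~\ref{rem:max-min}) contributes $0 \cdot (\ldots) = 0$ to both sums regardless.

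For the Shapley equation, the approach is the same but with an extra layer of bookkeeping on $k = |Y|$. Starting from
\[
\shap(\Phi,x) \;=\; \sum_{Y \subseteq \vars(\Phi)\setminus\{x\}} C_Y^{\vars(\Phi)} \cdot \bigl(\Phi[Y \cup \{x\}] - \Phi[Y]\bigr),
\]
I would group summands first by $k = |Y|$ (pulling out the common coefficient $C_k$, since $C_Y^{\vars(\Phi)}$ depends only on $|Y|$), rewrite the marginal contribution as before using the substitution identities, and then within each $k$-level group further partition valuations by the value $p \in \pv_\Phi$. The inner count is then exactly $\#_k^p \Phi[x:=b]$, and summing over $p$ and $k$ yields the stated formula.

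The argument is essentially a careful regrouping of sums, so I do not expect any serious obstacle; the only subtlety to get right is the treatment of the $0_{\mathbb{R}}$ default value for valuations that falsify every clause of $\Phi$, which must either be included explicitly as a value in $\pv_\Phi$ or absorbed harmlessly into the sum because its contribution is $0$.
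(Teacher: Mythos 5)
Your plan is correct and follows essentially the same route as the paper's own proof: split the Banzhaf/Shapley sums from Equations~(\ref{eq: banzhaf value formula and semimodule}) and (\ref{eq: shapley value formula and semimodule}) into the $[x:=1]$ and $[x:=0]$ parts, partition valuations by the value $p \in \pv_\Phi$ (and by size $k$ for Shapley), and identify the resulting counts with $\#^p$ and $\#_k^p$ of the substituted expressions. Your extra remarks on the default value $0_{\mathbb{R}}$ and on $\pv_{\Phi[x:=b]} \subseteq \pv_\Phi$ only make explicit what the paper leaves implicit.
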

\begin{proof}
Proposition \ref{corollary: Banzhaf for aggregates separated by value} is proved by partitioning Equations \ref{eq: banzhaf value formula and semimodule},\ref{eq: shapley value formula and semimodule} based on the monoid values.

For the Banzhaf value:
{\small\begin{align*}
    &\banz(\Phi, x) = \sum_{\bm Y\subseteq \vars(\Phi) \setminus \{x\}}{\Phi(\bm Y\cup\{x\}) - \Phi(\bm Y)} \\=& \sum_{\bm Y\subseteq \vars(\Phi)\setminus \{x\}}{\Phi(\bm Y\cup\{x\})} - \sum_{\bm Y\subseteq \vars(\Phi)\setminus \{x\}}{\Phi(\bm Y)}\\ =&
    \sum_{p\in PV_\Phi}{\sum_{\substack{\bm Y\subseteq \vars(\Phi)\setminus \{x\},\\ \Phi(\bm Y\cup \{x\}) = p}}{\Phi(\bm Y\cup\{x\})}} - \sum_{p\in PV_\Phi}{\sum_{\substack{\bm Y\subseteq \vars(\Phi)\setminus \{x\},\\ \Phi(\bm Y) = p}}{\Phi(\bm Y)}} \\ =& \sum_{p\in PV_\Phi}{\sum_{\substack{\bm Y\subseteq \vars(\Phi)\setminus \{x\},\\ \Phi(\bm Y\cup \{x\}) = p}}{p}} - \sum_{p\in PV_\Phi}{\sum_{\substack{\bm Y\subseteq \vars(\Phi)\setminus \{x\},\\ \Phi(\bm Y) = p}}{p}} \\=&
    \sum_{p\in \pv_{\Phi}}{(\#^p \Phi[x:=1]} - \sum_{p\in \pv_{\Phi}}{\#^p \Phi[x:=0])\cdot p} \\ =& 
    \sum_{p\in \pv_{\Phi}}{(\#^p \Phi[x:=1] - \#^p \Phi[x:=0])\cdot p}
\end{align*}\normalsize}

The proof for Shapley values is similar.

\end{proof}



We can further show, via a dynamic programming algorithm:

\begin{proposition}
\label{prop:counts}
Let $T$ be a d-tree representing a semimodule expression $\Phi$ over a variable set $\bm X$ and the monoid $M=(\overline{\mathbb{R}}, \max,-\infty)$ or the monoid  $M=(\overline{\mathbb{R}}, \min,\infty)$. For $p \in \pv_\Phi$ and $k \in [|\vars(T)|]$, we can compute $\#^p \Phi$ and $\#_k^p \Phi$ in PTIME. 
\end{proposition}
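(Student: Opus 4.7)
The plan is a bottom-up dynamic program over $T$. For each node $v$, I will maintain a table $N_v[q, j]$ counting the valuations $\theta \subseteq \vars(T_v)$ of size $j$ with $T_v[\theta] = q$; reading the table at the root then yields $\#^p \Phi = \sum_j N_T[p,j]$ and $\#_k^p \Phi = N_T[p,k]$. The observation that makes the program run in polynomial time is that, for the $\max$ and $\min$ monoids, the set $\pv_{T_v}$ of values that a subtree $T_v$ can take is bounded by the number of monoid-value leaves beneath $v$, hence polynomial in $|T|$; combined with $j \le |\vars(T)|$, each table $N_v$ has polynomial size.

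I would give a recurrence per gate type. For monoid-value leaves $m$: $N_v[m,0]=1$. For the Boolean sub-dtrees (variable leaves and the Boolean gates $\odot$, $\oplus$, $\sqcup$), the value domain is $\{0,1\}$ and I use the standard model-counting recurrences on d-trees, combining children by convolutions over valuation sizes (using the independence guaranteed by the $\odot$ and $\oplus$ gates, and splitting on the condition variable for $\sqcup$). For an $\otimes$ gate representing $(\bigwedge_i \varphi_i) \otimes \Phi'$ with independent Boolean children and semimodule child $T_{\Phi'}$: valuations that satisfy every $\varphi_i$ contribute to $N_v[q,j]$, for each $q \in \pv_{T_{\Phi'}}$, via a size-convolution of the ``all true'' Boolean-child counts with $N_{T_{\Phi'}}[q,\cdot]$, while valuations that falsify at least one $\varphi_i$ contribute to $N_v[0_M,j]$. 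For a semimodule $\oplus$ gate with independent children $T_{\Phi_1},\ldots,T_{\Phi_n}$ under $\max$: $T_v[\theta] \le q$ iff every child evaluates to at most $q$, so writing $N^{\le q}$ for the cumulative counts, $N_v^{\le q}[j] = \sum_{j_1+\cdots+j_n=j} \prod_i N_{T_{\Phi_i}}^{\le q}[j_i]$ by independence, and $N_v[q,j] = N_v^{\le q}[j] - N_v^{<q}[j]$; the $\min$ case is symmetric.

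The independence of children, which is built into the d-tree definition, is what makes these convolutions valid without over-counting; correctness of the whole program then follows by a routine induction on the d-tree structure. Each recurrence does $\mathrm{poly}(|\pv_T|, |\vars(T)|)$ work per node, and both quantities are polynomial in $|T|$, so the total running time is polynomial. I expect the main obstacle to be the discrepancy between the monoid neutral $0_M = \pm\infty$ that internal subtrees naturally produce under monoid semantics and the convention of Remark~\ref{rem:max-min} that sets $\Phi[\theta]=0$ when the contribution bag $VB$ is empty. I would handle this by keeping a distinguished entry $N_v[0_M, j]$ throughout the bottom-up pass and, only at the root, folding it into $N_T[0,j]$ before extracting $\#^p \Phi$ and $\#_k^p \Phi$.
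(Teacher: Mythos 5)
Your proposal is correct and follows essentially the same route as the paper: a bottom-up dynamic program that stores, at every d-tree node, counts stratified by outcome value and by valuation size, combines independent children by size convolutions, and exploits that under $\max/\min$ the per-node value set is polynomial. The paper packages exactly this as per-gate lemmas. Two differences are worth noting. First, for the semimodule $\oplus$ gate the paper sums directly over pairs of child values $r,t$ with $r +_M t = p$ and adds explicit correction terms of the form $\#^p\Phi_1\cdot\bigl(2^{|\vars(\Phi_2)|}-\#\Phi_2\bigr)$ for the valuations where only one child is satisfied; your cumulative-count formulation $N_v[q,j]=N_v^{\le q}[j]-N_v^{<q}[j]$, together with the distinguished $0_M$ entry you carry through the recursion, absorbs those corrections automatically and handles the Remark~\ref{rem:max-min} convention in one place at the root, which is arguably cleaner and equally polynomial. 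Second, the paper's proof additionally includes a lemma for an \emph{exclusive} sum $\Phi_1 +_M \Phi_2$ over identical variable sets (with $\#_k^p S = \#_k^p\Phi_1 + \#_k^p\Phi_2$), which is the Shannon-style decomposition its compiler actually produces for aggregate lineages (see the $\sqcup$ nodes in Figure~\ref{fig:complete_semimodule_d-tree}); your case analysis covers only the gates of the formal d-tree definition ($\otimes$ and independent $\oplus$ at the semimodule level), so to apply the result to the d-trees produced in practice you should add this easy exclusive case to your gate recurrences. With that addition, your induction and complexity accounting go through as stated.
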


\begin{proof}
We show the computation for binary gates and the results may easily be extended to handle non-binary gates as well. In the following, let $M\in \{\min, \max\}$ be the monoid.

\begin{lemma}
\label{lemma: prob_max independent And}

     Let $\Phi$ be a semimodule expression and let $\psi$ be a boolean formula such that $\psi$ and the boolean part of $\Phi$ are independent. Let $S = \psi \otimes \Phi$ and $p\in M$, $p\neq 0_M$, then:
{\small \begin{align*}
\#^p S &= \#\psi \cdot \#^p\Phi\\
\#_k^p S &= \sum_{i\in [k]}{\#_{i}\psi \cdot \#_{k-i}^p\Phi }
\end{align*} \normalsize}
\end{lemma}

\begin{proof}
The equations hold because each assignment (of size $k$) where $S$ evaluates to $p$ is constructed out of a satisfying assignment of $\psi$ and an assignment for $\Phi$ that evaluates to $p$ (where the combined sizes are $k$). 
\end{proof}

\begin{lemma}
\label{lemma: prob_max independent Or}
    Let $S = \Phi_1 +_M \Phi_2$ such that $\Phi_1$ and 
    $\Phi_2$ are independent, then:
\small{
\begin{align*}
&\#^p S = \sum_{r,t\in PV_{\Phi}, \, r +_M t = p} \left( \#^r\Phi_1 \cdot \#^t\Phi_2 \right) \\
&\quad + \#^p\Phi_1 \cdot \left( 2^{|\vars(\Phi_2)|} - \#\Phi_2 \right) 
+ \#^p\Phi_2 \cdot \left( 2^{|\vars(\Phi_1)|} - \#\Phi_1 \right) \\
&\#_k^p S = \sum_{i\in [k]} \Bigg[ \sum_{r,t\in PV_{\Phi}, \, r +_M t = p} \left( \#_i^r\Phi_1 \cdot \#_{k-i}^t\Phi_2 \right) \\
&\quad + \#_i^p\Phi_1 \cdot \left( \binom{|\vars(\Phi_2)|}{k-i} - \#_{k-i}\Phi_2 \right)
\quad + \#_{k-i}^p\Phi_2 \cdot \left( \binom{|\vars(\Phi_1)|}{i} - \#_i\Phi_1 \right) \Bigg]
\end{align*}}

\end{lemma}

\begin{proof}
The equations hold because every satisfying assignment (of size $k$) where $S$ evaluates to $p$ is constructed out of a assignment for which $\Phi_1$ evaluates to $r$ and an assignment for which $\Phi_2$ evaluates to $t$ (where the sum of assignments sizes is $k$) such that $r +_M t = p$. This can be seen as the sum of such assignments, by whether $\Phi_1$ alone is satisfied, $\Phi_2$ alone is satisfied, or both. 
\end{proof}

\begin{lemma}
\label{lemma: prob_max exclusive or}

Let $\Phi_1, \Phi_2$ be exclusive semimodule expressions s.t. $\Phi_1$ and 
    $\Phi_2$ have identical sets of variables. Let $S = \Phi_1 +_M \Phi_2$, then:
\begin{align*}
\#^p S &= \#^p \Phi_1 + \#^p\Phi_2\\
\#_k^p S &= \#_k^p \Phi_1 + \#_k^p\Phi_2
\end{align*}
\end{lemma}

\begin{proof}
Since the satisfying assignments for the boolean parts of the semimodule expressions are disjoint, any assignment (of size $k$) of $S$ that evaluates to $p$ is either an assignment (of size $k$) that evaluates to $p$ of $\Phi_1$ or an assignment that evaluates to $p$ of $\Phi_2$.
\end{proof}

Having shown the computation for each type of gate, a bottom-up algorithm to compute assignment counts for all (polynomialy many) relevant values $p$ and $k$ and all d-tree nodes follows directly, thereby concluding the proof of Proposition \ref{prop:counts}.\end{proof}

\begin{example}
    Figure \ref{fig:complete_semimodule_d-tree} presents a d-tree for the lineage of $Q_2$ from Table \ref{tab:lineages_lifted}, along with some steps of the model counts computation. Denote the $\oplus$ node that is the right child of the right child of the root by $T$. Its left subtree has one assignment with value of $322$, and its right subtree has one assignment with value of $377$. Applying the equation from Lemma \ref{lemma: prob_max independent Or} we get that $\#^{377}T = 1\cdot 1 + 1\cdot 1 +0\cdot 1 = 2$, as seen in the Figure. 
\end{example}
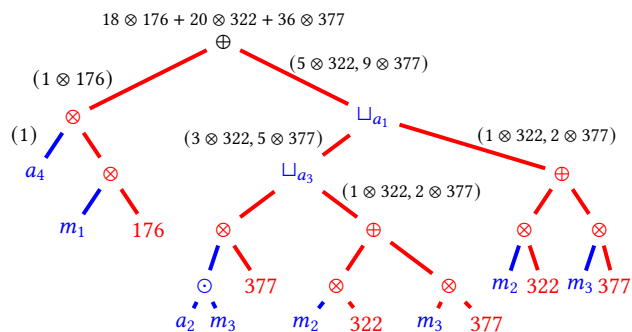
\begin{figure}[t!]
\begin{tikzpicture}
  \node at (0,0)  (root) {$\oplus$};
  \node[red] at (-2,-1)  (21) {$\otimes$} edge[-, red, line width=1.5pt] (root);
  \node[blue, font=\bfseries] at (2,-1)  (22) {$\sqcup_{a_1}$} edge[-, red, line width=1.5pt] (root);
  \node[blue] at (-2.5,-1.75) (31) {$a_4$} edge[-, blue, line width=1.5pt] (21);
  \node[red] at (-1.5,-1.75)  (32) {$\otimes$} edge[-, red, line width=1.5pt] (21);
\node[blue] at (-2,-2.5)  (41) {$m_1$} edge[-, blue, line width=1.5pt] (32);
\node[red] at (-1,-2.5)  (42) {$176$} edge[-, red, line width=1.5pt] (32);

  \node[blue, font=\bfseries] at (1,-1.75)  (34) {$\sqcup_{a_3}$} edge[-, red, line width=1.5pt] (22);
  \node[red, font=\bfseries] at (4.5,-1.75)  (33) {$\oplus$} edge[-, red, line width=1.5pt] (22);
  
  \node[red, font=\bfseries] at (4, -2.5) (43) {$\otimes$} edge[-, red, line width=1.5pt] (33);
  \node[red, font=\bfseries] at (5, -2.5) (44) {$\otimes$} edge[-, red, line width=1.5pt] (33);
    \node[blue, font=\bfseries] at (4.75, -3.25) (51) {$m_3$} edge[-, blue, line width=1.5pt] (44);
      \node[red, font=\bfseries] at (5.2, -3.25) (52) {$377$} edge[-, red, line width=1.5pt] (44);
  
  \node[blue, font=\bfseries] at (3.75, -3.25) (53) {$m_2$} edge[-, blue, line width=1.5pt] (43);
  \node[red, font=\bfseries] at (4.25, -3.25) (54) {$322$} edge[-, red, line width=1.5pt] (43);
  
  \node[red] at (0, -2.5) (45) {$\otimes$} edge[-, red, line width=1.5pt] (34);
  \node[red, font=\bfseries] at (2, -2.5) (46) {$\oplus$} edge[-, red, line width=1.5pt] (34);
  
  \node[blue, font=\bfseries] at (-0.25, -3.25) (56) {$\odot$} edge[-, blue, line width=1.5pt] (45);
    \node[blue, font=\bfseries] at (-0.5, -3.75) (61) {$a_2$} edge[-, blue, line width=1.5pt] (56);
        \node[blue, font=\bfseries] at (0, -3.75) (62) {$m_3$} edge[-, blue, line width=1.5pt] (56);

  \node[red, font=\bfseries] at (0.5, -3.25) (57) {$377$} edge[-, red, line width=1.5pt] (45);
  
  \node[red, font=\bfseries] at (1.5, -3.25) (58) {$\otimes$} edge[-, red, line width=1.5pt] (46);
  \node[red, font=\bfseries] at (3, -3.25) (59) {$\otimes$} edge[-, red, line width=1.5pt] (46);
  
  \node[blue] at (1.1, -3.75) (63) {$m_2$} edge[-, blue, line width=1.5pt] (58);
  \node[red] at (1.9, -3.75) (64) {$322$} edge[-, red, line width=1.5pt] (58);
  
  \node[blue] at (2.75, -3.75) (63) {$m_3$} edge[-, red, line width=1.5pt] (59);
  \node[red] at (3.5, -3.75) (64) {$377$} edge[-, red, line width=1.5pt] (59);


\node at (0,0.3)  (x) {\footnotesize$18\otimes176 + 20\otimes322 + 36\otimes377$};

\node at (4.3,-1.25)  (x) {\footnotesize$(1\otimes 322, 2\otimes 377)$};

\node at (1.8,-0.3)  (x) {\footnotesize{$(5\otimes 322, 9\otimes 377)$}};

\node at (2.5,-2)  (x) {\footnotesize{$(1\otimes 322, 2\otimes 377)$}};

\node at (0.4,-1.3)  (x) {\footnotesize{$(3\otimes 322, 5\otimes 377)$}};

\node at (-2,-0.5)  (x) {\small{$(1\otimes 176)$}};

\node at (-2.65,-1.25)  (x) {\small{$(1)$}};

\end{tikzpicture}
\caption{A d-tree for the semimodule expression lineage of $Q_2$ from Table \ref{tab:lineages_lifted}. The blue part represents the boolean component of the expression while the red one represents a semimodule component.}
\label{fig:complete_semimodule_d-tree}
\end{figure}

Combined, the two propositions provide a PTIME algorithm for computing Banzhaf and Shapley values given a d-tree representation of the lineage of an aggregate query.



\nop{
\begin{proof}
We show the computation for binary gates and the results may easily be extended to handle non-binary gates as well. In the following, let $M\in \{\min, \max\}$ be the monoid.

\begin{lemma}
\label{lemma: prob_max independent And}

     Let $\Phi$ be a semimodule expression and let $\psi$ be a boolean formula such that $\psi$ and the boolean part of $\Phi$ are independent. Let $S = \psi \otimes \Phi$ and $p\in M$, $p\neq 0_M$, then:
 \begin{align*}
\#^p S &= \#\psi \cdot \#^p\Phi\\
\#_k^p S &= \sum_{i\in [k]}{\#_{i}\psi \cdot \#_{k-i}^p\Phi }
\end{align*} 
\end{lemma}

\begin{proof}
The equations hold because each assignment (of size $k$) where $S$ evaluates to $p$ is constructed out of a satisfying assignment of $\psi$ and an assignment for $\Phi$ that evaluates to $p$ (where the combined sizes are $k$). 
\end{proof}

\begin{lemma}
\label{lemma: prob_max independent Or}
    Let $S = \Phi_1 +_M \Phi_2$ such that $\Phi_1$ and 
    $\Phi_2$ are independent, then:

\begin{align*}
&\#^p S = \sum_{r,t\in PV_{\Phi}, \, r +_M t = p} \left( \#^r\Phi_1 \cdot \#^t\Phi_2 \right) \\
&\quad + \#^p\Phi_1 \cdot \left( 2^{|\vars(\Phi_2)|} - \#\Phi_2 \right) 
+ \#^p\Phi_2 \cdot \left( 2^{|\vars(\Phi_1)|} - \#\Phi_1 \right) \\
&\#_k^p S = \sum_{i\in [k]} \Bigg[ \sum_{r,t\in PV_{\Phi}, \, r +_M t = p} \left( \#_i^r\Phi_1 \cdot \#_{k-i}^t\Phi_2 \right) \\
&\quad + \#_i^p\Phi_1 \cdot \left( \binom{|\vars(\Phi_2)|}{k-i} - \#_{k-i}\Phi_2 \right)
\quad + \#_{k-i}^p\Phi_2 \cdot \left( \binom{|\vars(\Phi_1)|}{i} - \#_i\Phi_1 \right) \Bigg]
\end{align*}

\end{lemma}

\begin{proof}
The equations hold because every satisfying assignment (of size $k$) where $S$ evaluates to $p$ is constructed out of a assignment for which $\Phi_1$ evaluates to $r$ and an assignment for which $\Phi_2$ evaluates to $t$ (where the sum of assignments sizes is $k$) such that $r +_M t = p$. This can be seen as the sum of such assignments, by whether $\Phi_1$ alone is satisfied, $\Phi_2$ alone is satisfied, or both. 
\end{proof}

\begin{lemma}
\label{lemma: prob_max exclusive or}

Let $\Phi_1, \Phi_2$ be exclusive semimodule expressions s.t. $\Phi_1$ and 
    $\Phi_2$ have identical sets of variables. Let $S = \Phi_1 +_M \Phi_2$, then:
\begin{align*}
\#^p S &= \#^p \Phi_1 + \#^p\Phi_2\\
\#_k^p S &= \#_k^p \Phi_1 + \#_k^p\Phi_2
\end{align*}
\end{lemma}

\begin{proof}
Since the satisfying assignments for the boolean parts of the semimodule expressions are disjoint, any assignment (of size $k$) of $S$ that evaluates to $p$ is either an assignment (of size $k$) that evaluates to $p$ of $\Phi_1$ or an assignment that evaluates to $p$ of $\Phi_2$.
\end{proof}

Having shown the computation for each type of gate, a bottom-up algorithm to compute assignment counts for all (polynomialy many) relevant values $p$ and $k$ and all d-tree nodes follows directly, thereby concluding the proof of Proposition \ref{prop:counts}.\end{proof}}
\nop{
\begin{example}
    Figure \ref{fig:complete_semimodule_d-tree} presents a d-tree for the lineage of $Q_2$ from Table \ref{tab:lineages_lifted}, along with some steps of the model counts computation. Denote the $\oplus$ node that is the right child of the right child of the root by $T$. Its left subtree has one assignment with value of $322$, and its right subtree has one assignment with value of $377$. Applying the equation from Lemma \ref{lemma: prob_max independent Or} we get that $\#^{377}T = 1\cdot 1 + 1\cdot 1 +0\cdot 1 = 2$, as seen in the Figure. 
    
    
\end{example}
}


Combined, the two propositions provide a PTIME algorithm for computing Banzhaf and Shapley values given a d-tree representation of the lineage of an aggregate query.

\nop{
\paragraph{Hardness result}
To complete the complexity analysis, we provide a reduction from computing Banzhaf values for CQ to computing Banzhaf values for CQ with the Max aggregate. 

\begin{theorem}
\label{theorem: hardness_max_banzhaf}
    Let $Q$ be  a  CQ and $D$ be a database and let $t$ be a tuple in $D$. Let $\alpha$ be a function that maps each output of $Q$ to value 1. 
We get that:
\begin{align*}
\banz(Q,D,t) =& \banz(\langle \maxagg, \alpha, Q\rangle,D,t)
\end{align*}

\begin{proof}
     Let $T = \sum_{Max} \varphi_i \otimes m_i$ be the semimodule expression of the lineage of $\langle \maxagg, \alpha, Q\rangle$. We denote the boolean part of $T$ by $\varphi = \bigvee \varphi_i$. By the building of $\alpha$, we get that for all $i$ $m_i=1$.
     Using axiom 2 in definition \ref{def: semimodule}, we can write:
     \begin{align*}
         T = \varphi \otimes 1
     \end{align*}

     Let $A$ be a truth valuation for the variables in $\varphi$. 
     We get that \begin{align*}
         T(A) =\begin{cases} 
    1, & \text{if } \varphi(A) = 1 \\
    0, & \text{otherwise}
    \end{cases}
     \end{align*}

     Based on Equation \ref{eq: banzhaf value formula and semimodule} We get that \begin{align*}
         \banz&(\langle \maxagg, \alpha, Q\rangle,D,t) = \#T[t=1] - \#T[t=0]\\ 
         =&  \#\varphi[t=1] - \#\varphi[t=0]
     \end{align*}
     
     Based on the definition of the lineage, we get that $lin(Q,D) = \varphi$, and based on the definition of the Banzhaf value we get that \begin{align*}\banz(Q,D,t) =&  \#\varphi[t=1] - \#\varphi[t=0] \\=&  \banz(\langle \maxagg, \alpha, Q\rangle,D,t)\end{align*}
     \end{proof}
\end{theorem}}



\subsection{Lifting for Aggregates}
\label{sec: Aggregate_extension} 
We extend our lifting technique (Section~\ref{subsubsect: lifting}) from UCQs to aggregate queries. First, for linear aggregates ($\sumagg$ and $\countagg$), our solution in Section~\ref{subsect: linear aggregates} constructs a separate d-tree for each 
formula $\varphi_i$ in the Boolean part $\bigvee_i \varphi_i$ 
of the given \tensorset\ expression. Thus, we can apply $\lift$  and $\liftedcompile$ in Algorithms~\ref{alg:Lift} and \ref{alg:lifted_compile} to each formula $\varphi_i$.
For $\minagg$ and $\maxagg$, new machinery is needed. We show the construction for $\maxagg$; the construction for $\minagg$ is analogous.

\paragraph*{Extending the Definitions}
Given disjoint variable sets $\bm X$ and $\bm Y$,  a {\em lifted semimodule expression} over $\bm X \cup \bm Y$ is a pair $(\varphi, \ell)$ where $\varphi$ is a Boolean formula over $\bm X\cup \bm Y$ and $\ell: \vars(\varphi) \to  \posbool(\bm X)\cup (\posbool(\bm X)\otimes M)$ is a mapping from variables in $\varphi$ to either positive Boolean formulas over $\bm X$ or semimodule expressions over $\bm X$ and semimodule $M$ (denoted by $\posbool(\bm X)\cup (\posbool(\bm X)\otimes M)$). If $\varphi$ is in DNF, we refer to $(\varphi, \ell)$ as a {\em lifted DNF semimodule expression}.  
The {\em inlining} $\inline(\varphi, \ell)$ is obtained
in two steps: First, we replace each variable $x\in \vars(\varphi)$ by $\ell(x)$; then, we replace each $\vee$ between semimodule expressions by $+_M$ and each $\wedge$ between a formula and a semimodule expression by $\otimes$. We call a lifted semimodule expression {\em valid} if exactly one variable in each DNF clause is mapped to a semimodule expression.

\paragraph*{Translating Lineage to a Lifted DNF Semimodule Expression} Recall that the lineage of an aggregate query is captured by a semimodule expression $\Phi = \sum_i^M \varphi_i \otimes m_i$, where each $\varphi_i = \bigvee_{j}C_{ij}$ is in DNF. We transform $\Phi$ to the equivalent form $\Phi'=\sum_{i,j}^MC_{i,j} \otimes m_i$. This preserves equivalence for $(\overline{\mathbb{R}}, \max, -\infty)$ and  $(\overline{\mathbb{R}}, \min, \infty)$ due to their idempotence. We further transform $\Phi'$ into a lifted DNF semimodule expression $(\varphi, \ell)$ 
with $\varphi = \bigvee_{i,j} (C_{i,j}\wedge w_i)$ and  $\ell = \{w_i\mapsto m_i\} \cup \{x\mapsto x \mid x\in \vars(\varphi) \text{ and } x\neq w_i\}$. Note that $(\varphi,\ell)$ is valid and its inlining is equivalent to $\Phi$. We add this translation as an initial preprocessing step to Algorithm \ref{alg:lifted_compile}.

\paragraph{Extending the Lifting Algorithm}
Given a lifted DNF semimodule expression $(\varphi,\ell)$, 
cofactor-equivalence and interchangeability are defined as for lifted Boolean DNF formulas.
The application of  $\conlift$ or $\dislift$ to a set of cofactor-equivalent or interchangeable variables maintains the validity of the expression. Algorithm \ref{alg:Lift} may then be applied to lifted DNF semimodule expressions.

%


\nop{
\paragraph*{Extending the Definitions} Given disjoint variable sets $\bm X$ and $\bm Y$,  a {\em lifted DNF semimodule expression} over $\bm X \cup \bm Y$ is a pair $(\varphi, \ell)$ where $\varphi$ is a positive DNF formula over $\bm X\cup \bm Y$ and $\ell: \vars(\varphi) \to  \posbool(\bm X)\cup (\posbool(\bm X)\otimes M)$ is a mapping from variables in $\varphi$ to either semimodule expressions or Boolean formulas. The {\em inline} operation replaces, similarly to its definition in Section~\ref{subsubsect: lifting}, every variable $x$ by $\ell(x)$; note that if $\ell(x)$ is in $\posbool(\bm X)\otimes M$.

\paragraph*{Extending the Lifting Algorithm} Recall that the lineage of an aggregate query may be captured by a semi-module expression $\sum \varphi_i \otimes m_i$ where each $\varphi_i = \bigvee_{j}C_ij$ is in DNF. We transform each term $\varphi_i \otimes m_i$ to the equivalent $\sum_{j}C_ij \otimes m_i$. This preserves equivalence for $(\overline{\mathbb{R}}, \max, -\infty), (\overline{\mathbb{R}}, \min, \infty)$ due to their idempotence.

The inline operation works similarly to Section~\ref{subsubsect: lifting}, where we map every variable $v$ in $\varphi$ to its respective mappings in $\ell$. Note that for simplicity we treat the $\wedge$ similarly to $\otimes$ and $\vee$ similarly to $+_M$ when it is appropriate.

Given a semimodule expression $\Phi$, we begin by compiling it to  the form 
$\sum_i C_i\otimes m_i$ where $C_i$ is a conjunction of literals and the summation uses the $+_M$ operator. Note that since $m +_M m = m$, this may be done while maintaining equivalence with $\Phi$. We further transform it to lifted semimodule DNF by introducing a fresh variable $w_i$ for each clause $C_i$ and create a mapping $\ell$ such that $\ell(w_i) = 1\otimes m_i$. Now, Algorithm~\ref{alg:lifted_compile} works the same, only that we replace each execution of algorithm \ref{alg:Lift} with two executions. For the first one, we replace every variable $v\in \vars(\varphi)$ such that $\ell(v) = 1\otimes x$ where $x\in \overline{\mathbb{R}}$ with a variable $w_x$, and update $\ell$ accordingly.
For the second execution, we replace every appearance of a variable $v\in \vars(\varphi)$ such that $\ell(v) = 1\otimes x$ where $x\in \overline{\mathbb{R}}$ with a fresh variable $w_v$, and update $\ell$ accordingly.
}

\begin{example}
    Consider the lineage of Query $Q_2$ in Table~\ref{tab:lineages_lifted}. We first translate it into the lifted DNF semimodule expression $\varphi = (a_1\wedge m_3\wedge t_1)\vee
            (a_2\wedge m_3\wedge t_1) \vee 
            (a_3\wedge m_3 \wedge t_1) \vee 
            (a_1 \wedge m_2 \wedge t_2) \vee 
            (a_3 \wedge m_2 \wedge t_2) \vee (a_4 \wedge m_1\wedge t_3)$; $\ell = \{t_1 \mapsto 377, t_2 \mapsto 322, t_3\mapsto 176\}\cup \{x\mapsto x | x\in \vars(\varphi)\setminus\{t_1,t_2,t_3\}\}$. Then, we execute the initial lifting in Algorithm \ref{alg:lifted_compile}. The algorithm searches for  cofactor-equivalent or interchangeable sets of variables. 
            It detects that $a_1$ and $a_3$ are cofactor-equivalent and performs $\dislift$ on them. It also detects that 
            each of the sets $\{m_3,t_1\}$, $\{m_2,t_2\}$, and $\{a_4,m_1,t_3\}$ are interchangeable and thus performs $\conlift$ on them. Table \ref{tab:lineages_lifted} gives the resulting expression. The algorithm continues he compilation as described in Section~\ref{subsubsect: lifting} to obtain a d-tree for the lineage.
\end{example}
\nop{
\begin{example}
Consider the semimodule expression that is the lineage of Query $Q_2$ as presented in Table \ref{tab:lineages_lifted}. We start by translating the expression to the lifted DNF semimodule expression $\varphi = ((a_1\wedge m_3) \otimes 377 +_{\max}
            (a_2\wedge m_3) \otimes 377 +_{\max}
            (a_3\wedge m_3) \otimes 377 +_{\max}
            (a_1 \wedge m_2) \otimes 322 +_{\max}
            (a_3 \wedge m_2) \otimes 322
            +_{\max} (a_4 \wedge m_1) \otimes 176)$. We first start the lifting process by treating each occurrence of variables representing a monoid value as a recurring appearance of the same variable. Similarly to Section~\ref{subsubsect: lifting}, the algorithm tries to find cofactor-equivalent or interchangeable sets of variables. The algorithm detects that $a_1$ and $a_3$ are cofactor-equivalent, thus it performs $\dislift$ on them. In addition, the algorithm detects that the sets $\{m_3,377\}$, $\{m_2,322\}$ and $\{a_1,m_1,176\}$ appear in similar clauses, thus it performs $\conlift$ on them. The resulting expression after these operations is in Table \ref{tab:lineages_lifted}. Next, it tries to treat each variable mapped to a monoid value as a fresh variable, but no such variable exists in the mapping $\ell$. Thus, it returns the lineage from the previous step as the lifted lineage.
\end{example}}

\subsection{Gradients for Aggregates}
We now adapt the gradient-based computation from Section~\ref{subsubsect: arithmetic} to the MAX aggregate (the case of MIN is similar).  
In contrast to d-trees with Boolean outcomes, a semimodule expression can yield a numeric value for a given valuation over its Boolean variables. Our probabilistic interpretation looks at  {\em expected values}. 

\paragraph*{Expected Values and Gradients} Given a d-tree $T$ for a semimodule 
expression $\Phi$, $E[T]$ denotes the expected value of $T$ given probabilities $p_x$ for the Boolean variables: $E[T] = \sum_{S\subseteq \vars(T)}{Pr[S]\cdot T[S]}$. 
Prop. \ref{Prop: Banzhaf_is_Gradient} extends to expected values:
\begin{align}
\label{Eq: Banzhaf_is_expected_grad}
    \banz(T,x) =& 2^{|\vars(T)| - 1} \cdot \left(\frac{\partial E[T]}{\partial p_x}\left(\vec{\frac{1}{2}}\right)\right) 
\end{align}

We define the probability and k-probability of a d-tree $T$ as follows: 
$Pr[T=m_i]= \sum_{S\subseteq \vars(T), T[S]=m_i} {Pr[S]}$ and \nop{$Pr_k[T=m_i]= \sum_{S\subseteq \vars(T), \mid S \mid = k, T[S]=m_i} {Pr[S]}$.}
$ 
Pr_k[T=m_i] = \left(\frac{1}{2}\right)^{|\vars(T)| - k} \sum_{{\substack{S\subseteq \vars(T) \\ |S| = k,T[S]=m_i}}}\sum_{S'\subseteq S} T[S']\cdot \prod_{y\in S'}p_y \prod_{z\in S\setminus S'}\left(\frac{1}{2}-p_z\right)
$.

Using the linearity of expectation, we can write:
\begin{align*}
\frac{\partial E[T]}{\partial p_x} =& \sum_{m_i\in \pv_{\Phi}} m_i\cdot \frac{\partial Pr[T=m_i]}{\partial p_x}
\end{align*}

\paragraph*{From Gradients to Banzhaf and Shapley} Similarly to Section~\ref{subsubsect: arithmetic}, we obtain the Banzhaf and Shapley values of all facts via partial derivatives. Assume an order over the variables and then use variable names as indices (identifying each variable with its position in the order). Let $J$ be the Jacobian matrix, namely $J^i_x = \frac{\partial(v_i\cdot Pr[T=v_i])}{\partial p_x}$. Further let $J^k$ be the matrix defined by $J^{k,i}_x = \frac{\partial(v_i\cdot Pr_k[T=v_i])}{\partial p_x}$ for each variable $x$. 
Note that the entries $J^i_x$ and $J^{k,i}_x$ in the Jacobian matrices are functions of all variables in $\vars(T)$.
We can show:
 \begin{align*}
\banz(T,x) =& 2^{|\vars(T)| - 1} \cdot \sum_{i \in [n]}{J^i_x \left(\vec{\frac{1}{2}}\right)}\\
\shap(T,x) =& 2^{|\vars(T)| - 1} \cdot \sum_{i \in [n]}{\sum_{k\in [|\vars(T)|]}{J^{k,i}_x \left(\vec{\frac{1}{2}}\right) \cdot C_{k-1}}}
\end{align*} 
where $C_k = \frac{|k!|\cdot|D_n - k - 1|!}{|D_n|!}$. 

\paragraph*{Banzhaf and Shapley Computation} Similarly to Section~\ref{subsubsect: arithmetic}, we can compute the Banzhaf or Shapley value by using the d-trees to compute the expectation, and then compute the Jacobians via back propagation. We adapt Algorithm \ref{alg:GradientBanzhaf} to incorporate these changes for a d-tree representing the lineage of an aggregate query. \nop{We adapt Algorithm~\ref{alg:GradientBanzhaf} to compute Jacobians via back-propagation, for a d-tree representing the lineage of an aggregate query.}  Lines \ref{alg_line: GradientBanzhaf: compute probabilities - start}-\ref{alg_line: GradientBanzhaf: compute probabilities - end} are replaced with the computation of the probability for each different possible value. The root derivative in Line \ref{alg_line: GradientBanzhaf: initialize_grad_root} is initialized to a vector of possible values. The computation in Lines \ref{alg_line: GradientBanzhaf: grad backpropagation start}-\ref{alg_line: GradientBanzhaf: grad backpropagation end} is replaced with the computation of the Jacobian matrix values.  This allows us to compute a vector/matrix of derivatives at each node and obtain the Banzhaf/Shapley values at the leaves according to Equation \ref{Eq: Banzhaf_is_expected_grad}.

\nop{Let $T = \sum {\varphi_i \otimes m_i}$ be the semimodule expression lineage of $Q$ on $D$. The symbolic probabilistic extension of $T$ is achieved similarly to the boolean case by attaching each variable $v$ in its Boolean part a probability $p_v$. We define the expected value of $T$ analogously to that of a query, such that the expected value of $T$ coincides with the expected value of $Q$ under the mapping of database facts to lineage variables. We can define the probability of a certain value to be:
\begin{align*}
    Pr[T=m_i]= \sum_{\theta\subseteq \vars(T), T(\theta)=m_i} {Pr[\theta]}
\end{align*}

Examining the possible value distribution for $T$ over the variable probabilities, we get that the expected value of $T$ can be written as:
\begin{align*}
E[T] =& \sum_{m_i} m_i \cdot Pr[T=m_i]
\end{align*}

The partial derivative of the expected value over the probability of a variable $f$ is obtained by:

\begin{align*}
\frac{\partial E[T]}{\partial p_f} =& \sum_{m_i} m_i\cdot \frac{\partial Pr[T=m_i]}{\partial p_f}
\end{align*}

Hence, we can obtain the Banzhaf and Shapley values of all variables by performing a simple calculation over the matrix of partial derivatives (also known as the Jacobian) of multivalued function that gives us the distribution of answers. Let $F = (v_1\cdot Pr[T=v_1], \dots, v_n \cdot Pr[T=v_n])$  be the multivalued function that computes the distribution of answers for $n$ possible values. We order the variables of $T$ such that for each variable $x$, index $x$ corresponds to the number of that variable in the order.

\begin{align*}
\banz(T,f) =& 2^{|T| - 1} \cdot \sum_{i \in [n]}{J^i_f (\vec{\frac{1}{2})}}\\
\shap(T,f) =& 2^{|T| - 1} \cdot \sum_{i \in [n]}{\sum_{k}{^kJ^i_f (\vec{\frac{1}{2})} \cdot C_k}}\\
\end{align*}
Where $C_k = \frac{|k!|\cdot|D_n - k - 1|!}{|D_n|!}$ is the Shapley coefficient for size $k$.
}

\nop{Hence, we can obtain the Banzhaf values of all variables by performing a simple calculation over the Jacobian matrix of the multivalued function that gives us the distribution of the answers. Let $F = (v_1\cdot Pr[T=v_1], \dots, v_n \cdot Pr[T=v_n])$ be the multivalued function that computes the distribution of answers for $n$ possible values. We order the variables of $T$ such that for each variable $f$, index $f$ corresponds to the number of that variable in the order. Let $J$ be the Jacobian matrix such that $J^i_f = \frac{\partial(v_i\cdot Pr[T=v_i])}{\partial p_f}$ and $^kJ$ be the matrix such $^kJ^i_f = \frac{\partial(v_i\cdot Pr_k[T=v_i])}{\partial p_f}$ for each variable $f$ representing a fact in the database. We get that:}

\nop{We now outline the computation of the Jacobian matrix. To formalize the calculation, it is useful to treat each MAX semimodule expression as a distribution over its different values, with a special value $v_\neg$  representing unsatisfied assignments and being smaller than all other values. For boolean formulas, we consider a distribution of $0_M$ for satisfying assignments, and $v_\neg$ for unsatisfying ones. With this perspective, the $+_M$ gate for semimodule expressions and the $\vee$ gate for formulas both correspond to taking the maximal value of the distributions. Similarly, the $\otimes$ and $\wedge$ gates correspond to taking the maximal value of the distributions, provided no value is $v_\neg$. Lastly, conditioning on a variable $x$ with probability $p_x$ corresponds to sampling from one distribution with a probability of $p_x$ and from the other with probability of $1-p_x$.

\begin{lemma}
Let $D = d_1, d_2, \dots, d_n$ be independent distributions over ordered values $V = v_\neg, v_1, v_2, \dots, v_k$. Let $D_{max}$ be the distribution of the maximal value between the distributions of $D$. We mark by $J^i(D_{max})_{j,k}$ the partial derivative of the probability of value $v_j$ in $D_{max}$ with respect to the probability of $v_k$ in $d_i$. We get that:
\begin{align*}
    J^i(D_{max})_{j,k} =& (\Delta^{\leq}(j,k) - 1)\cdot\prod_{l\in [n]-\{i\}} CDF_l(m_k) + (\Delta^{<}(j,k) + 1) \\ \cdot& \prod_{l\in [n]-\{i\}} (CDF_l(m_k) - d_l(m_k))
\end{align*}
Where $CDF_l$ is the cumulative distribution function of the distribution $d_l$  and $\Delta^{op}(j,k) = 1$ if $m_j \,  op  \, m_k$ and $0$ otherwise.
\end{lemma}

Essentially, this value is equivalent to the change in the  probability distribution of $D_{max}$ if we were to replace $d_i$ with a constant distribution of the value $v_j$ and subtract from that the distribution without the $d_i$. Since the probability of each value in $D_{max}$ is multilinear with respect to the probabilities of the values in the different distributions of $D$, this gives us the Jacobian matrix.
The Jacobian calculation for the rest of the gates is omitted for lack of space.}

\nop{

\begin{lemma}
Let $D = d_1, d_2, \dots, d_n$ be independent distributions over ordered values $V = v_\neg, v_1, v_2, \dots, v_k$. Let $D^+_{max}$ be the distribution achieved by taking $v_\neg$ if any of values is $v_\neg$, and otherwise taking the maximal value between the distributions of $D$. We mark by $J^i(D_{max})_{j,k}$ the partial derivative of the probability of value $v_j$ in $D_max$ with respect to the probability of $v_k$ in $d_i$. We get that:
\begin{align*}
    J^i(\varphi_{max})_{j,k} =& (\Delta^{\leq}(j,k) - 1)\cdot\prod_{l\in [n]-\{i\}} CDF^+_l(m_k) + (\Delta^{<}(j,k) + 1) \\\cdot& \prod_{l\in [n]-\{i\}} (CDF^+_l(m_k) - d_l(m_k))
\end{align*}
Where $CDF_l$ is the cumulative distribution function of the distribution $d_l$ and $CDF^+_d(m) = CDF_d(m) - d(v_\neg)$.  We define $\Delta^{op}(j,k) = 1$ if $m_j \,  op  \, m_k$ and $0$ otherwise.
\end{lemma}




\begin{lemma}
    Let $L$ and $R$ be distributions over ordered values $V = v_\neg, v_1, v_2, \dots , v_k$. Let $f$ be a random variable with a probability of $p_f$. Let $D_{exclusive}$ be the distribution acquired by sampling from $L$ with probability $p_f$ and sampling from $R$ with probability $1-p_f$. 

    \begin{align*}
    J^L (D_{exclusive})_{j,k} =& \, \Delta^{=}(j,k) \cdot D_{exclusive}(j) \cdot p_f\\
    J^R (D_{exclusive})_{j,k} =& \, \Delta^{=}(j,k) \cdot D_{exclusive}(j) \cdot (1 - p_f)\\
    J(D_{exclusive})_{p_f, k} =& \, T(m_k) - F(m_k)
\end{align*}
\end{lemma}

}

\nop{\subsection{LExaBan for Min/Max}
\label{subsect:aggregate_LExaBan}
Now, applying the changes to the lift algorithm and the gradient approach gives us the LExaBan algorithm for the MIN/MAX case.}

\paragraph*{GROUP-BY and Additional Aggregates}
We have focused so far on aggregate queries without grouping and with MIN, MAX, SUM and COUNT. 
Our solution directly extends to queries with $\groupbyagg$: as in prior work \cite{ProvenanceForAggregateQueries, TheShapleyValueofTuplesinQueryAnswering}, lineage is defined for each group as a \tensorset/semimodule expression. Shapley and Banzhaf values are then defined and computed with respect to each group, exactly as for aggregates without $\groupbyagg$. Linear aggregates can be supported like  $\sumagg$ and $\countagg$, by
computing the Banzhaf/Shapley value for each term in the \tensorset and then combining these values using the aggregate function. In contrast, non-linear aggregates such as AVG require non-trivial development, in particular to define the aggregate-specific gradients as in Table \ref{tab:probability_and_gradients_for_gates}. This is left as future work.

\section{Experiments}
\label{sec:experiments}
This section details our experimental setup and results.

\subsection{Experimental Setup and Benchmarks}

We implemented all algorithms in Python 3.11 and performed experiments on a Linux Debian 14.04 machine with 1TB of RAM and an Intel(R) Xeon(R) Gold 6252 CPU @ 2.10GHz processor.

\paragraph*{Algorithms} We benchmarked variants of our algorithms \lexaban and \lexashap, with and without the two optimization techniques, for SPJU and aggregate queries. We compare them with the state-of-the-art solutions for Banzhaf and Shapley computation ~\cite{Sig24:BanzhafValuesForFactsInQueryAnswering} that only work for SPJU queries and do not use the variable lifting and gradient optimizations: \exaban and \exashap for exact computation of Banzhaf and Shapley values and \adaban for approximation of Banzhaf values with error guarantees (such approximation is not available for Shapley). \nop{\adaban was shown~\cite{Sig24:BanzhafValuesForFactsInQueryAnswering} to significantly faster and more accurate than a Monte-Carlo randomized approximation from prior work~\cite{ComputingTheShapleyValueOfFactsInQueryAnswering}.} For aggregate queries with SUM/COUNT, we also compared to a variant where we use \exaban for each boolean formula and then use linearity as explained in Section \ref{subsect: linear aggregates} to compute Banzhaf values\nop{with respect to the aggregate queries}.

 
\paragraph*{Datasets} The workload (see Table \ref{tab:db_stat}) of 301 queries over three datasets: Academic, IMDB and TPC-H (SF1) is based on prior work on  Banzhaf/Shapley values for query answering ~\cite{Sig24:BanzhafValuesForFactsInQueryAnswering,ComputingTheShapleyValueOfFactsInQueryAnswering}. Lineage for all output tuples of all queries was constructed using ProvSQL~\cite{senellart2018provsql}. We created two variants of each benchmark: one including SPJU queries, and one including aggregate queries. The SPJU variant includes all queries from IMDB and Academic as is (they do not include aggregate queries), and all queries without nested subqueries and with aggregates removed from TPC-H. The aggregate queries variant (named Academic$_{agg}$, IMDB$_{agg}$, TPC-H$_{agg}$) was constructed as follows. For TPC-H it includes 7 queries that were preserved as is, namely without removing aggregation. These are the queries for which ProvSQL \nop{managed to compute}computed all lineages in an hour or less. All 7 TPC-H aggregate queries have $\groupbyagg$ clauses, resulting in 140 lineage instances. To further extend the benchmark, we created multiple variants of each of these queries: for TPC-H we changed the aggregation to  $\maxagg,\countagg$, $\sumagg$ (the solution for $\minagg$ is almost identical to that of $\maxagg$); for IMDB and Academic, we created a synthetic aggregate variant with each of $\maxagg,\countagg$ and $\sumagg$ introduced to each query, applied over randomly chosen values. These datasets include very large and highly challenging lineage expressions. In total, the workload consists of nearly 1M Boolean lineage expressions and 429 lineage expressions for each aggregate.


\begin{table}
\caption{Statistics of the datasets used in the experiments.}
 
\centering
\begin{tabular}{| l | r | r |r| r|}
    \hline
        {\textbf{Dataset}} &  {\makecell{\textbf{\#Queries}}} &
         \textbf{\# Lineages}  & \textbf{\#Vars} & \textbf{\#Clauses} \\
         & & & \textbf{avg/max} & \textbf{avg/max} \\
         \hline
    \texttt{Academic} & 92 & 7,865 & 79 / 6,027 & 74 / 6,025\\
    \hline
    \texttt{IMDB} & 197 & 986,030 & 25 / 27,993 & 15 / 13,800 \\
    \hline
    \texttt{TPC-H} & 12 & 165 & 1,918 / 139,095 & 863 / 75,983\\
    \hline
    \texttt{TPC-H$_{agg}$} & 7 & 140 & 746 / 6959 & 233 / 2076\\
    \hline

\end{tabular}
\label{tab:db_stat}
\end{table}

\paragraph*{Measurements} We define an instance as the computation of the Banzhaf or Shapley values for all variables in a lineage of an output tuple of a query over a dataset. We report the success rate of each algorithm over the instances, where failure is declared in case an algorithm did not terminate an instance within one hour. We also report statistics of runtimes across all instances (average, median, maximal runtime, and percentiles). The  p$X$ columns in the tables give the execution times for the $X$-th percentile of the instances. Percentiles are computed with respect to the set of instances each algorithm succeeded on.  Since the algorithms success rate is different, these percentiles generally correspond to different instances.


\subsection{Banzhaf Computation for SPJU Queries}
\label{sec:exp-banzhaf-boolean}

\paragraph{Success rate}
Table~\ref{tab:successrate} reports the success rates of \lexaban, \exaban, and \adaban for queries without aggregates. On Academic, \lexaban succeeded on all instances. On IMDB, \lexaban topped \exaban and even the approximate solution \adaban in the lineage-level and achieved significantly greater query-level success (>95\%).
\lexaban succeeds on more than 95\% (90\%) of instances on which \exaban (\adaban) failed. For TPC-H, many queries exhibit hard lineage instances and so the query success rates are lower for all algorithms. \lexaban again tops both \exaban and \adaban in lineage success rate and is tied with \adaban in query success rate.

\begin{table}
\caption{Query success rate: Percentage of  queries for which the algorithms finish for all instances of a query. Lineage success rate:  Percentage of instances (over all queries) for which the algorithms finish. The timeout is one hour.}
 
\centering
\begin{tabular}{| c | r|r|r|}
\hline
\textbf{Dataset} & \textbf{Algorithm} & \textbf{Query Success Rate}
& \textbf{Lineage Success Rate} \\
\hline
{\multirow{5}{*}{{\texttt{Academic}}}}
& \lexaban  & 100.00\% & 100.00\%\\
& \exaban  & 98.85\% & 99.99\%\\
& \adaban  & 98.85\% & 99.99\%\\
\cdashline{2-4}
& \lexashap  & 97.82\% & 98.84\%\\
& \exashap  & 97.82\% & 98.84\%\\

\hline
{\multirow{5}{*}{{\texttt{IMDB}}}}
& $\textsc{LExaBan}$ & 95.43\% & 99.98\%  \\
& $\textsc{ExaBan}$ & 81.73\% & 99.63\%  \\
& $\textsc{AdaBan}$ & 87.82\% & 99.81\%  \\
\cdashline{2-4}
& \lexashap  & 87.31\% & 99.89\%\\
& \exashap  & 65.48\% & 99.53\%\\
\hline

{\multirow{5}{*}{{\texttt{TPC-H}}}}
& $\textsc{LExaBan}$ & 75.00\% & 93.33\%  \\
& $\textsc{ExaBan}$ & 58.33\% & 91.52\%  \\
& $\textsc{AdaBan}$ & 75.00\% & 92.73\%  \\
\cdashline{2-4}
& \lexashap  & 58.33\% & 91.52\%\\
& \exashap  & 50.00\% & 85.46\%\\

\hline
\end{tabular}

\label{tab:successrate}
\end{table}

\paragraph{Runtime performance}
Table~\ref{tab:total performance boolean} gives the runtime of \lexaban, \exaban and \adaban for the Academic, IMDB and TPC-H datasets. \lexaban outperforms both \nop{\exaban and \adaban}competitors significantly. Despite achieving success for more instances, it keeps a smaller mean runtime \nop{compared with} than \exaban and \adaban. For Academic, except for the unique instance where it \nop{achieves success}succeeds and the other algorithms fail on, runtime is below 5 seconds, \nop{which represents 119X improvement over \exaban, and 36X improvement over \adaban.}yielding 119X and 36X speedups over \exaban and \adaban, respectively. Similar trends are observed for the IMDB and TPC-H datasets, where \lexaban achieves 110X and respectively 106X improvements for the most expensive instance for \exaban (Table~\ref{tab:success_ExaBan}). 
For the instances on which \exaban failed (Table~\ref{tab:failure_ExaBan}), \lexaban still achieves success and low runtimes. 
For IMDB, this is an almost 95\%  success rate with a median runtime of 4.73 seconds, representing at least 761X speedup for more than half of those instances. For TPC-H, \lexaban succeeded on 3 more instances; for one such instance \lexaban needed 1.95 seconds, which means a speedup of at least 1846X over \exaban\nop{ which did not terminate in 1 hour}.

\begin{table}
\caption{Runtime of \lexaban, \exaban and \adaban. Note that the instance for Academic on which \lexaban is slowest is one on which \exaban and \adaban failed on.}

    \centering
    \begin{tabular}{| c | r | r| r r r r r r|}
        \hline
         \multirow{1}{*}{\textbf{Dataset}} &  \multirow{1}{*}{\textbf{Algorithm}} & \multirow{1}{*} {\makecell{\textbf{Success} \\ \textbf{rate}}} &
         \multicolumn{6}{c |}{\textbf{Execution times [sec]}} \\
         & & &\textbf{Mean} &  \textbf{p50} & \textbf{p90} & \textbf{p95} & \textbf{p99} & \textbf{Max} \\
         \hline

        {\multirow{3}{*}{{\texttt{Academic}}}}
         & $\textsc{LExaBan}$ & 100.00\% & 0.44 & 4E-4 &   
         2E-3 &
        5E-3 & 0.14 & 3455\\

         & $\textsc{ExaBan}$ & 99.99\% & 2.07 & 1E-3 &   
         0.01 &
        0.20 & 164.5 & 563.5\\

         & $\textsc{ADaBan}$ & 99.99\% & 0.76 & 1E-3 &  7E-3 &
        0.05 & 60.05 & 173.7\\

        \hline

        \multirow{3}{*}{\texttt{IMDB}} 
        & $\textsc{LExaBan}$ & 99.98\% & 0.18 & 1E-3 & 
        5E-3 & 0.01 & 0.09 & 
        1219\\  

        & $\textsc{ExaBan}$ & 99.63\% & 1.58 & 2E-3  &
        0.02 & 0.08 & 10.37 & 
        1793\\  

        & $\textsc{ADaBan}$ & 99.81\% & 1.82 & 1E-3 & 
        0.01 & 0.05 & 7.81 & 
        1802\\
                
        \hline

        \multirow{3}{*}{\texttt{TPC-H}} 
        & $\textsc{LExaBan}$ & 93.33\% & 0.12 & 7E-4 & 
        3E-3 & 0.62 & 1.95 & 
        8.73\\ 
        & $\textsc{ExaBan}$ & 91.52\% & 4.23 & 0.89 & 
        0.94 & 51.05 & 61.98 & 
        69.18\\
        & $\textsc{ADaBan}$ & 92.73\% & 2.37 & 3E-3 & 
        0.14 & 3.15 & 81.55 & 
        166.3\\
        \hline

    \end{tabular}%
    
    \label{tab:total performance boolean}
\end{table}

\begin{table}
\caption{Runtime of \lexaban and \exaban on instances for which \exaban succeeded.}

    \centering
    \begin{tabular}{| c | c| c c c c c c c|}
        \hline
         \multirow{1}{*}{\textbf{Dataset}} &  \multirow{1}{*}{\textbf{algorithm}} &
         \multicolumn{7}{c |}{\textbf{Execution times [sec]}} \\
          & &\textbf{Mean} &  \textbf{p50} & \textbf{p75} & \textbf{p90} & \textbf{p95} & \textbf{p99} & \textbf{Max} \\
         \hline

        \multirow{2}{*}{\texttt{Academic}}
         & $\textsc{LExaBan}$  & 4E-3 & 4E-4 &   
        1E-3 & 2E-3 &
        5E-3 & 0.14 & 4.99 \\

         & $\textsc{ExaBan}$  & 2.07 & 1E-3 &   
        2E-3 & 0.01 &
        0.20 & 164.5 & 563.5\\

        \hline

        \multirow{2}{*}{\texttt{IMDB}} 
        & $\textsc{LExaBan}$ & 7E-3 & 1E-3 & 2E-3 &
        4E-3 & 9E-3 & 0.05 & 
        16.26\\ 
        
        & $\textsc{ExaBan}$ & 1.58 & 2E-3 & 3E-3 &
        0.02 & 0.08 & 10.37 & 
        1793\\
        
        \hline

                \multirow{2}{*}{\texttt{TPC-H}}
        & $\textsc{LExaBan}$ & 0.04 & 7E-4 & 8E-4 &
        1E-3 & 0.53 & 0.65 & 
        0.65\\

        & $\textsc{ExaBan}$ & 4.23 & 0.89 & 0.93 &
        0.94 & 51.05 & 61.98 & 
        69.18\\ 
                
        \hline
    \end{tabular}%
    
    \label{tab:success_ExaBan}
\end{table}

\begin{table}
\caption{Runtime of \lexaban on instances for which \exaban failed. There was one such instance in Academic.}

    \centering
    \begin{tabular}{| c | c| c c c c c c c|}
        \hline
         \multirow{1}{*}{\textbf{Dataset}} &  \multirow{1}{*} {\makecell{\textbf{Success} \\ \textbf{rate}}} &
         \multicolumn{7}{c |}{\textbf{Execution times [sec]}} \\
          & &\textbf{Mean} &  \textbf{p50} & \textbf{p75} & \textbf{p90} & \textbf{p95} & \textbf{p99} & \textbf{Max} \\
         \hline

        \multirow{1}{*}{\texttt{Academic}}
         & 100\%  & 3455 & - &   
        - & - &
        - & - & 3455 \\

        \hline

        \multirow{1}{*}{\texttt{IMDB}} 
        & 94.74\% & 48.97 & 4.73 & 22.88 &
        168.2 & 375.9 & 448.2 & 
        1219\\ 
        
        \hline
        
        \multirow{1}{*}{\texttt{TPC-H}} 
        & 21.43\% & 4.21 & 1.95 & 5.34 &
        7.38 & 8.05 & 8.60 & 
        8.73\\ 
        \hline

    \end{tabular}%
    
    \label{tab:failure_ExaBan} 
\end{table}

\subsection{Banzhaf Computation for Aggregate Queries}

Table~\ref{tab:TPCHAggregateTimes} reports the results for \lexaban on aggregate queries on TPC-H$_{agg}$. The performance is excellent for SUM and COUNT aggregates (under 1 msec for each instance). \lexaban is also fast for most instances with the MAX aggregate. \lexaban without optimizations succeeds in only $86.43\%$ of the instances, compared to $93.57\%$ success rate when using the optimizations. When the unoptimized version succeeds, it is still slightly slower than \lexaban, though both algorithms perform excellently (under $0.26$ sec).

Table \ref{tab:SumCountTimes_regular_dbs} shows the results for the synthetic and challenging IMDB$_{agg}$ and Academic$_{agg}$ datasets. We again observe good performance for SUM (and COUNT, for which the results are very similar and omitted). Specifically, \lexaban succeeds on all SUM query instances for Academic$_{agg}$ and $95.43 \%$ of query instances for IMDB$_{agg}$. It is clearly superior to the variant that uses \exaban for the individual Boolean formulas in the semimodule expressions. For $\maxagg$, there is no baseline to compare against and \lexaban is  successful in 78\% (61\%) of the instances for Academic$_{agg}$ (IMDB$_{agg}$). For both datasets, it finishes in less than $0.1$ sec for over 50\% of the lineage instances.
Figure~\ref{fig:Max_run_time_over_lineitem_facts} shows the average  runtime for all lineage instances of a representative hard query (TPC-H Query 5) when the size of the  \texttt{Lineitem} relation is varied. \lexaban needs about 1.6 sec on average even for the 3M facts in \texttt{Lineitem}, compared to over 10 min for a ``naive" version 
without optimizations.


\begin{table}
\caption{Success rate and runtime of \lexaban for the TPC-H$_{agg}$ aggregate queries. }
    \centering
    \begin{tabular}{|c| c | c c c c c c c |}
        \hline
          \multirow{1}{*}{\textbf{Aggregate}} & \multirow{1}{*} {\makecell{\textbf{Success} \\ \textbf{rate}}} &
         \multicolumn{7}{c |}{\textbf{Execution times [sec]}} \\
          & &\textbf{Mean} &  \textbf{p50} & \textbf{p75} & \textbf{p90} & \textbf{p95} & \textbf{p99} & \textbf{Max} \\
         \hline

        $\textsc{SUM}$ & 100\% & 2E-4 & 2E-6 &   
        2E-6 & 1E-3 &
        2E-3 & 2E-3 & 2E-3\\

        \hline

        $\textsc{COUNT}$ & 100\% & 2E-4 & 1E-6 & 2E-6 &
        1E-3 & 2E-3 & 2E-3 & 
        2E-3\\
        \hline

        $\textsc{MAX}$ & 93.57\% & 224.8 & 2E-3 & 3E-3 &
        5E-3 & 2990 & 3349 & 
        3517\\

        \hline

    \end{tabular}%
    \label{tab:TPCHAggregateTimes}
\end{table}




\begin{table}
\caption{Success rate and runtime of \lexaban and \exaban for queries with SUM aggregate (similar results for COUNT aggregate are not shown) and MAX aggregates.}

    \centering
    \begin{tabular}{| c | r | r| r r r r r r|}
        \hline
         \multirow{1}{*}{\textbf{Dataset}} &  
         \multirow{1}{*}{\textbf{Algorithm}} &
         \multirow{1}{*} {\makecell{\textbf{Success} \\ \textbf{rate}}} &
         \multicolumn{6}{c |}{\textbf{Execution times [sec]}} \\
         & 
         \textbf{\& Aggregate} & &\textbf{Mean} &  \textbf{p50} & \textbf{p90} & \textbf{p95} & \textbf{p99} & \textbf{Max} \\
         \hline

        \multirow{3}{*}{\shortstack{{\texttt{Academic}} \\ $_{agg}$}}

         & 
         \textsc{LExaBan}$_{sum}$ & 
         100\% & 40.1 & 0.08 &   
        0.48 & 0.58 &
        496 & 3561\\

         & 
         \textsc{ExaBan}$_{sum}$ & 
         97.70\% & 6.20 & 0.09 &    
        1.88 & 12.6 &
        105 & 379.4\\

         & 
         \textsc{LExaBan}$_{\max}$ & 
         78.26\% & 15.11 & 3E-3 &    
        37.04 & 104.5 &
        253.4 & 289.1\\
        \hline

        \multirow{3}{*}{\shortstack{{\texttt{IMDB}} \\ $_{agg}$}} 
        & \textsc{LExaBan}$_{sum}$ 
        & 95.43\%  & 37.54 & 0.06 &    
        9.77 & 71.56 &
        774 & 2450\\

        & \textsc{ExaBan}$_{sum}$ & 81.19\% & 62.7  & 0.27 & 
        65.8 & 117 &
        1767 & 3286\\

        & 
         \textsc{LExaBan}$_{\max}$ & 
         61.93\% & 52.25 & 0.09 &   
        23.79 & 242.6 &
        734.9 & 2407\\
        
        \hline
    \end{tabular}%
    
    \label{tab:SumCountTimes_regular_dbs}
\end{table}




\begin{figure}
    \centering
    \includegraphics[width=0.75\linewidth]{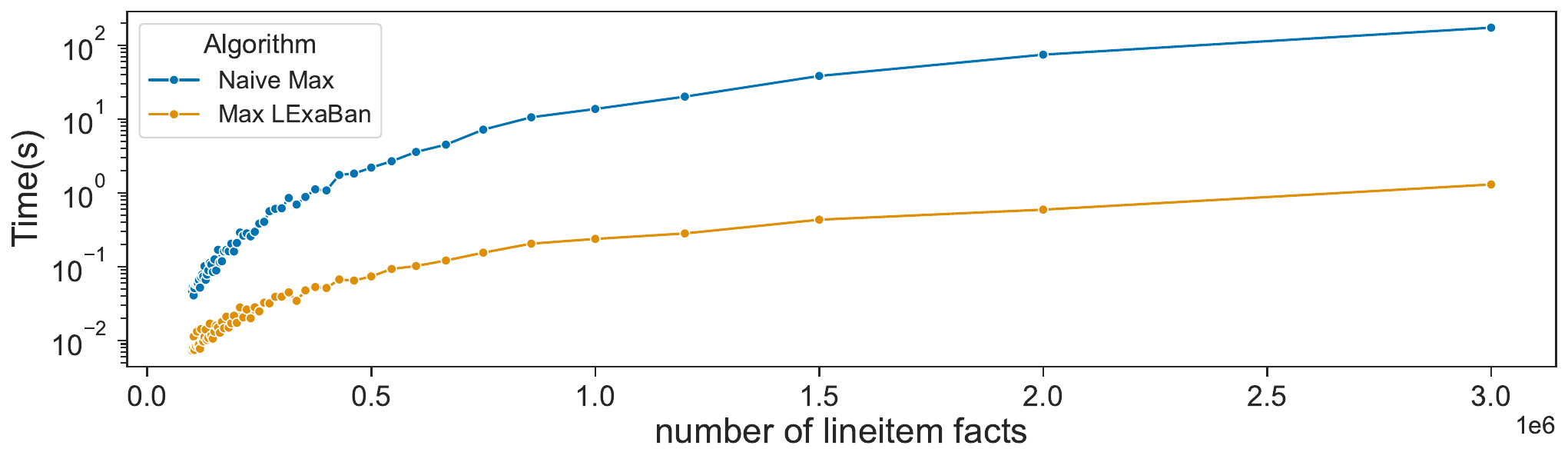}
    \caption{Runtime of \lexaban for a TPC-H query 5 with MAX aggregate, when varying the number of \texttt{Lineitem} facts.}
    \label{fig:Max_run_time_over_lineitem_facts}
\end{figure}


\subsection{Effect of Lifting and Gradient Techniques}

Figure~\ref{fig:effect-lifting} shows that variable lifting can speed-up lineage compilation by over one order of magnitude and reduce the d-tree size by over two orders of magnitude. These benefits increase as the lineage size grows. This experiment includes all Boolean queries that can be processed within 10 min.
Figure~\ref{fig: gradient_benefits} shows the success rate and runtime of \lexaban with and without the gradient technique for  Boolean queries lineages. The technique is beneficial for all instances; the benefit increases with the instance size. For instances with over 1600 variables, the average speedup is more than 2 orders of magnitude.
\begin{figure}[t]
     \centering
    \begin{subfigure}{0.48\linewidth}
     \includegraphics[width=\linewidth]{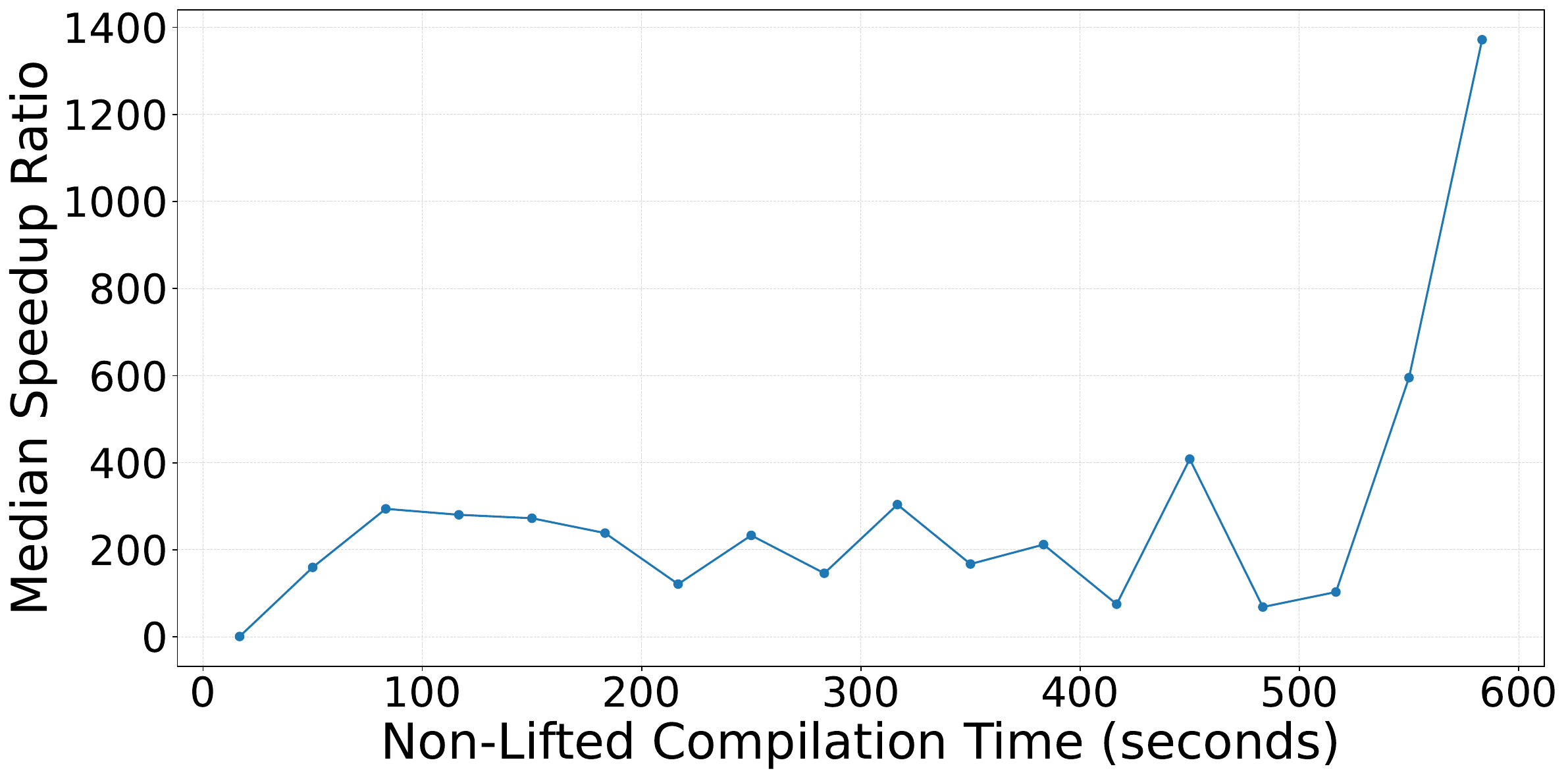}
     \caption{D-tree compilation time with and without variable lifting.}
     \label{fig: Lifting_time}
     \end{subfigure}
        \begin{subfigure}{0.48\linewidth}
     \includegraphics[width=\linewidth]
     {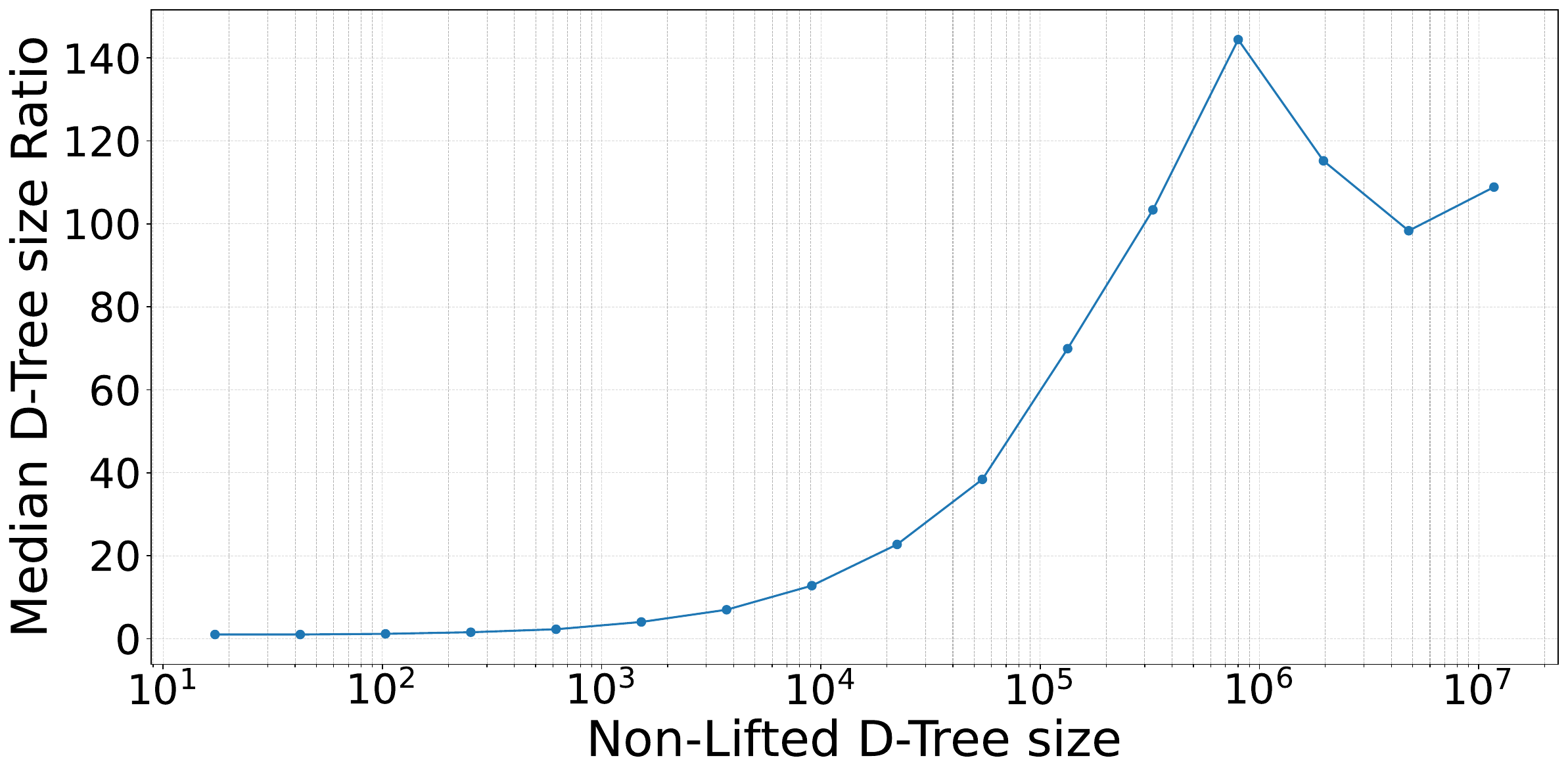}
     \caption{D-tree size with and without variable lifting.}
     \label{fig: Lifting_size}
        \end{subfigure}

        \caption{Effect of variable lifting on \lexaban's compilation of lineage of Boolean queries into d-trees.}
        \label{fig:effect-lifting}
\end{figure}

\begin{figure}[t]
     \centering
     \includegraphics[width=0.48\linewidth]{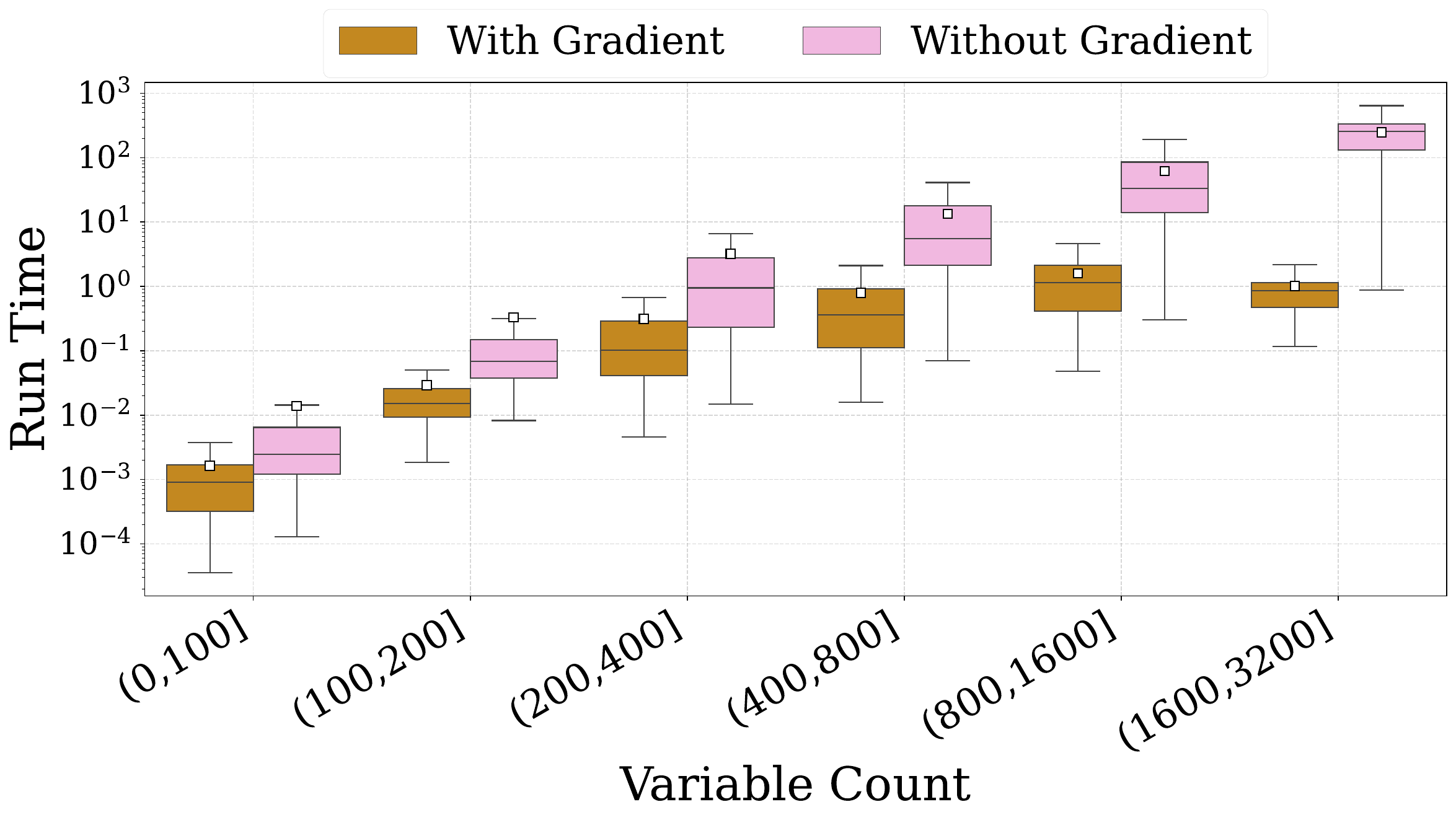}
          \includegraphics[width=0.48\linewidth]{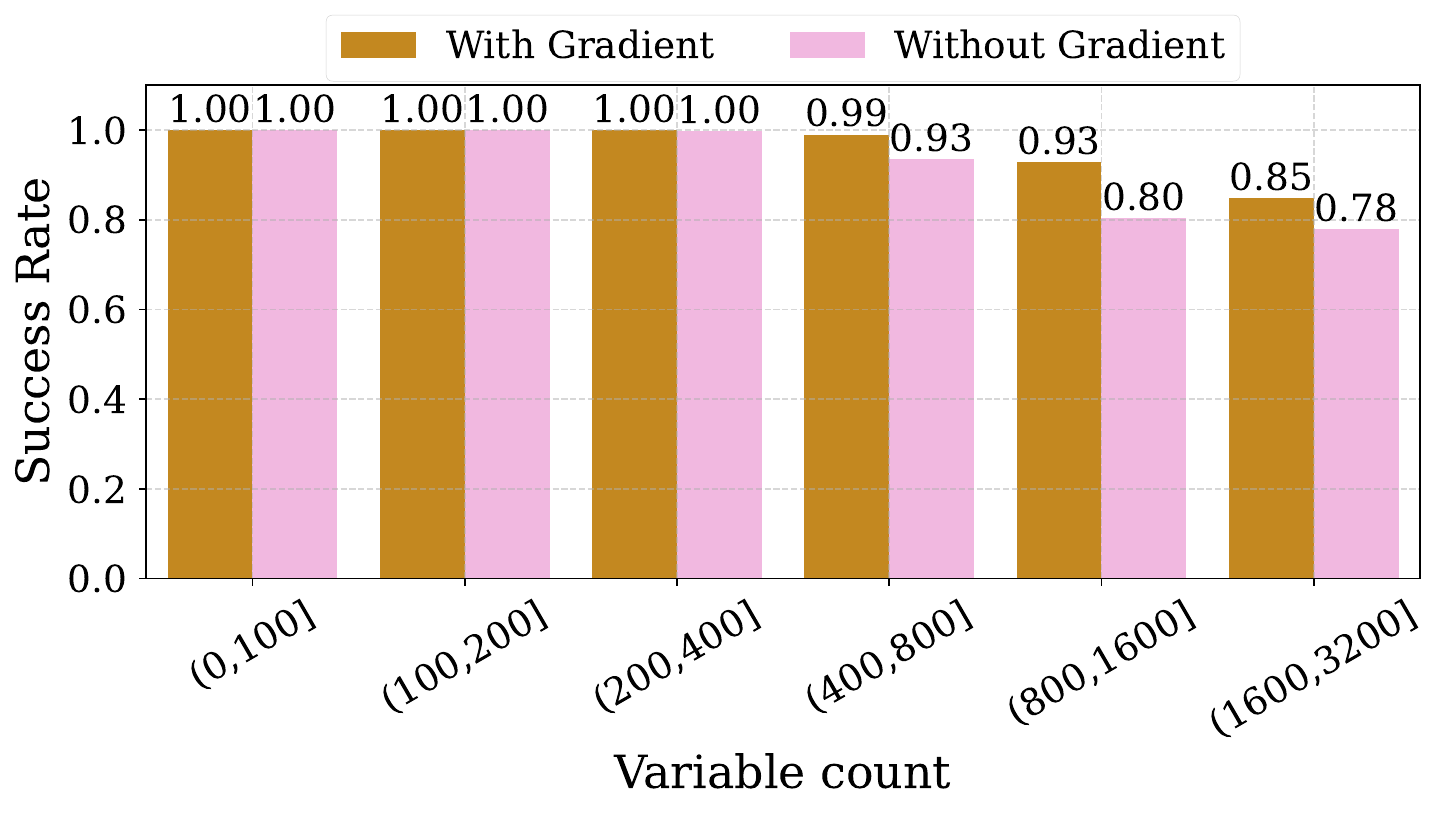}

     \caption{Effect of the gradient technique on the success rate and runtime of \lexaban for lineage of Boolean queries.}
     \label{fig: gradient_benefits}

\end{figure}

\nop{One might expect a speedup of at most $n$ for instances with $n$ variables, since the gradient technique allows to compute the Banzhaf value simultaneously for all $n$ variables. However, not all branches of the d-tree participate in the Banzhaf computation for each variable, and thus there is potential for significant computation sharing across variables.
}

\subsection{Shapley Computation}

Table \ref{tab:successrate} shows that our extension of \lexaban to compute the Shapley value achieves a higher query and lineage success rates than the state-of-the-art \exashap ~\cite{Sig24:BanzhafValuesForFactsInQueryAnswering}.
Table~\ref{tab:Shapley_ExaShap_success} shows the runtimes of \lexashap and \exashap on instances where \exashap succeeds. Table~\ref{tab:Shapley_ExaShap_failure} shows \lexashap's success rates and runtimes on instances where \exashap fails. \lexashap's speedup over \exashap is over an order of magnitude on average. \lexashap also has a much higher success rate: 79\% for the IMDB instances on which \exashap fails, with a median time of 13.04 sec; and 41\% for the TPC-H instances on which \exashap fails, with a median runtime of 23.94 sec, yielding speedups of at least 276X and 150X, respectively.

\begin{table}
\caption{Runtimes for \lexashap and \exashap on instances for which \exashap succeeds.}
     
    \centering
    \begin{tabular}{| c | r| r r r r r r r|}
        \hline
         \multirow{1}{*}{\textbf{Dataset}} &  \multirow{1}{*}{\textbf{Algorithm}} &
         \multicolumn{7}{c |}{\textbf{Execution times [sec]}} \\
          & &\textbf{Mean} &  \textbf{p50} & \textbf{p75} & \textbf{p90} & \textbf{p95} & \textbf{p99} & \textbf{Max} \\
         \hline

        \multirow{2}{*}{\texttt{Academic}}
         & $\textsc{LExaShap}$ & 0.02 & 2E-3 &   
        4E-3 & 9E-3 &
        0.01 & 0.05 & 48.40\\

         & $\textsc{ExaShap}$ & 0.45 & 1E-3 &   
        3E-3 & 0.04 &
        0.22 & 0.37 & 1397\\

        \hline

        \multirow{2}{*}{\texttt{IMDB}} 
        & $\textsc{LExaShap}$ & 0.07 & 6E-3 & 0.02 &
        0.05 & 0.10 & 0.57 & 
        520.3\\

        & $\textsc{ExaShap}$ & 3.35 & 3E-3 &   
        0.01 & 0.12 &
        0.79 & 44.00 & 3574\\  
        
        \hline

        \multirow{2}{*}{\texttt{TPCH}} 
        & $\textsc{LExaShap}$ & 2E-3 & 2E-3 &
        4E-3 & 5E-3 & 5E-3 & 
        5E-3 & 5E-3\\

        & $\textsc{ExaShap}$ & 7E-3 & 3E-3 &   
        3E-3 & 5E-3 &
        9E-3 & 0.02 & 0.56\\  
        
        \hline
    \end{tabular}%
    
    \label{tab:Shapley_ExaShap_success}   
\end{table}

\begin{table}
\caption{Runtimes for \lexashap on instances for which \exashap fails.}

    \label{tab:Shapley_ExaShap_failure}   
     
    \centering
    \begin{tabular}{| c | c | c| c c c c c c|}
        \hline
         \multirow{1}{*}{\textbf{Dataset}} &  
         \multirow{1}{*} {\makecell{\textbf{Success} \\ \textbf{rate}}} &
         \multicolumn{7}{c |}{\textbf{Execution times [sec]}} \\
         & &\textbf{Mean} &  \textbf{p50} & \textbf{p75} & \textbf{p90} & \textbf{p95} & \textbf{p99} & \textbf{Max} \\
         \hline

        \multirow{1}{*}{\texttt{IMDB}} 
        & 79.13\% & 61.54 & 13.04 & 59.41 &
        171.8 & 295.7 & 689.3 & 
        2835\\

        \multirow{1}{*}{\texttt{TPCH}} 
        & 41.68\% & 22.90 & 23.94 & 26.82 &
        27.89 & 29.02 & 29.92 & 
        30.14\\

        \hline
    \end{tabular}%
\end{table}

\subsection{Memory Consumption}

Table~\ref{tab:memory} shows the memory consumption\footnote{This is measured as peak resident memory, sampled throughout execution using a background monitoring thread that uses the \texttt{psutil} Python library.} of \lexaban, \exaban, \lexashap, and \exashap on representative instances for which \exaban succeeds, with d-tree sizes spanning different percentiles of the overall distribution. For 99\% of these instances, \lexaban uses less than 15MB and up to 37\% less memory than \exaban. For the largest instance solved by \exaban, \exaban uses over 7.3GB, while \lexaban uses only 109MB, 69x less. \nop{On harder instances where \exaban fails to terminate, \lexaban still succeeds, with memory usage reaching up to 120GB.} \lexashap incurs an overhead that is independent of the problem size and due to the \texttt{scipy.signal} library that handles the gradient (not needed by the other algorithms). For the largest instance solved by \exaban, \lexashap uses 268MB, while \exashap fails.

\begin{table}
\caption{Memory usage for various percentiles of d-tree sizes on instances (all datasets) for which \exaban succeeds.}
 
\centering
\begin{tabular}{| c | r r r r r r|}
    \hline
    \multirow{2}{*}{\textbf{Algorithm}} & \multicolumn{6}{c|}{\textbf{Memory consumption (MB)}} \\
     & \textbf{p25} & \textbf{p50} & \textbf{p75} & \textbf{p90} & \textbf{p99} & \textbf{Max} \\
    \hline
    $\textsc{LExaBan}$ & 11.65 & 11.71 & 11.77 & 11.90 & 13.31 & 109.67 \\
    $\textsc{ExaBan}$ & 18.54 & 18.71 & 18.82 & 19.06 & 21.13 & 7516.16 \\
    \cdashline{1-7}
    $\textsc{LExaShap}$ & 74.85 & 75.05 & 75.25 & 75.70 & 77.20 & 268.10 \\
    $\textsc{ExaShap}$ & 18.52 & 18.62 & 18.87 & 18.94 & 21.45 & - \\
    \hline

\end{tabular}
\label{tab:memory}
\end{table}

\subsection{Effect of Lineage Structure}

Figure~\ref{fig:success_rate_exact_clauses_ratio} shows the effect of the number of variables and clauses in the lineage on the performance of \lexaban and \exaban. 
\lexaban's success rate is consistently higher and its runtime increases slower than for \exaban.
\nop{Lifting helps \lexaban compile the lineage faster and into a smaller d-tree, while the gradient helps \lexaban share the computation among many variables.}Figure \ref{fig: parameter_test}  presents the effect of two measurements of lineage connectivity: (1) h-index \cite{hindex_pnas}, namely the maximal $k$
so that at least $k$ variables occur at least $k$ times in the lineage; (2) 
maximal connected component size of a graph whose nodes are variables and edges reflect variable co-occurrence in a clause. All algorithms slow down as connectivity grows (fluctuations are due to small sample sizes for some values of these measurements), yet \lexaban and \lexashap handle highly interconnected lineages much better than \exaban and \exashap. \exaban and \exashap do not terminate for lineages with h-index greater than 53, whereas \lexaban and \lexashap can handle lineages with h-index of 120.

 \begin{figure}[]
     \centering
     \begin{subfigure}{\linewidth}
     \includegraphics[width=0.5\linewidth]{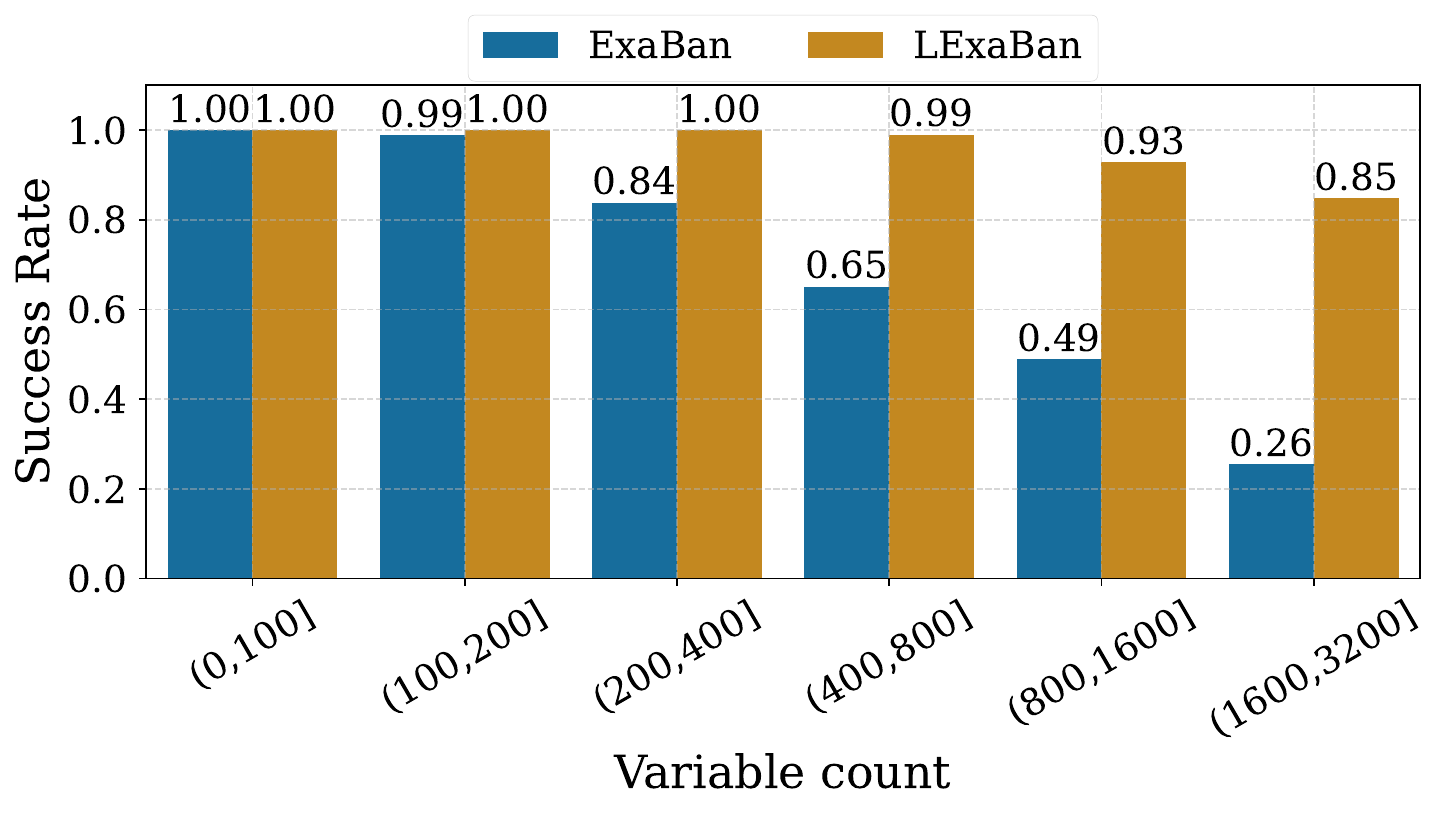}%
     \includegraphics[width=0.5\linewidth]{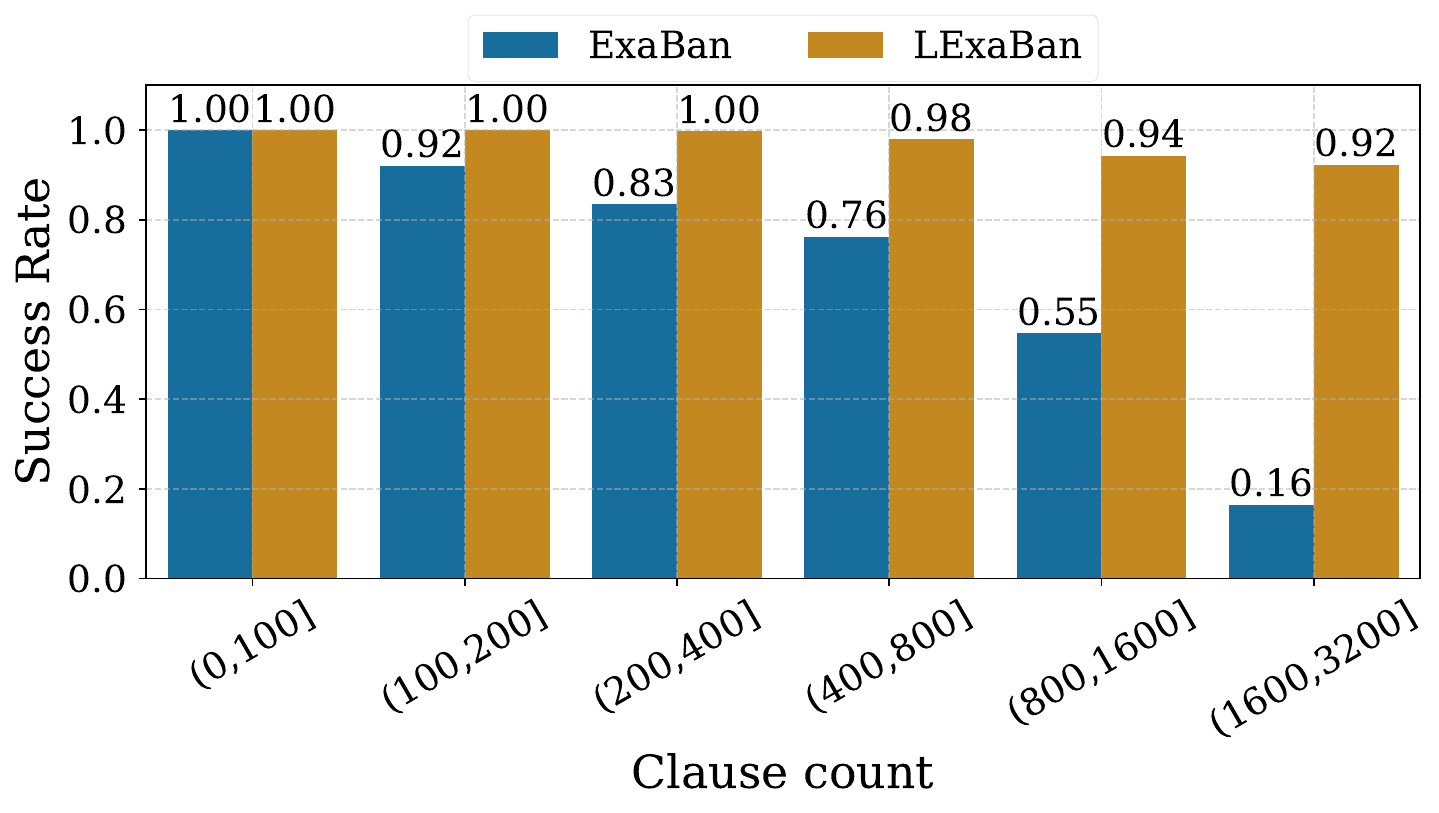}
     \caption{Success rate (average over all instances in each group)}
     \label{fig:success_rate_exact_tuples}
     \end{subfigure}
     \hfill
     \begin{subfigure}{\linewidth}
         \includegraphics[width=0.49\linewidth]{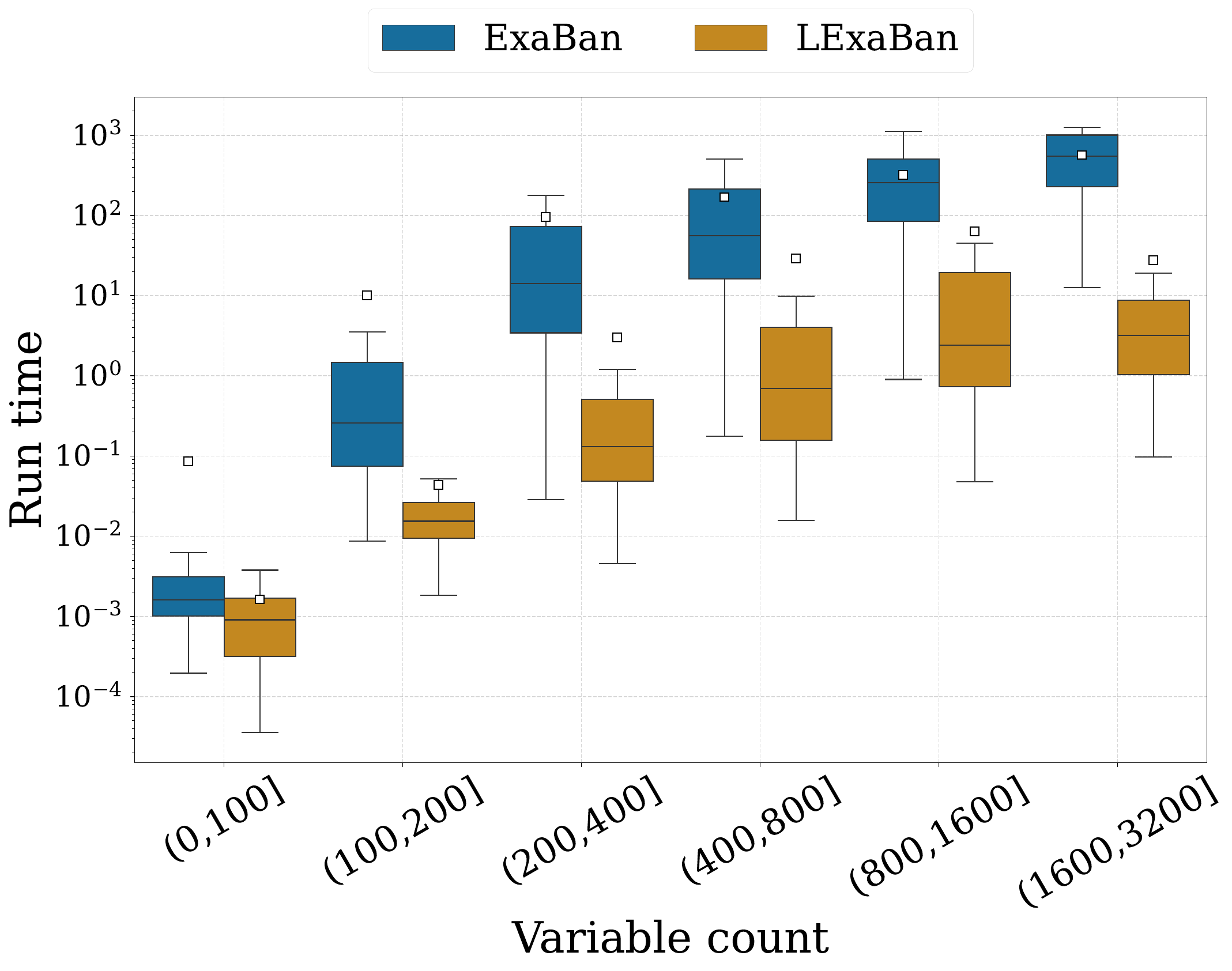}
         \includegraphics[width=0.49\linewidth]{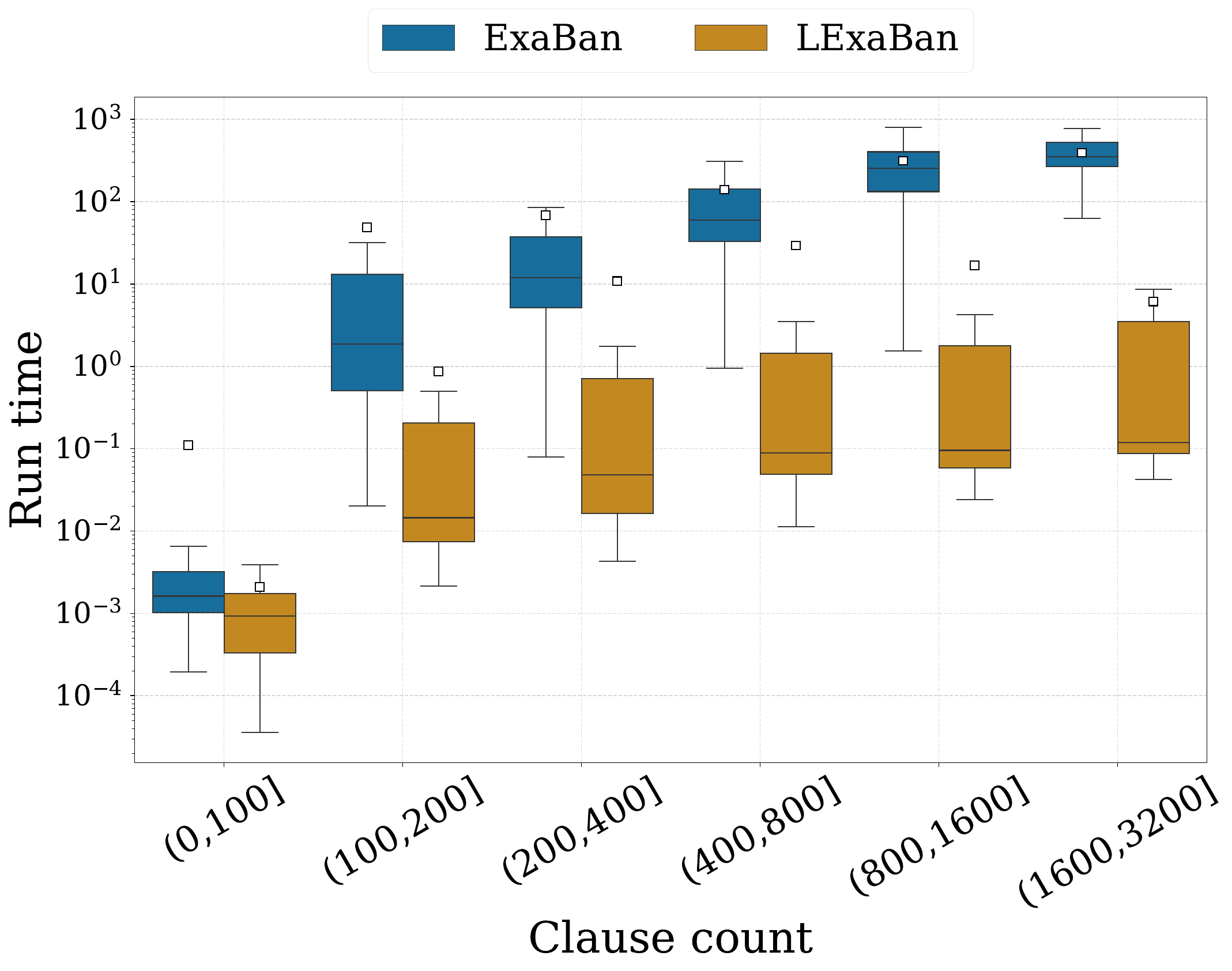}
     \caption{Runtime (ranges over all instances in each group)}
     \label{fig:run_time_exact_tuples}
     \end{subfigure}
     \caption{Performance of \lexaban and \exaban for all lineages, grouped by number of variables/clauses. $[i,j]$ represents lineages with \#vars (\# clauses) between $i$ and $j$.}

     \label{fig:success_rate_exact_clauses_ratio}
 \end{figure}

\subsection{Breakdown of Total Runtime}
We conclude with a breakdown of the runtime, including the time to compute the output tuples (query execution), the lineage via ProvSQL, and the Banzhaf values for all datasets (Figure~\ref{fig:runtime_breakdown_queries}). As queries may produce many output tuples and ProvSQL outputs lineage for each of them, we break down \lexaban runtime by percentiles of the execution time w.r.t. output tuples. \lexaban usually incurs a small overhead: for 81\% (85\%) of queries, attribution for all input tuples w.r.t any output tuple is computed faster than query execution (resp. query execution and lineage computation, combined). For 7\% of queries, attribution for at least one output tuple is 2 OOM slower than query execution. Overall, the total explanation time (lineage + Banzhaf) exceeds the query execution time in only 27\% of queries, and exceeds it by more than 10x in 13\% of queries. For \lexashap (Figure~\ref{fig:runtime_breakdown_queries_shap}), the overhead is typically higher: only 47\% (51\%) of queries complete attribution for all output tuples faster than query execution (resp. query execution + lineage). In 19\% of queries, at least one output tuple incurs a 2 OOM slowdown. Overall, explanation time exceeds query execution in 57\% of queries and exceeds it by over 10x in 25\%.

\begin{figure}
     \centering
     \includegraphics[width=0.48\linewidth]{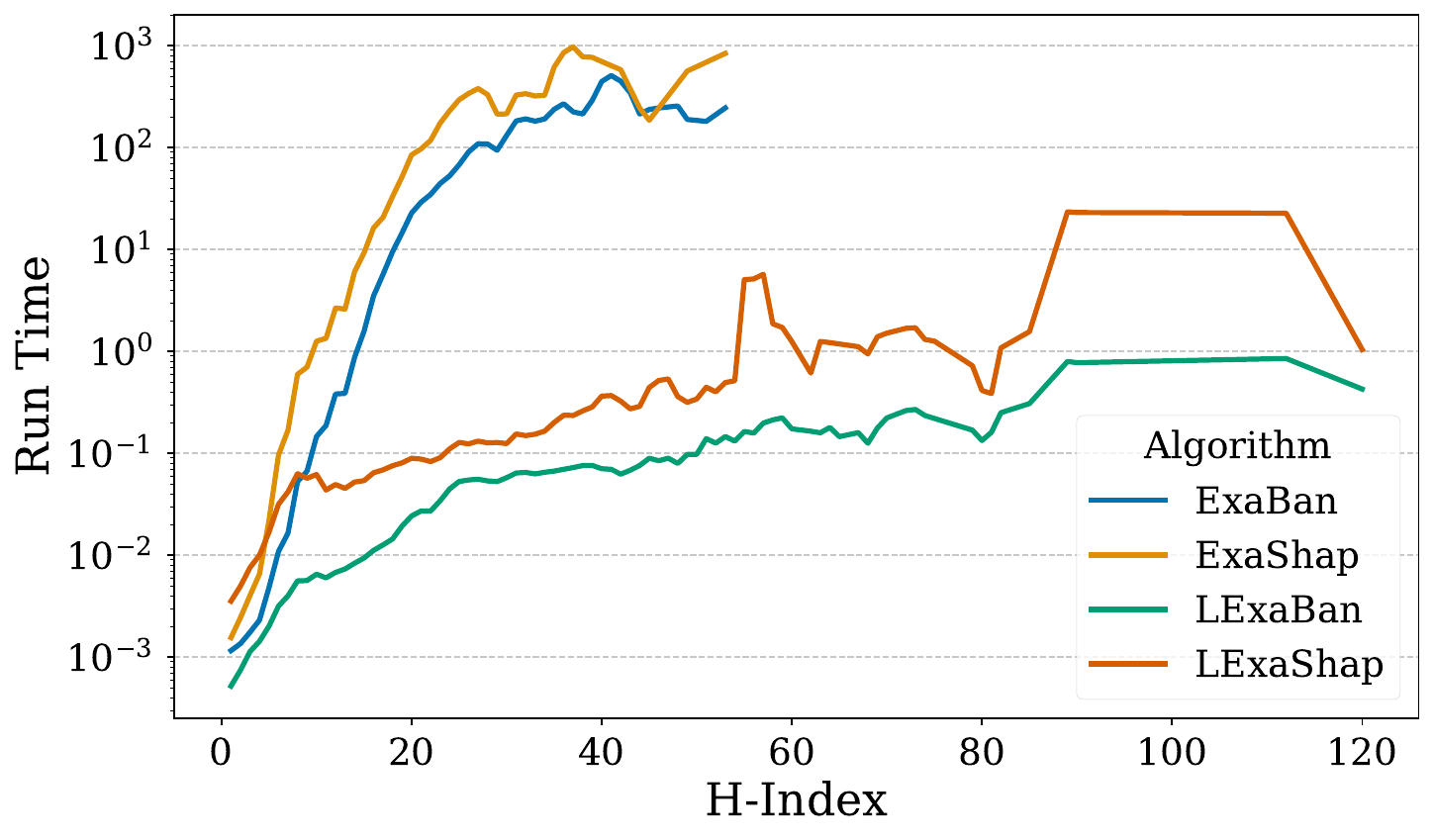}
          \includegraphics[width=0.48\linewidth]{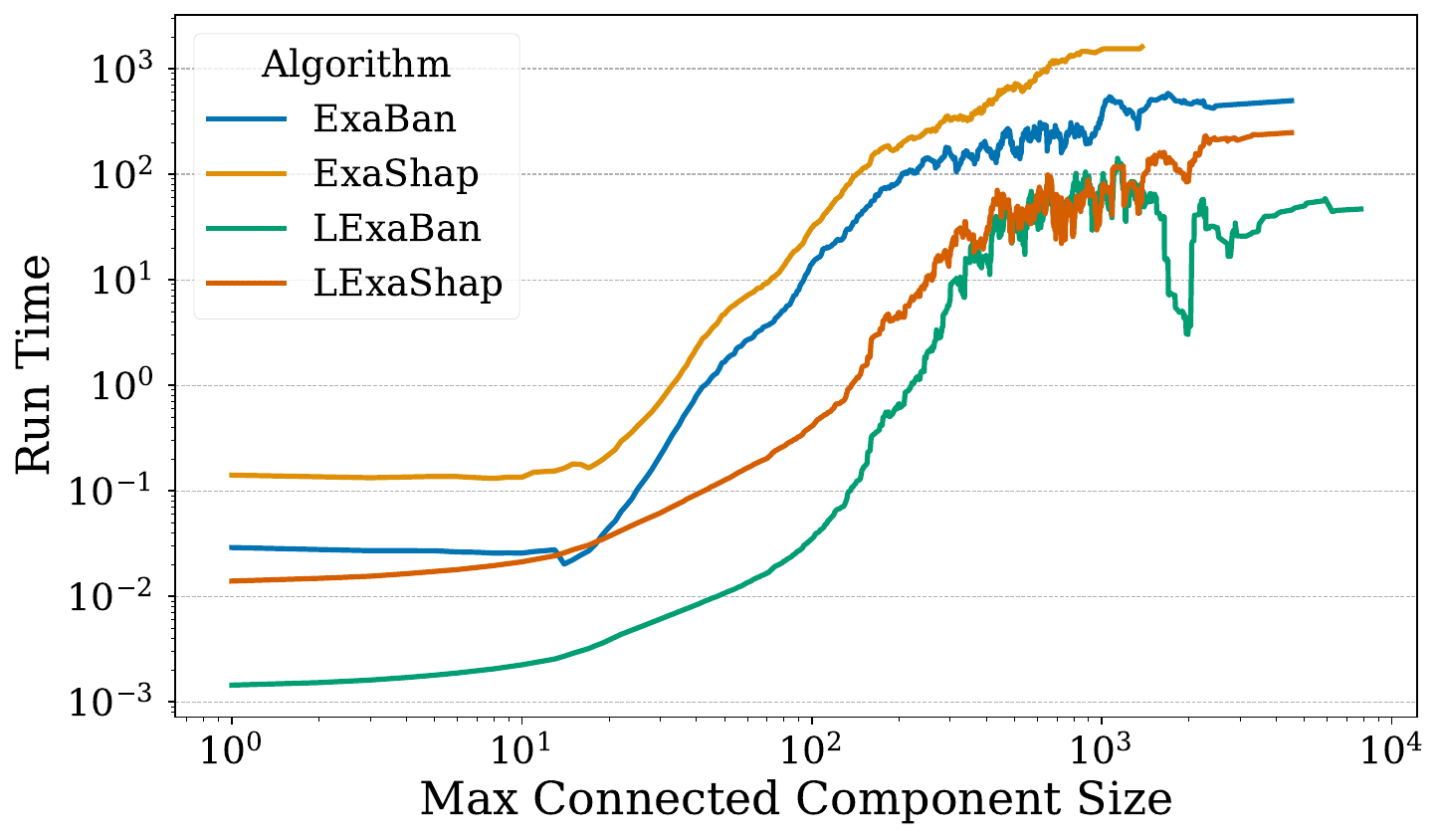}
     \caption{Effect of the H-index and maximum component size of the lineage on the algorithms' execution time. Both metrics were smoothed using a moving average, with window sizes 3 (H-index) and 25 (component size).}
     \label{fig: parameter_test}

\end{figure}


\begin{figure}
    \centering
    \includegraphics[width=\linewidth]{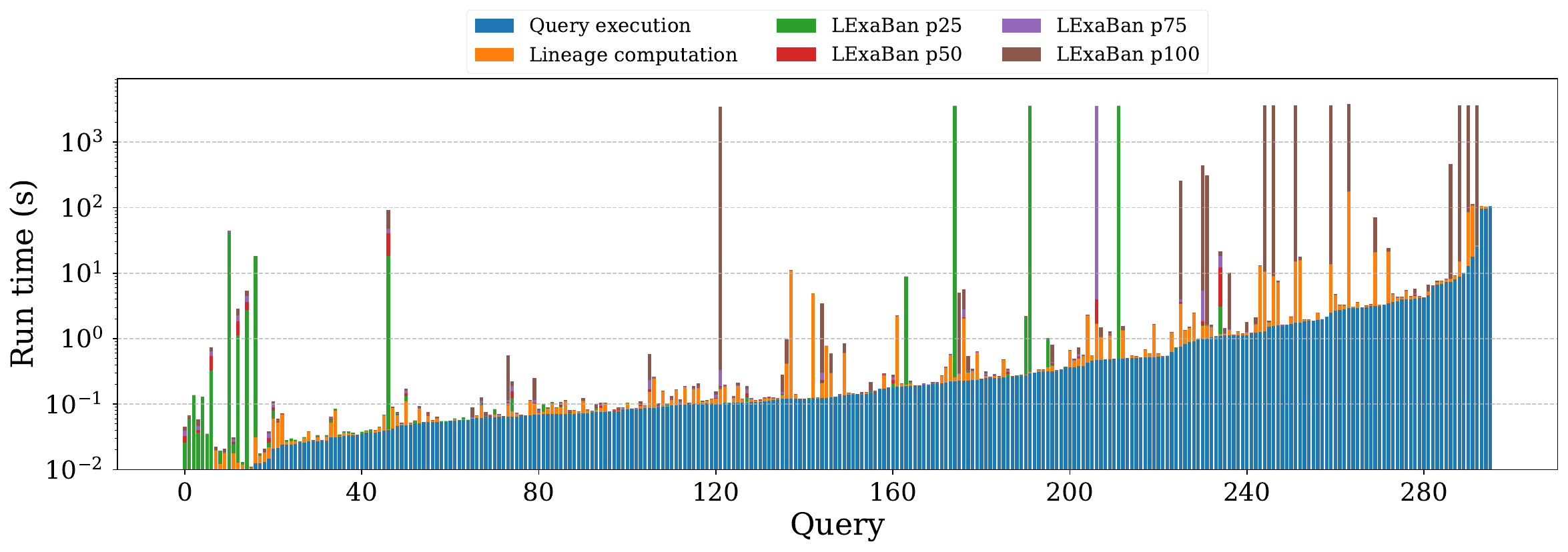}
    \caption{Breakdown of runtime for all queries sorted by the time to compute the output tuples (query execution time) using \lexaban.}
    \label{fig:runtime_breakdown_queries}
    
    \vspace{1em}

    \includegraphics[width=\linewidth]{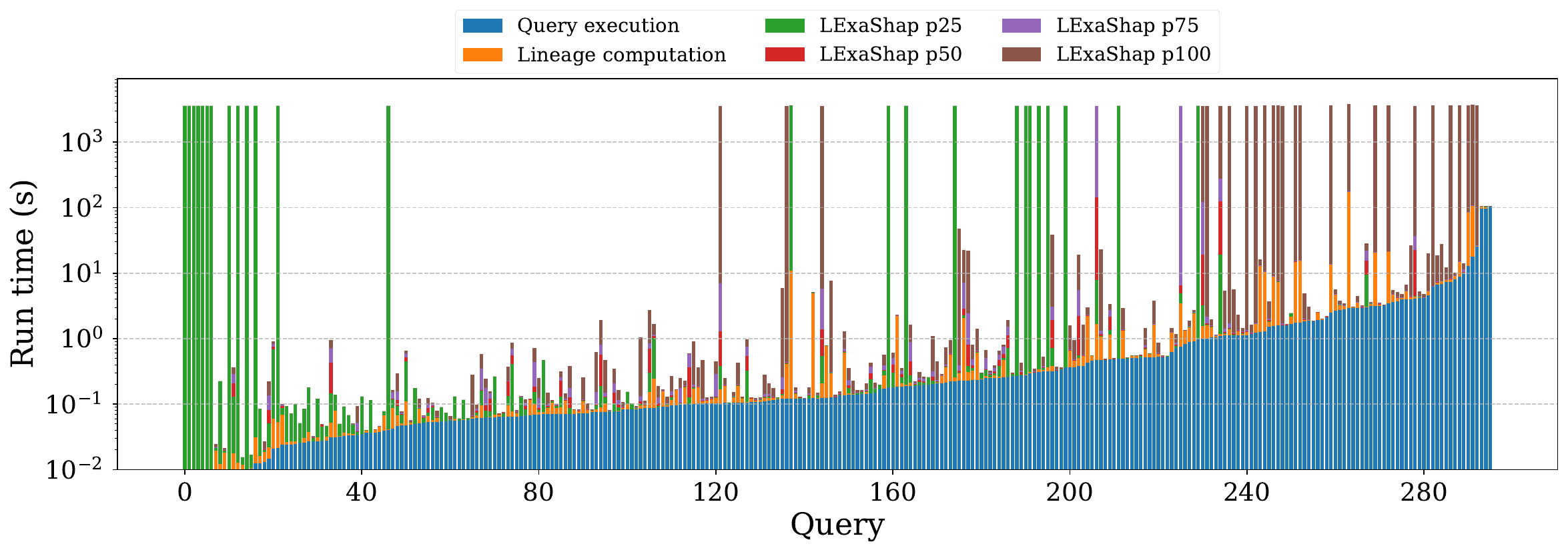}
    \caption{Breakdown of runtime for all queries sorted by the time to compute the output tuples (query execution time) using \lexashap.}
    \label{fig:runtime_breakdown_queries_shap}
\end{figure}
\section{Related Work}

\nop{Banzhaf and Shapley values were originally introduced in the context of game theory to quantify the contribution of individual players to the value of a cooperative game \cite{shapley1953value,Banzhaf:1965}.}

Recent work laid the theoretical foundation of attribution via Banzhaf and Shapley values in query answering~\cite{TheShapleyValueofTuplesinQueryAnswering,ComputingTheShapleyValueOfFactsInQueryAnswering, TheimpactofnegationonthecomplexityoftheShapleyvalueinconjunctivequeries,Sig24:ExpectedShapleyLikeScores,ShapleyValueInDataManagement,FromShapleyValueToModelCountingAndBack}. All prior practical approaches to computing such values rely on compiling the lineage into tractable circuits such as d-DNNFs~\cite{ComputingTheShapleyValueOfFactsInQueryAnswering} or d-trees~\cite{Sig24:BanzhafValuesForFactsInQueryAnswering}. In this paper, we follow the latter approach and non-trivially extend it along two dimensions.
First, we accommodate aggregate queries. This is challenging for two reasons: (1) the lineage of aggregate queries has a more complex form than that of SPJU queries studied in all prior work, and (2) attribution computation requires new machinery, accounting for the interaction between aggregate functions and Shapley/Banzhaf values. Ours is the first practical approach for aggregate queries: \cite{TheShapleyValueofTuplesinQueryAnswering} includes a theoretical investigation of aggregate queries but no practically efficient algorithm, whereas \cite{Sig24:BanzhafValuesForFactsInQueryAnswering,ComputingTheShapleyValueOfFactsInQueryAnswering} do not support aggregates.

Second, our approach consistently and significantly outperforms the state-of-the-art, by up to 3 (2) orders of magnitude for Banzhaf (Shapley). This speedup enables the computation of attribution for instances that prior work cannot handle (over 90\% of these difficult instances for Banzhaf and over 75\% for Shapley). The novel techniques of {\em lifting} and {\em gradient-based computation} introduced in this paper are the reason for this speedup and enable the processing of expensive SPJU  queries well beyond the reach of prior work, and of aggregate queries. In particular, lifting improves compilation time by over 2 orders of magnitude; gradient-based computation yields speedups across all instance sizes, with improvements exceeding 2 orders of magnitude for the largest instances.

Beyond Shapley and Banzhaf, further notions have been used to quantify fact contribution in query answering: causality \cite{VLDB2014:Causality_and_explanations_in_databases}, responsibility \cite{Thecomplexityofcausalityandresponsibilityforqueryanswersandnon-answers}, and counterfactuals \cite{BringingProvenancetoItsFullPotentialUsingCausalReasoning}. The SHAP score \cite{lundberg2017unified} leverages and adapts the definition of Shapley values to explain model predictions. Extending our techniques to further forms of attribution is an intriguing challenge for future work.

\section{Conclusions}
We have introduced a novel approach for computing attribution for query answering.
Our approach is more general, as it supports aggregate queries, and faster by several orders of magnitude than the state of the art. Future work includes extending our approach to broader query classes, e.g., queries with AVG aggregation and negation, and further attribution measures, e.g., the SHAP score.

\begin{acks}
    Ahmet Kara and Dan Olteanu would like to acknowledge the UZH Global Strategy and Partnerships Funding Scheme. The work of  Omer Abramovich and Daniel Deutch was partially funded by the European Research Council (ERC) under the European Union's Horizon 2020 research and innovation programme (Grant agreement No. 804302). 
\end{acks}

\bibliographystyle{ACM-Reference-Format}
\bibliography{bibliography}
 \pagebreak

 \appendix
\section{Missing Details in Section \ref{sec: Algorithm}}

We provide a full proposition for the Shapley analogue for Proposition \ref{Prop: Banzhaf_is_Gradient}. 

Recall the definition of the $k$-probability function. For a given d-tree $T$ and an integer $k\in \mathbb{N}$
 \begin{align*}
    Pr_k[T] =& \sum_{S\subseteq vars(T), \mid S \mid = k}\sum_{S'\subseteq S} T[S']\\&\cdot \prod_{y\in S'}p_y \prod_{z\in S\setminus S'}(\frac{1}{2}-p_z) \prod_{w\in vars(T)\setminus S} \frac{1}{2} 
\end{align*}

\begin{proposition}
\label{Prop: Shapley_is_Gradient}
Given a d-tree $T$ and a variable $x\in \vars(T)$, the following holds:
\begin{align*}
    \shap(T, x) =& 2^{|\vars(T)| - 1} \cdot \sum_{k \in [|\vars(T)|]} C_{k-1} \cdot \left(\frac{\partial Pr_k[T]}{\partial p_x}(\vec{\frac{1}{2}})\right)
\end{align*} 
 where $C_k = \frac{|k!|\cdot \mid \mid \vars(T) \mid - k - 1|!}{\mid \vars(T) \mid!}$ is the shapely coefficient.
\end{proposition}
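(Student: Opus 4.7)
The plan is to mirror the proof of Proposition~\ref{Prop: Banzhaf_is_Gradient}, exploiting the definition of $Pr_k[T]$ to extract, for each fixed size $k$, exactly the contribution $T[Y \cup \{x\}] - T[Y]$ summed over $Y \subseteq \vars(T) \setminus \{x\}$ with $|Y| = k-1$. Multiplying by the Shapley coefficient $C_{k-1}$ and summing over $k$ will then reproduce the Shapley sum exactly.

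First, I would expand $\partial Pr_k[T] / \partial p_x$ at the point $\vec{1/2}$. Recall the definition
\begin{align*}
Pr_k[T] = \sum_{\substack{S\subseteq \vars(T)\\|S|=k}}\sum_{S'\subseteq S} T[S']\cdot \prod_{y\in S'}p_y \prod_{z\in S\setminus S'}\Bigl(\tfrac{1}{2}-p_z\Bigr) \prod_{w\in \vars(T)\setminus S}\tfrac{1}{2}.
\end{align*}
Differentiating term by term: if $x \notin S$, the term is independent of $p_x$ and contributes $0$. If $x \in S$, then $x$ lies either in $S'$ (giving the derivative factor $1$ in place of $p_x$) or in $S\setminus S'$ (giving $-1$ in place of $(1/2-p_x)$). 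All other factors are unchanged. Evaluated at $\vec{1/2}$, any factor of the form $(1/2 - p_z)$ that survives forces the term to vanish, so only the terms with $S\setminus S' = \emptyset$ (i.e., $S' = S$) survive in the first case, and only those with $S\setminus S' = \{x\}$ (i.e., $S' = S\setminus\{x\}$) survive in the second.

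Collecting the surviving contributions, each remaining product of $p$-factors equals $(1/2)^{|\vars(T)|-1}$, so
\begin{align*}
\frac{\partial Pr_k[T]}{\partial p_x}\Bigl(\vec{\tfrac{1}{2}}\Bigr) = \Bigl(\tfrac{1}{2}\Bigr)^{|\vars(T)|-1} \hspace{-1em}\sum_{\substack{Y \subseteq \vars(T)\setminus\{x\}\\ |Y|=k-1}} \bigl(T[Y\cup\{x\}] - T[Y]\bigr),
\end{align*}
via the substitution $Y = S \setminus \{x\}$. Multiplying by $2^{|\vars(T)|-1}$ yields the inner sum without the $(1/2)^{|\vars(T)|-1}$ factor.

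The final step is to match with the definition of the Shapley value. Grouping the outer sum in the definition
\begin{align*}
\shap(T,x) = \hspace{-0.5em}\sum_{Y\subseteq \vars(T)\setminus\{x\}}\hspace{-0.5em} C_Y^{\vars(T)} \bigl(T[Y\cup\{x\}] - T[Y]\bigr)
\end{align*}
by $|Y| = k-1$, and noting that $C_Y^{\vars(T)}$ depends only on $|Y|$ so it equals $C_{k-1}$, gives
$\shap(T,x) = \sum_{k\in [|\vars(T)|]} C_{k-1} \sum_{|Y|=k-1}(T[Y\cup\{x\}] - T[Y])$, which combined with the previous display yields the claim.

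The only delicate step is the vanishing argument for the $(1/2-p_z)$ factors in the derivative evaluation; care must be taken that it applies uniformly across both subcases $x\in S'$ and $x\in S\setminus S'$, and that the count of surviving $(1/2)$-factors is exactly $|\vars(T)|-1$ in each case. Once this bookkeeping is done, the rest is a straightforward re-indexing.
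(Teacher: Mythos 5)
Your proposal is correct and follows essentially the same route as the paper's proof: differentiate $Pr_k[T]$ term by term, observe that at $\vec{\tfrac{1}{2}}$ every term retaining a $(\tfrac{1}{2}-p_z)$ factor vanishes so only $S'=S$ (with $x\in S'$) and $S'=S\setminus\{x\}$ survive, yielding $(\tfrac{1}{2})^{|\vars(T)|-1}\sum_{|Y|=k-1}(T[Y\cup\{x\}]-T[Y])$, and then reweight by $C_{k-1}$ and regroup the Shapley sum by $|Y|$. The bookkeeping you flag (the uniform count of $|\vars(T)|-1$ half-factors in both subcases) checks out exactly as in the paper.
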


\begin{proof}
Given a variable $x\in \vars(T)$, we separately consider for each valuation $S$ and subset $S'$ the cases 
where $x$ appears in $S'$, in $S\setminus S'$ or in $vars(T)\setminus S$, and obtain:
 
\begin{align*}
&Pr_k[T] = \sum_{S\subseteq vars(T), \mid S \mid = k, x\in S}\sum_{S'\subseteq S\setminus \{x\}} \big(T[S'](\frac{1}{2}-p_x) + T[S'\cup \{x\}]p_x\big)\\&\cdot \prod_{y\in S'}p_y \prod_{z\in S\setminus S'\cup \{x\}}(\frac{1}{2}-p_z) \prod_{w\in vars(T)\setminus S} \frac{1}{2}\\ & + \sum_{S\subseteq vars(T)\setminus \{x\}, \mid S \mid = k}\sum_{S'\subseteq S} T[S']\cdot \prod_{y\in S'}p_y \prod_{z\in S\setminus S'}(\frac{1}{2}-p_z) \prod_{w\in vars(T)\setminus S} \frac{1}{2}
\end{align*}

By taking the partial derivative of the above function 
with respect to $p_x$ we get:

\begin{align*}
    &\sum_{S\subseteq vars(T), \mid S \mid = k, x\in S}\sum_{S'\subseteq S\setminus \{x\}} \big(T[S'\cup \{x\}]- T[S']\big)\\&\cdot \prod_{y\in S'}p_y \prod_{z\in S\setminus S'\cup \{x\}}(\frac{1}{2}-p_z) \prod_{w\in vars(T)\setminus S} \frac{1}{2}
\end{align*}

Evaluating the partial derivative at the point $\vec{\frac{1}{2}}$ we obtain:
\begin{align*}
    &\sum_{S\subseteq vars(T), \mid S \mid = k, x\in S} \big(T[S]- T[S\setminus \{x\}]\big)\\&\cdot \prod_{y\in S}\frac{1}{2} \prod_{w\in vars(T)\setminus S} \frac{1}{2} \\=& (\frac{1}{2})^{\mid vars(T)\mid -1} \sum_{S\subseteq vars(T)\setminus{x}, \mid S \mid = k -1} \big(T[S\cup \{x\}]- T[S]\big)
\end{align*}

Where the first equality holds from taking only the subsets $S'$ without multiplications of $1-p_y$ since those would be equal to zero, and the second equality holds by reindexing the summation.
Finally, we can sum up the partial derivatives multiplied by the relevant constants and get:

\begin{align*}
    &\sum_{k \in [|\vars(T)|]} C_{k-1} \cdot \left(\frac{\partial Pr_k[T]}{\partial p_x}(\vec{\frac{1}{2}})\right) =
    (\frac{1}{2})^{\mid \vars(T)\mid -1}\\ \cdot& \sum_{k \in [|\vars(T)|]} C_{k-1}\cdot \sum_{S\subseteq vars(T)\setminus{x}, \mid S \mid = k -1} \big(T[S\cup \{x\}]- T[S]\big)\\
    &= \sum_{S\subseteq vars(T)\setminus{x}} C_{|S|} \big(T[S\cup \{x\}]- T[S]\big) =(\frac{1}{2})^{\mid \vars(T)\mid -1}\shap(T,x)\\
\end{align*}
\end{proof}









We provide here the analogous equations to Table \ref{tab:probability_and_gradients_for_gates} that are necessary for the computations of Shapley values.

\begin{table*}[h!]
\caption{Equations defining the probability and the partial derivatives for different gates. $T$ is the d-tree rooted at the gate and the $T_i$'s are its child sub-trees. Here, $T_{-T_i}$ represents the d-tree $T$ when excluding the sub-tree $T_i$, and $A_n^m = \frac{1}{2^n}\cdot\binom{n}{m}$ represents the portion of assignments of n variables of size $m$.}

    \centering
    \renewcommand{\arraystretch}{3}
    \begin{tabular}{|c|l|l|}
        \hline
        Gate & Probability Expression $Pr_k[T]$ & Partial Derivative \\ \hline
        $\oplus$ & $A^k_{\mid \vars(T)\mid} -\sum_{j_1,\dots,j_n, \sum j_1,\dots j_n = k} \prod_{i\in [n]} A^{j_i}_{\mid \vars(T_i)\mid} - Pr_{j_i}[T_i]$ 
        & $A^{k-j}_{\mid\vars(T_{-T_i})\mid} - Pr_{k-j}[T_{-T_i}]$
        \\ \hline
        $\odot$ & $\sum_{j_1\dots,j_n, \sum j_1\dots,j_n = k, \forall_i.j_i>0} \prod_{i\in [n]} Pr_{j_i}[T_i]$ 
        & $Pr_{k-j}[T_{-T_i}]$
        \\ \hline
        \multirow{3}{*}{$\sqcup_f$}  
         & \multirow{3}{*}{$\sum_{j\in[k]}Pr_j[T_1]\cdot Pr_{k-j}[f] + Pr_j[T_0]\cdot (A^{k-j} - Pr_{k-j}[f])$}    
        & $\frac{\partial Pr_k[T]}{\partial Pr_j[T_1]} = Pr_{k-j}[f]$ \\ 
        & & $\frac{\partial Pr_k[T]}{\partial Pr_j[T_0]} = (A^{k-j} - Pr_{k-k}[f])$ \\ 
        & & $\frac{\partial Pr_k[T]}{\partial Pr_j[f]} = Pr_{k-j}[T_1] - Pr_{k-j}[T_0]$ \\ \hline
    \end{tabular}
    \label{tab:Shapley_probability_and_gradients_gates}

\end{table*}

\nop{
Gate: $\odot$:
$$Pr_k[T] = \sum_{\sum j_1,j_2,...j_n = k, \forall_i.j_i>0} \prod_{i\in [n]} Pr_{j_i}[\varphi_i]$$

$$\frac{\partial Pr_k[T]}{\partial Pr_j[T_i]} = Pr_{k-j}[T_{-T_i}]$$
Where $T_{-T_i}$ represents the formula derived by the gate without the subformula $T_i$

Gate: $\oplus$:
$$Pr_k[T]= A^k_{\sum_{i\in [n]} \mid \vars(\varphi_i)\mid} -\sum_{\sum j_1,j_2,...j_n = k} \prod_{i\in [n]} A^{j_i}_{\mid \vars(\varphi_i)\mid} - Pr_{j_i}[\varphi_i]$$

Where $A^m_l = \frac{\binom{l}{m}}{2^l}$ is the portion of assignments of $l$ variables with a size of $m$. 

$$\frac{\partial Pr_k[T]}{\partial Pr_j[T_i]} = A^{k-j}_{\mid\vars(T_{-T_i})\mid} - Pr_{k-j}[T_{-T_i}]$$
Where $T_{-T_i}$ represents the formula derived by the gate without the subformula $T_i$. 

Gate: $\sqcup_f$:

$$Pr_k[T] = \sum_{j\in[k]}Pr_j[T_1]\cdot Pr_{k-j}[f] + Pr_j[T_0]\cdot (A^{k-j} - Pr_{k-k}[f])$$

$$\frac{\partial Pr_k[T]}{\partial Pr_j[T_1]} = Pr_{k-j}[f]$$

$$\frac{\partial Pr_k[T]}{\partial Pr_j[T_0]} = (A^{k-j} - Pr_{k-k}[f])$$

$$\frac{\partial Pr_k[T]}{\partial Pr_j[f]} = Pr_{k-j}[T_1] - Pr_{k-j}[T_0]$$
}
\end{document}